\documentclass[a4paper]{article}
\usepackage[bookmarks=true]{hyperref}

\usepackage{a4wide}
\usepackage{amsmath}
\usepackage{amsfonts}
\usepackage{amssymb}
\usepackage{amsthm}
\usepackage{bbm}
\usepackage{cleveref}
\usepackage[utf8]{inputenc}
\usepackage{color}
\usepackage{url}

\newtheorem{theorem}{Theorem}[section]
\newtheorem{definition}[theorem]{Definition}
\newtheorem{corollary}[theorem]{Corollary}
\newtheorem{lemma}[theorem]{Lemma}
\newtheorem{proposition}[theorem]{Proposition}

\newtheorem{example}[theorem]{Example}

\numberwithin{equation}{section}
\numberwithin{theorem}{section}

%%%%%%%%%  Shorthand commands:
\newcommand{\vep}{\varepsilon}
\newcommand{\defem}[1]{{\em #1\/}}

\newcommand{\Z}{{\mathbb Z}}

\newcommand{\R}{{\mathbb R}}
\newcommand{\C}{{\mathbb C\hspace{0.05 ex}}}
\newcommand{\N}{{\mathbb N}}

\newcommand{\norm}[1]{\Vert #1\Vert}
\newcommand{\mean}[1]{\langle #1\rangle}

\newcommand{\rmd}{{\rm d}}

\begin{document}

\title{Estimation of local microcanonical averages in two lattice mean-field models using coupling techniques}
\author{Kalle Koskinen\thanks{\emailkalle} , %
Jani Lukkarinen\thanks{\emailjani}\\[1em]%
$\,^*,\,^\dag$\UHaddress}
\date{\today}
\newcommand{\email}[1]{E-mail: \tt #1}
\newcommand{\emailjani}{\email{jani.lukkarinen@helsinki.fi}}
\newcommand{\emailkalle}{\email{kalle.koskinen@helsinki.fi}}
\newcommand{\UHaddress}{\em University of Helsinki, Department of Mathematics 
and Statistics\\
\em P.O. Box 68, FI-00014 Helsingin yliopisto, Finland}

\maketitle

\begin{center}
{\it Joel L.\ Lebowitz has been one of the driving forces and main supporters of 
mathematical statistical physics for over half a century.  It is a particular 
honour and a pleasure to dedicate  this, in relation humble, update on 
foundations of statistical mechanics to him.} 
\end{center}

\bigskip

\begin{abstract}
We consider an application of probabilistic coupling techniques which provides 
explicit estimates for comparison of local expectation values between label permutation invariant states, 
for instance, between certain microcanonical, canonical, and grand canonical ensemble expectations.
A particular goal is to obtain good bounds 
for how such errors will decay with increasing system size.
As explicit examples, we focus on two well-studied
mean-field models: the discrete model of a paramagnet and the mean-field spherical model of a continuum field,
both of which are related to the Curie--Weiss model.
The proof is based on a construction of suitable probabilistic couplings between the 
relevant states, using Wasserstein fluctuation distance to control 
the difference between the expectations in the thermodynamic limit.
\end{abstract}

\tableofcontents

\section{Introduction}

We consider a novel method of analysis of convergence
of local expectation values 
in probability distributions associated with microcanonical ensembles.  Our approach aims at answering the following question which would be natural, for example, 
to control expectations in states arising in ergodic theory: Assume that
the system is in a microcanonical state with one or two known fixed conserved quantities which are label permutation invariant.
Consider an observable which depends only on a few degrees of freedom of some finite but large system, for example, consider local correlation functions.  Assume that there is some other
permutation invariant probability distribution, such as the corresponding
grand canonical ensemble, in which the expectation of the observable can be computed, either via a simulation or analytically.  Can we estimate the error which arises from replacing the typically not computable microcanonical expectation with the second result?
Assuming that the two ensembles are thermodynamically equivalent, 
how fast does the error decrease with increasing system size?

From the perspective of uniform measures with constraints, we mainly focus on the related standard ensembles, i.e., microcanonical, canonical, and 
grand canonical ensembles, each with parameters associated with the 
thermodynamically relevant quantities.  For the sake of completeness, we will 
give a heuristic overview of the standard ensemble theory in 
Sec.~\ref{sec:introtoensembles}.  There we also introduce notations and 
terminology which will be used later for defining the ensemble measures of the 
two models.

In the above standard ensemble set-up, 
the thermodynamic equivalence of ensembles can often be studied via relative entropy methods.
In certain models, in particular, of discrete lattice fields, 
the relative entropy bounds can also
provide an answer to the question stated in the beginning via the 
Pinsker inequality.  However, relative entropy estimates are not always readily available, cannot be used between measures which are not absolutely continuous at least in one direction, 
and as we will show explicitly later, the estimates they provide might not be optimal.

The motivation to look for improvements of the well developed earlier methods comes from a recent result 
\cite{Lukk18} for the supercritical Berlin--Kac spherical model \cite{berlinkac1952}. This is a model with two thermodynamic quantities in a canonical ensemble 
where one of them becomes frustrated and forms a condensate.  This results in an \defem{nonequivalence} between the canonical and grand canonical ensembles.
However, it was shown in \cite{Lukk18} that, after separating the condensate modes,
the state of the remaining modes is well described by a 
grand canonical state.  Comparing canonical with this modified grand canonical ensemble
yields local expectations which converge to each other in the thermodynamic limit.
The result was proven using 
a suitably constructed coupling and relying on the translation invariance of the system, and the resulting estimates imply a power-law convergence in the system size of the errors between the two expectation values.

The coupling technique in \cite{Lukk18} is, however, quite specific to the Berlin--Kac model, and relies partially on the existence of the condensate.  Here, we explore the extension of these ideas to well-studied cases where equivalence and non-equivalence of various ensembles are known, and which are sufficiently simple to be fairly explicitly computable.
For the models chosen here, translation invariance is being replaced by label permutation invariance, and it will serve as an important tool to lift the fairly crude 
coupling estimates into convergence of various local expectations.
We also explore the idea of replacing the standard ensembles of some of these models
with other, more accurate but still easy to evaluate, measures.
The standard theory of ensemble equivalence will serve as a guide in this choice,
and also in most of the cases studied here, it will suffice on its own.

The first of the models is the simple paramagnet which one can find in \cite{latticeexample1994}. In working with this model, there will be a slight abuse of terminology. There is no associated Hamiltonian, but the magnetization is the 
``conserved quantity'' of this model. The corresponding canonical ensemble then has a parameter associated with the control of expectation of magnetization.
The paramagnet model is, however, closely related to the standard Curie--Weiss model since the ensemble expectations of the latter can be expressed as a convex combination of those of the former (this connection will be discussed also in Sec.~\ref{sec:reltoCW}).

The second model is a continuum modification of the Curie--Weiss model called 
the mean-field spherical model.  The model has been studied in \cite{Kastner2006} and 
it 
is a simplification of the Berlin--Kac model introduced in \cite{berlinkac1952}.
In \cite{Kastner2006}, the authors consider the thermodynamic properties of the 
microcanonical and canonical ensembles. In Sec.~\ref{sec:ContCW}, we explore the 
mean-field spherical model 
in a slightly generalized set-up, namely, by also considering the density of the 
system to be a free parameter.  This allows to study the properties of the 
grand canonical ensemble 
which is not explored in \cite{Kastner2006}.
 
For both of these models, we will give detailed proofs of explicit rates of 
convergence of finite marginal distributions and/or finite moments of all order 
between the ensembles of the models.  
The main result here is the development of novel methods which employ rigorous 
and well-understood analysis of the thermodynamic properties of the ensembles in 
order to prove a form of weak convergence of the probability measures 
corresponding to the different ensembles.

For the simple paramagnet, we will supply two distinct proofs with explicit errors for the convergence of local observables. The first proof will utilize relative entropy and, as such, will mainly reprove and collect known results.  The second proof will utilize a coupling argument. Due to the simple nature of this model, we can show explicitly that the error bounds given by the coupling method are strictly better than the bounds given by the relative entropy method that we used. For the mean-field spherical model, we will focus solely on application of the coupling methods to prove local convergence results. 

The main mathematical tools for the rigorous control of expectations in the 
various ensemble measures are couplings of the ensemble measures and the 
related 
Wasserstein distance between them, with suitably chosen ``cost functions.''
We give a brief review of couplings and Wasserstein metric in 
Sec.~\ref{sec:Wassersteinandcouplings}.  

A crucial property of the ensemble measures and of the couplings constructed 
here is their  invariance under permutations of the particle labels.
The permutation invariance improves the control of differences of expectations 
under
the ensemble measures, allowing to bound the error by the above Wasserstein 
distance.
The method is similar to how translation invariance has been used in 
\cite{Lukk18} for 
the supercritical Berlin--Kac spherical model, and it is
described in detail in Sec.~\ref{sec:errorest}.  
Another tool for such an estimation is 
the Laplace method of asymptotic analysis for such integrals.  The method and 
how it applies to the above error estimation is also discussed in 
Sec.~\ref{sec:errorest} and \ref{asymptotic1}. 

We postpone more detailed discussion about further related previous works, and 
how the present estimates connect to these, at the end of Introduction, to 
Sec.~\ref{sec:relatedworks}.

\subsection{Equilibrium ensembles with 
two thermodynamic quantities}\label{sec:introtoensembles}

To fix our conventions, let us record here briefly our definitions and parametrization of the standard ensembles.
For a review of further results and discussion about the thermodynamic
equivalence of ensembles, we refer to \cite{Touchette2015} and
for mathematical details also to \cite{LJP1995}.

In the following, $\mathcal{S}$ is some arbitrary state space with 
some fixed positive \defem{reference measure} $d \phi$.  The two thermodynamic 
observables, the ``conserved quantities'',
will be called the \defem{energy} $H : \mathcal{S} \to \mathbb{R}$ 
and the \defem{particle number} $N : \mathcal{S} \to \mathbb{R}$.  We use $V > 
0$ to 
represent the \defem{number of degrees of freedom} of the system and we focus on 
the properties of the system for large $V$.
It is typically related to the ``volume'' of the state space $\mathcal{S}$ in 
some way.

We represent the constraints using, at the moment somewhat formal, 
$\delta$-function notations; the rigorous meaning of the notations will be discussed 
later.  Let us stress that we do not take the commonly used 
thin-shell smoothing of these measures
since this would for our purposes unnecessarily complicate the analysis.
However, we then have to be careful in the choice of allowed parameter 
values in some of the ensembles below, to avoid instances where the normalization
factor is zero or otherwise ill-defined.

The \defem{microcanonical ensemble} with energy density $\varepsilon \in 
\mathbb{R}$ and particle density $\rho \in \mathbb{R}$ is then given by 
\begin{align*}
\mu_{\text{MC}}^{\varepsilon, \rho;V} (d \phi)  := \frac{1}{Z_{\text{MC}} 
(\varepsilon, \rho;V)} \delta(H[\phi] - \varepsilon V) \delta (N[\phi] - \rho V) 
\ d \phi.
\end{align*}
The \defem{canonical ensemble} with inverse temperature $\beta \in \mathbb{R}$ 
and particle density $\rho \in  \mathbb{R}$ is given by
\begin{align*}
\mu_{\text{C}}^{\beta, \rho;V} (d \phi) := \frac{1}{Z_{\text{C}} (\beta, 
\rho;V)} e^{- \beta H[\phi]} \delta (N[\phi] - \rho V) \ d \phi  .
\end{align*}
Finally, the analogously defined \defem{grand canonical ensemble} with inverse 
temperature $\beta \in \mathbb{R}$ and chemical potential $\mu \in \mathbb{R}$ 
is 
\begin{align*}
\mu_{\text{GC}}^{\beta, \mu;V} (d \phi) := \frac{1}{Z_{\text{GC}} (\beta, 
\mu;V)} e^{- \beta H[\phi] - \mu N[\phi]} \ d \phi .
\end{align*}
Let us remark that, for later convenience, we do not follow the standard physics 
conventions here using which our parameter ``$\mu$'' should have been replaced by ``$- \beta\mu$''.

With the above definitions, there are a number of immediate, explicit relations between some of the above ensembles.  In particular, we will need later the following two
observations which allow representing an ensemble as a
mixture of the more constrained ensemble:
\begin{align}\label{eq:CasMCmixture}
\mu_{\text{C}}^{\beta, \rho;V} (d \phi) = \frac{1}{\int d \varepsilon \ e^{-V 
\beta \varepsilon} Z_{\text{MC}} (\varepsilon, \rho;V)} \int d \varepsilon \ 
e^{-V \beta \varepsilon} Z_{\text{MC}} (\varepsilon, \rho;V)
\mu_{\text{MC}}^{\varepsilon, \rho;V} (d \phi)
\end{align}
and
\begin{align*}
\mu_{\text{GC}}^{\beta, \mu;V} (d \phi) = \frac{1}{\int d \varepsilon d \rho \ 
e^{- V (\beta \varepsilon + \mu \rho)} Z_{\text{MC}} (\varepsilon, 
\rho;V)} \int d \varepsilon d \rho \ e^{-  V (\beta \varepsilon + \mu 
\rho)} Z_{\text{MC}} (\varepsilon, \rho;V) \mu_{\text{MC}}^{\varepsilon, \rho;V} 
(d \phi) .
\end{align*}

Next, we define the \defem{specific microcanonical entropy} or microcanonical 
entropy per degrees of freedom by
\begin{align*}
s(\varepsilon, \rho;V) := \frac{1}{V} \ln Z_{\text{MC}}(\varepsilon, \rho;V). 
\end{align*}
We define
the \defem{specific canonical free energy} or canonical free energy per degrees 
of freedom by
\begin{align*}
f_{\text{C}}(\beta, \rho;V) := - \frac{1}{V} \ln Z_{\text{C}} (\beta, \rho;V) \, 
.
\end{align*}
Note that we do not divide here by $\beta$, as would be common for definition of 
a free energy: this would not be convenient for our models since also zero and 
negative values of $\beta$ may occur here.
Similarly, the \defem{specific grand canonical free energy} or grand canonical 
free energy per degrees of freedom is defined here by
\begin{align*}
f_{\text{GC}} (\beta, \mu;V) := - \frac{1}{V} \ln Z_{\text{GC}} (\beta, \mu;V) .
\end{align*}

Now, in order to see the relationship to Laplace-type integrals, we note that 
\begin{align*}
e^{- V \beta \varepsilon} Z_{\text{MC}}(\varepsilon, \rho;V) = e^{- V (\beta 
\varepsilon - s(\varepsilon, \rho; V))}\,, \quad
e^{- V  (\beta \varepsilon + \mu \rho)} Z_{\text{MC}}(\varepsilon, \rho;V) 
= e^{- V ( \beta \varepsilon + \mu \rho - s (\varepsilon, \rho;V))} .
\end{align*}
Assuming that the limits exist, we define
\begin{align*}
s (\varepsilon, \rho) := \lim_{V \to \infty} s (\varepsilon, \rho;V), \quad 
f_{\text{C}}(\beta, \rho) = \lim_{V \to \infty} f_{\text{C}}(\beta, \rho;V), \quad 
f_{\text{GC}} (\beta, \mu) := \lim_{V \to \infty} f_{\text{GC}} (\beta, \mu;V) .
\end{align*}
Then either Laplace-type integral estimates or large deviation techniques \cite{LJP1995}
can often be used to show that the limit functions are related by a 
Legendre transform:
\begin{align*}
f_{\text{C}} (\beta, \rho) = 
\inf_\varepsilon \{ \beta \varepsilon - s (\varepsilon, \rho)\} \,, \quad
f_{\text{GC}} (\beta, \mu) = \inf_{\varepsilon, \rho} \{ \beta \varepsilon +  \mu \rho - s(\varepsilon, \rho)\} \,.
\end{align*}
Typically, 
this results in a one-to-one correspondence between the parameters 
$\varepsilon$ and $\rho$ in the microcanonical ensemble with the associated free 
parameters $\beta$ and $\mu$.  Assuming that the above thermodynamic limits 
exist and agree with each other using this correspondence, 
we say that the ensembles are \defem{thermodynamically equivalent}.

The theory of Laplace-type integrals is well-developed and allows one to compute 
explicit asymptotics of such integrals. In particular, one is typically 
interested in second-order fluctuations. Indeed, from the specific free 
energies, we obtain
\begin{align*}
\frac{\left< H \right>_{\text{C}}^{\beta, \rho ; V}}{V} = \partial_\beta 
f_\text{C} (\beta, \rho; V), \quad \frac{\left< H^2 \right>_{\text{C}}^{\beta, 
\rho ; V} - \left( \left< H \right>_{\text{C}}^{\beta, \rho ; V} \right)^2}{V} = 
- \partial_\beta^2 f_\text{C}(\beta, \rho; V) .
\end{align*}
Using the theory of Laplace-type integrals, we typically have
\begin{align*}
& \lim_{V \to \infty} \frac{\left< H \right>_{\text{C}}^{\beta, \rho ; V}}{V} = 
\partial_\beta f_{\text{C}} (\beta, \rho)\,,
\quad
\lim_{V \to \infty} \frac{\left< H^2 \right>_{\text{C}}^{\beta, \rho ; V} - 
\left( \left< H \right>_{\text{C}}^{\beta, \rho ; V} \right)^2}{V} = - 
\partial_\beta^2 f_\text{C} (\beta, \rho) .
\end{align*}
The notion of "typical" here is rather vague and we refer the reader to \ref{asymptotic1} for a more detailed account of the use of the Laplace method.  
The first limit implies that the energy density of the canonical system 
converges to a constant, which, in turn, implies that the energy density behaves 
like $O(1)$ for large $V$. The contents of the second limit imply that the 
standard deviation of the energy density of the canonical system behaves like 
$O(V^{- \frac{1}{2}})$.
However, one should not rely on these formulae directly at phase transition points where the differentiability assumptions fail: in such cases, more refined tools, such as subdifferentials and convex analysis, will be needed to study the related behaviour.

However, in addition to analysing the thermodynamic properties of the system, 
the Laplace-type analysis offers us something more. Indeed, if we return to the 
alternative representation of the canonical ensemble and we denote the 
minimizing $\varepsilon$ of $f_\text{C} (\beta, \rho)$ by $\varepsilon^*$, then, 
for some suitable class of observables $g(\phi)$, one might expect that
\begin{align}\label{eq:mcclimit}
\lim_{V \to\infty} \left| \left< g \right>_{\text{MC}}^{\varepsilon^*, \rho;V} - 
\left< g \right>_{\text{C}}^{\beta, \rho;V}\right| = 0 .
\end{align}
We then say that the two ensembles are equivalent in this observable class.  For 
instance, if the above result would hold for every function $g:\mathcal{S}\to 
\C$ which is Lipschitz continuous, we could say that the microcanonical and 
canonical ensembles are \defem{Lipschitz observable equivalent}.
Analogously, if the result holds for all polynomials $g$ of the field whose 
degree is not allowed to grow with $V$,
we say that the ensembles are \defem{equivalent in their local moments}.
In this paper, we consider the suitable class of functions $g$, and the rate of convergence in (\ref{eq:mcclimit}) in more detail.

\subsubsection{Clarification of terminology}

To avoid possible misunderstandings, let us explicitly record our usage of the terminology concerning ensembles and related objects such as the partition function and free energy.  Most notably, we will need to make a distinction between thermodynamic and auxiliary statistical ensembles. 

A statistical ensemble is a probability distribution describing the 
state of a system. A \defem{thermodynamic ensemble} is a particular 
statistical ensemble which is determined by the physical properties of the system,
in particular, by its dynamics.  The most common examples start with 
a Hamiltonian defining the dynamics and then include any other relevant conserved quantities using one of the above discussed forms leading to microcanonical,
canonical, and possibly one or more grand canonical ensembles.
Partition functions and free energies can then be associated with these
thermodynamic ensembles.  We do make some choices of convenience to simplify 
the overall constant in the partition function: to avoid misunderstandings, we include also their explicit 
definitions in the following.

Here, we start from some some given thermodynamic ensemble
in the microcanonical form.  This yields the physical probability distribution
whose local expectation values we aim to estimate.  For this estimation,
it turns out to be helpful to introduce new probability measures, i.e., statistical ensembles, on the system which we will call \defem{auxiliary ensembles}.
Since many of these auxiliary measures can be written in the same form as
standard thermodynamic ensembles, it will be helpful to extend the standard 
terminology also there, leading, for example, to ``auxiliary microcanonical ensemble with fixed magnetization density'' for the Curie--Weiss model.

The auxiliary ensembles can be associated with ``partition functions'' and ``free energies'' in analogy with the standard ensembles, and this indeed will become a helpful shorthand notation in some of our computations.
However, it should be stressed that the auxiliary ensembles usually do not have any thermodynamic meaning, for example, the magnetization defining the auxiliary ensemble above
is not implied to be a conserved quantity in any dynamics leading to the Curie--Weiss model.  In addition, when talking about phase transitions and their order, we will always refer to the parameters in the original thermodynamic ensembles, and not 
to those appearing in the auxiliary ensembles.

\subsection{Related works and further motivation}\label{sec:relatedworks}

There has always been considerable interest in trying to classify the 
``correct'' notions of convergence of the equilibrium ensembles. For a 
particularly illuminating and modern account on some of the various notions 
which have been considered, we refer to \cite{Touchette2015} and its references. 
Thermodynamic equivalence from the point of view of large deviations and convexity properties of entropy is considered in great generality in 
\cite{Touchette2015}.  Here, we approach the problem more from the point of view of convergence of generic local expectation values, and the additional facilitating ingredient is  label permutation invariance of the studied equilibrium ensembles. 
For rigorous applications of the ensembles in non-equilibrium phenomena, such as for estimating the accuracy of local thermal equilibrium while studying heat transport, it would be important to be able to estimate the error in the approximation.  This ultimate goal is the second motivation for starting with the simple example cases in the present contribution.
  
In fact, such rigorous proofs are already available in the literature, albeit 
for different systems from the ones studied here. A very detailed mathematical 
account of such a convergence has been given in \cite{Chatterjee2017} starting 
from uniform distributions on the intersection of a simplex and a sphere. By 
appropriately parametrizing the radius of the sphere, and considering the 
behaviour of finite dimensional marginals and moments of this uniform 
distribution as the dimension of the space is increased, the author is 
able to rigorously prove that a phase transition occurs for this specific 
system. In particular, the author is able to prove that in the high dimensional 
limit the finite marginal distributions of the given uniform distributions are of product form.

Another work in this direction, which cites the previous article, is given in 
\cite{Huveneers2019}. In this work the authors consider the convergence of the 
microcanonical and grand canonical measures related to the Bose--Hubbard model. 
The commonality between both \cite{Chatterjee2017} and \cite{Huveneers2019} is 
that the models they are considering are defined on state spaces with strictly 
positive unbounded elements. Such a feature seems to be a key property of these 
models since both of these works observe a phase transition into a state which 
can be characterized as containing a condensate.

In fact, a fairly satisfying account of ensembles with unbounded strictly 
positive phase spaces has been given in \cite{nam2018large}. In this work the 
author proves a form of the equivalence of ensembles for systems with multiple 
constraints satisfying certain conditions, and the results are quite general as 
to their applicability.   However, the main theorems presented there hold for phase spaces which 
are defined on $[0, \infty)^N$ rather than $\mathbb{R}^N$, and, furthermore, the 
assumptions of the main theorem do not hold for the ensembles we are 
considering here.

We also mention an extensive source for references to the relative entropy method and usage of the method in \cite{Grokinsky2008} and \cite{Chleboun2013}. 
Some of these references will also be explicitly quoted later when discussing the usage of relative entropy.

Finally, let us mention the origin of the continuum model we are considering. First, we recall the (discrete) Curie--Weiss model. For a general overview of the discrete model, we refer to \cite{Kochma_ski_2013}. We also mention the classical work of Ellis in \cite{Ellis1978} which goes beyond the standard Curie--Weiss model. In 
\cite{Kastner2006}, the authors consider a further simplification to 
the Berlin--Kac model introduced in \cite{berlinkac1952}.  In particular, the 
nearest neighbour Ising model is replaced by a mean-field Hamiltonian, and, as 
evidenced in the article, the thermodynamic properties of 
the microcanonical and canonical ensembles become exactly computable. However, 
the authors do not consider the properties of local observables in their 
analysis.  The following references contain results about 
the phase structure of these models
\cite{Kastner2006}, as well as of their Potts model type generalizations to multicomponent cases \cite{Costeniuc2005}.

Our approach differs significantly from those of the above previous works and 
their associated models. In particular, we will employ various coupling methods 
to prove convergence of finite dimensional marginal distributions and finite 
moments of all orders. In addition, our arguments do not hinge on definitions of 
the microcanonical ensembles with thin-set approximations.  Instead, we 
define the microcanonical ensembles directly as constrained measures and explore 
their properties via analytic rather than probabilistic methods. For the first model, we refer to \cite{latticeexample1994} for a considerably more detailed analysis of the various properties of the model. 
However, for the second model introduced in \cite{Kastner2006} there does not seem to be 
proofs pertaining to the convergence of finite dimensional marginals or finite 
moments. There is a considerable amount of fine structure which must be 
considered to give a full account of the local convergence at this level.

Finally, let us stress that the main purpose of this paper is to display the 
specific methods of coupling and their relationship with the local convergence 
properties of the equilibrium ensembles. The thermodynamic properties of these 
systems are already well-known and have been studied extensively, but we wish to 
give an alternate, simpler and more accurate, account of the two models present in this 
paper, with the hope that the ideas used here generalize to other, less explicitly
tractable models.

\section{Two methods of coupling and main lemmas}\label{sec:errorest}

In this section, we will present definitions relevant to this article including the concept of coupling, the Wasserstein distance metric, and their two application methods which will be presented as theorems later on.

\subsection{Couplings and Wasserstein distances}\label{sec:Wassersteinandcouplings}

We collect some of the basic notions related to couplings here. 
More thorough introduction is available for instance in \cite{Villani2009}.

\subsubsection{Couplings and transport maps}\label{sec:couplandtransport}

We will frequently make use of the notion of coupling between 
probability measures. Let $X$ be a sample space and let $\Sigma$ be a 
$\sigma$-algebra on $X$. Let $\mu_1$ and $\mu_2$ be two probability measures on 
$X$. Define the coordinate projections $P_1 : X \times X \to X$ and $P_2 : X 
\times X \to X$ by $P_1 (x,y) := x$ and $P_2 (x,y) := y$. A probability measure 
$\gamma$ on a sample space $X \times X$ with a $\sigma$-algebra $\Sigma \otimes 
\Sigma$ is called a coupling if $\gamma \circ P_1^{-1} = \mu_1$ and $\gamma 
\circ P_2^{-1} = \mu_2$.  Here, and in the following, $P^{-1}$ will  be used not only to denote the inverse of a mapping $P$, but also for the associated map which takes a set to its preimage under $P$.

In this paper, we will often give the definitions of probability measures with 
the explicit assumption that they can be constructed by simply giving suitable 
values of the expectations of measurable functions. For example, if $X$ is a 
locally compact Hausdorff space and we are able to construct a bounded positive 
linear functional $L$ on $C_c (X)$, 
the space of continuous functions with compact support equipped with the 
supremum norm, 
such that $\norm{L}=1$, 
then by the Riesz--Markov--Kakutani representation theorem, there exists a 
unique Radon probability measure\footnote{Radon measures are a subclass of Borel measures with additional technical regularity properties, cf. Wikipedia or \cite{rudin1987real}.} $\mu$ on $X$ such that $L(f) = \left< f 
\right>_\mu$ for all $f\in C_c (X)$.
 
For the contents of this paper, we will use the following equivalent notion of 
coupling. Let $f : X \to \mathbb{R}$ be a measurable function. A probability 
measure $\gamma$, as defined in the previous paragraph, is a coupling if 
\begin{align*}
\left< f \circ P_1 \right>_\gamma = \left< f \right>_{\mu_1}, \ \left< f \circ 
P_2 \right>_\gamma = \left< f \right>_{\mu_2} 
\end{align*}
holds for all such functions $f$. One typically says that the marginal 
distributions of $\gamma$ are given by $\mu_1$ and $\mu_2$.
 
In this paper, we will sometimes refer to specific types of couplings as 
transport maps. Let $\mu_1$ be a probability measure as before, and let $T : X 
\to X$ be a measurable map. Define the probability measure $\mu_2$ by setting 
$\mu_2 (A) := \mu_1 (T^{-1} (A))$ for all $A \in \Sigma$. Such a probability 
measure $\mu_2$ is called the \defem{pushfoward measure} of $\mu_1$ by the map 
$T$. We then denote $\mu_2 = T_* \mu_1$. This notion is also sometimes called 
the abstract change of variables due to the following equivalent definition of the 
pushforward measure: If $f : X \to \mathbb{R}_+$ is a 
characteristic function of a measurable set, we may set 
\begin{align}\label{eq:defmu3}
\mean{f}_{\mu_3} = \int_X \mu_1 (dx) \ f(T(x))\,,
\end{align}
and this defines a positive measure $\mu_3$ on $\Sigma$.  
Then, it is straightforward to check that 
$\mu_3$ indeed is a probability measure for which (\ref{eq:defmu3}) holds for 
every non-negative measurable function $f$.  In addition, $\mu_3=\mu_2$, and 
thus (\ref{eq:defmu3}) provides an
alternative definition of $T_* \mu_1$.

When $\mu_2$ and $\mu_1$ are measures such that there is a measurable map $T$
for which $\mu_2= T_* \mu_1$, we call $T$ a \defem{transport map} from the 
measure $\mu_1$ to $\mu_2$. A transport map $T$ can always be used to construct 
a coupling between $\mu_1$ and $\mu_2$ as follows: If $g : X \times X \to 
\mathbb{R}_+$ is a measurable function, we define a probability measure $\gamma$ 
by setting
\begin{align*}
\left< g \right>_\gamma = \int_X \mu_1 (dx) \ g(x, T(x)) .
\end{align*}
One can go through analogous steps as above and show that $\gamma$ is then indeed a 
coupling of $\mu_1$ and $\mu_2 = T_* \mu_1$.

\subsubsection{Wasserstein distance and coupling optimization}
\label{sec:Wassandoptim}

For the moment, we will specialize to probability measures on 
$\mathbb{R}^n$. Let $\mu_1$ and $\mu_2$ be probability measures on 
$\mathbb{R}^n$ and let $f : \mathbb{R}^n \to \mathbb{R}$ be a bounded 
$1$-Lipschitz function with respect to the $|| \cdot ||_p$-norm for some $p \geq 
1$.  To be explicit, we require that $f$ is a function for which 
its optimal Lipschitz constant $K$, defined by
\[
 K:=\sup_{\phi\ne \psi} \frac{|f(\phi) - 
f(\psi)|}{\norm{\phi-\psi}_{p}}\,,
\]
satisfies $K\le 1$.  This is a property which depends on the choice of norm, and restricts the class of 
allowed functions.
Naturally, if $f$ is a function with 
$K>1$, then we can apply the results below to the $1$-Lipschitz function 
$\frac{1}{K}f$, and the conclusions for the original function $f$
will be the same, as long as the constant $K$ remains bounded in $n$. 
We have chosen to use the ``$1$-Lipschitz'' assumption in order to remove one, 
otherwise quite relevant, constant from the estimates.

Suppose there exists a coupling $\gamma$ of $\mu_1$ and $\mu_2$. Using the 
properties of probability measures, we have
\begin{align}\label{eq:fromLiptonorm}
\left| \left< f \right>_{\mu_1} - \left< f \right>_{\mu_2} \right| = \left| 
\left< f \circ P_1 - f \circ P_2 \right>_\gamma\right| \leq \left< \left| f 
\circ P_1 - f \circ P_2 \right| \right>_\gamma \leq \left< || x_1 - x_2 ||_p 
\right>_\gamma .
\end{align}
On the last line, we have used the short hand notation $x_i = P_i(x)$, $i=1,2$, 
for clarity. One should note that the coupling does not appear on the left hand 
side of this inequality, and, we are thus free to minimize this inequality with 
respect to all couplings $\gamma$.  Since there always exists at least one 
coupling, given by the the product coupling $\gamma = \mu_1 \otimes \mu_2$, and 
since the functions $f$ are bounded, then for any coupling the middle expression 
has a uniform upper bound. Therefore,
\begin{align*}
\left| \left< f \right>_{\mu_1} - \left< f \right>_{\mu_2} \right| \leq 
\inf_\gamma \left< || x_1 - x_2 ||_p \right>_\gamma .
\end{align*}
Naturally, we can swap the norm $|| \cdot ||_p$ for any cost function $c(x,y) : 
\mathbb{R}^n \times \mathbb{R}^n \to \mathbb{R}_{+}$ with enough regularity as 
long as we can relate the difference of the expectations somehow to the given 
cost function.

For $p \geq 1$,
define $\mathcal{P}_p (\mathbb{R}^n)$ to be the space of probability 
measures with finite $p$:th moments, i.e., assuming that 
$\mean{\norm{x}_p^p}<\infty$.
Consider $\mu_1, \mu_2 \in 
\mathcal{P}_{p} (\mathbb{R}^n)$. 
Given also some $q\geq 1$, we denote the
$p$-Wasserstein distance between $\mu_1$ and $\mu_2$
with respect to the $q$-norm by
$W_{p;q} (\mu_1, \mu_2)$.  Explicitly, 
\begin{align*}
W_{p;q} (\mu_1, \mu_2) = \left( \inf_\gamma \int_{\mathbb{R}^n \times 
\mathbb{R}^n} \gamma (dx, dy) \ || x - y||_q^p \right)^{\frac{1}{p}} \,,
\end{align*}
and, since $\norm{x}_q\le n^{1/q}\max_j |x_j|$, is straightforward to check that 
then $W_{p;q} (\mu_1, \mu_2) <\infty$.

The $p$-Wasserstein distance has been studied comprehensively and applied in a 
great variety of circumstances; examples and discussion are provided in \cite{Villani2009}. However, for the purposes of this paper, we will 
be more interested in slightly modified cost functions which are similar in 
nature to the $p$-Wasserstein distances. The main drawback of many of the 
methods and papers associated with the Wasserstein distances is that the focus 
has been on the case where the dimension of the space $n$ is fixed. In the 
context of statistical mechanics, we are typically interested in asymptotic 
properties for arbitrarily large $n$.

\subsection{Definitions}

For the purposes of this section and for the definition of the lattice model 
later,
let us fix some shorthand notations first.
Given $N\in \N$, we denote the collection of first $N$ integers as follows
\begin{align}\label{eq:defseqN}
 [N] := \{1,2,\ldots,N\}\,,
\end{align}
and we denote the group  of permutations of its elements by $S_N$.
Given a subset $I\subset [N]$, of a length $n:=|I|$, there is 
a unique bijection $\pi_I:I\to [n]$ which retains the order of the elements in 
the 
subsequence.  We let $\bar{\pi}_I\in S_N$ denote the extension 
of $\pi_I$ which is obtained by permuting the elements in $[N]\setminus I$
in an order preserving manner into the set $[N]\setminus [n]$.
In addition, every bijection $\pi_I$ as above defines a projection $P_I:\R^N\to 
\R^n$ via the formula $(P_Ix)_j:=x_{\pi^{-1}_I(j)}$, $j\in [n]$.
Analogously, given a permutation $\pi\in S_N$, the corresponding coordinate 
permutation 
will be denoted $Q_\pi:\R^N\to \R^N$; explicitly, we set $(Q_\pi 
x)_j:=x_{\pi^{-1}(j)}$, $j\in [N]$ (note that using the inverse permutation in 
the formula will result in a map which will send coordinate $i$ into coordinate 
$\pi(i)$).

Given $y\in \R$, there is a unique integer $k\in \Z$ for which 
$k \leq y < k + 1$, and we denote this by using the ``floor'' notation, 
$k:=\lfloor y \rfloor$.
In particular, given $n,N\in \N$ such that $n\le N$ and setting
$k=\lfloor N/n \rfloor$ we have $k\in \N$ and $k$ satisfies $k n\leq N < (k + 1) 
n$.

\begin{definition}[Permutation invariance of measures on $\R^N$]
Given $N\in \N$, a probability measure $\mu$ on $\R^N$, we say that $\mu$ is 
permutation invariant, if for every integrable function $f:\R^N\to \R$ and 
a permutation $\pi\in S_N$, we have $f\circ Q_\pi \in L^1(\mu)$ and
\[
 \mean{f\circ Q_\pi}_\mu = \mean{f}_\mu\,.
\]
\end{definition}

Finally, instead of using a standard $p$-norm to measure distances in $\R^N$, 
we scale it suitably with $N$ so that the Wasserstein cost function becomes an 
average over particle labels.  The benefits of this definition will become 
apparent in Sec.~\ref{sec:directcoupling}.
\begin{definition}[Specific $p$-norm fluctuation distance]  Suppose $p \geq 1$ 
and $N\in \N$. Let $\mu_1$ and $\mu_2$ be two Radon probability measures on 
$\mathbb{R}^N$ such that the $p$:th  moments under both measures are finite. 
Their specific $p$-norm fluctuation distance $w_p$ is then defined as
\begin{align*}
w_p (\mu_1, \mu_2;N) :=  \left(\inf_{\gamma} \int_{\mathbb{R}^N \times 
\mathbb{R}^N} \gamma (dx, dy) \frac{1}{N}\sum_{i=1}^N |x_i - y_i|^p 
\right)^{\frac{1}{p}}\, ,
\end{align*}
where the infimum is taken over all couplings of $\mu_1$ and $\mu_2$. 
\end{definition}
Clearly, this definition relates to the standard $p$-norm Wasserstein distance 
mentioned earlier via a scaling: $w_p = N^{-\frac{1}{p}} W_{p;p}$.

\subsection{The direct coupling method} \label{sec:directcoupling}

To highlight the benefits of the above definitions, we 
offer the following fundamental Lemma which will be used to prove the main 
theorems of this paper.  It should be stressed that the key assumption is to 
specialize to permutation invariant measures.  We aim to consider local 
expectations, i.e., $\mean{F}$
for functions $F:\R^N\to \R$ which depend only on components $x_i$, $i\in I$,
where $I\subset [N]$ can be otherwise arbitrary but it has a bounded size, i.e., 
$|I|$ remains bounded when $N\to \infty$.  In particular, note that then there 
is some $f: \mathbb{R}^{|I|} \to \mathbb{R}$ such that $F=f \circ P_{I}$.

\begin{lemma} \label{wass2est}
Suppose $p \geq 1$ and $N\in \N$.
Let $\mu_1$ and $\mu_2$ be two permutation invariant Radon probability measures 
on $\mathbb{R}^N$ such that the $p$:th moments under both measures are finite.
Consider a subset $I\subset[N]$.
If $f : \mathbb{R}^{|I|} \to \mathbb{R}$ is a bounded $1$-Lipschitz function 
with respect to the $|| \cdot ||_p$ norm, then we have
\begin{align*}
\left| \left< f \circ P_{I} \right>_{\mu_1} - \left< f \circ P_{I} 
\right>_{\mu_2} \right| \leq  \left( \frac{|I|}{1 - \frac{|I|}{N}} 
\right)^{\frac{1}{p}} w_p (\mu_1, \mu_2;N) \,.
\end{align*}
\end{lemma}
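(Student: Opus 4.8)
The plan is to use permutation invariance to trade the single ``window'' $I$ for an average over many pairwise disjoint windows, and only afterwards to introduce a coupling and apply the Lipschitz estimate~(\ref{eq:fromLiptonorm}). Spreading the window over $k:=\lfloor N/|I|\rfloor$ disjoint copies inside $[N]$ is exactly what turns ordinary block costs into the label-averaged cost $\frac1N\sum_{i=1}^N|x_i-y_i|^p$ that defines $w_p$, and the fact that only $k$ such copies fit is what produces the factor $(1-|I|/N)^{-1/p}$.

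Write $n:=|I|$; if $n=N$ the asserted bound is $+\infty$ and there is nothing to prove, so assume $n<N$. First I would record the elementary consequence of permutation invariance that local expectations depend only on the \emph{size} of the window: for any $J\subset[N]$ with $|J|=n$ there is a permutation $\pi\in S_N$ carrying $J$ onto $I$ and $[N]\setminus J$ onto $[N]\setminus I$ in an order-preserving manner, so that $P_I\circ Q_\pi=P_J$; applying the defining identity $\mean{g\circ Q_\pi}_{\mu_i}=\mean{g}_{\mu_i}$ to the bounded (hence $\mu_i$-integrable) function $g=f\circ P_I$ then gives $\mean{f\circ P_J}_{\mu_i}=\mean{f\circ P_I}_{\mu_i}$ for $i=1,2$. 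Next, pick $k:=\lfloor N/n\rfloor$ pairwise disjoint $n$-element subsets $J_1,\dots,J_k\subset[N]$ (possible since $kn\le N$, e.g.\ consecutive blocks), and let $\gamma$ be any coupling of $\mu_1$ and $\mu_2$. Writing $(x,y)$ for the generic point of $\R^N\times\R^N$, the marginal property of $\gamma$ together with the previous observation gives, for every $\ell$,
\[
\mean{f\circ P_I}_{\mu_1}-\mean{f\circ P_I}_{\mu_2}=\mean{f(P_{J_\ell}x)-f(P_{J_\ell}y)}_\gamma ,
\]
and averaging this identity over $\ell\in\{1,\dots,k\}$, then using the triangle inequality and the $1$-Lipschitz hypothesis on $f$ (exactly as in~(\ref{eq:fromLiptonorm})), bounds the left-hand side by $\mean{\tfrac1k\sum_{\ell=1}^k\norm{P_{J_\ell}x-P_{J_\ell}y}_p}_\gamma$.

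To conclude, I would apply Jensen's inequality on the product of $\gamma$ with the uniform probability measure on $\{1,\dots,k\}$ (convexity of $t\mapsto t^p$) to move the average inside the $p$-th power, obtaining the bound $\bigl(\mean{\tfrac1k\sum_{\ell}\sum_{i\in J_\ell}|x_i-y_i|^p}_\gamma\bigr)^{1/p}$; since the $J_\ell$ are disjoint subsets of $[N]$ and the summands are nonnegative, $\sum_{\ell}\sum_{i\in J_\ell}|x_i-y_i|^p\le\sum_{i=1}^N|x_i-y_i|^p$, so this is at most $(N/k)^{1/p}\bigl(\mean{\tfrac1N\sum_{i=1}^N|x_i-y_i|^p}_\gamma\bigr)^{1/p}$. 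The expectations here are finite because $\mu_1,\mu_2$ have finite $p$-th moments, whence $|x_i-y_i|^p\in L^1(\gamma)$; taking the infimum over all couplings $\gamma$ (legitimate since $\gamma$ is absent from the left-hand side) yields the bound with constant $(N/k)^{1/p}$. Finally, $k=\lfloor N/n\rfloor$ satisfies $kn>N-n$, hence $N/k\le nN/(N-n)=n/(1-n/N)$, which is the stated inequality.

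I expect the only genuinely fiddly point to be the combinatorial bookkeeping in the first step — checking, against the paper's definitions of $\pi_I$, $\bar\pi_I$, $P_I$ and $Q_\pi$, that an arbitrary $n$-subset $J$ is the image of $I$ under a single coordinate permutation, so that the definition of permutation invariance applies verbatim. The remaining ingredients are routine: the Lipschitz/triangle estimate already recorded in~(\ref{eq:fromLiptonorm}), one application of Jensen, discarding nonnegative terms, and the floor-function arithmetic $N/k\le n/(1-n/N)$.
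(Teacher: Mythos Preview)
Your proof is correct and follows essentially the same route as the paper's own argument: partition $[N]$ into $k=\lfloor N/n\rfloor$ disjoint copies of the window, use permutation invariance to replace the single expectation difference by an average over the copies, apply the Lipschitz bound under an arbitrary coupling, pass the $\ell^p$-average inside using convexity, use disjointness to dominate by $\frac1N\sum_i|x_i-y_i|^p$, and optimize over couplings. The only cosmetic differences are that the paper invokes H\"older's inequality where you use Jensen, and that you explicitly dispose of the degenerate case $|I|=N$.
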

\begin{proof}
For the proof, set $n:=|I|$ and 
$k:=\lfloor N/n \rfloor$ when $k\in \N$ and $k$ satisfies $k n\leq N < (k + 1) 
n$.
We define the sets $I_i \subset [N]$, $i\in [k]$, by setting $I_1 := I$ and, for 
$i > 1$, we proceed inductively by selecting $|I|$ elements from  the set $[N] 
\setminus \left( \bigcup_{j=1}^{i-1} I_j \right)$ to be the set $I_i$. The 
collection of sets $I_i$ are disjoint and $\bigcup_{i=1}^k I_i \subset [N]$. 
For any $i$, there is a permutation in $S_N$ which is bijection between $I_i$ 
and $I$.
Thus by the assumed  permutation invariance of the measures, we have $\left< f 
\circ P_{I_i} \right> =  \left< f \circ P_{I} \right>$ for either measure and 
all $i$. 
Therefore,
\begin{align*}
\left< f \circ P_{I} \right>_{\mu_1} - \left< f \circ P_{I} \right>_{\mu_2} 
= \frac{1}{k} \sum_{i=1}^k \left( \left< f \circ P_{I_i} \right>_{\mu_1} - 
\left< f \circ P_{I_i} \right>_{\mu_2} \right)\,.
\end{align*}

Suppose then that $\gamma$ is a coupling between $\mu_1$ and $\mu_2$.
Then $\left< f \circ P_{I_i} \right>_{\mu_j}=
\left< f \circ P_{I_i} \circ P_j\right>_{\gamma}$ for both $j=1,2$.
Again resorting to the shorthand notations $x_j := P_jx$, we can rewrite
\begin{align*}
\left< f \circ P_{I_i} \right>_{\mu_1} - \left< f \circ P_{I_i} \right>_{\mu_2} 
= \left< f(P_{I_i}x_1)-f(P_{I_i}x_2)\right>_{\gamma} \,.
\end{align*}
The absolute value of this expression can now be estimated using the 
assumed $1$-Lipschitz property of $f$.  Combining the results and using the 
triangle inequality we thus obtain
\begin{align*}
\left| \left< f \circ P_{I} \right>_{\mu_1} - \left< f \circ P_{I} 
\right>_{\mu_2}\right| \leq \frac{1}{k} \sum_{i=1}^k 
\mean{\norm{P_{I_i}x_1-P_{I_i}x_2}_p}_\gamma
\leq \left(\frac{1}{k} \sum_{i=1}^k 
\mean{\norm{P_{I_i}x_1-P_{I_i}x_2}^p_p}_\gamma\right)^{\frac{1}{p}}
\,,\end{align*}
where in the last step we have used H\"older's inequality.
Since the sets $I_i$ are disjoint, here
$\sum_{i=1}^k \norm{P_{I_i}x_1-P_{I_i}x_2}^p_p\le \sum_{j=1}^N 
|(x_1)_j-(x_2)_j|^p$. 
Therefore,
\begin{align*}
\left| \left< f \circ P_{I} \right>_{\mu_1} - \left< f \circ P_{I} 
\right>_{\mu_2} \right| \leq  \left( \frac{1}{k}\left< || x_1 - x_2 ||_p^p 
\right>_\gamma \right)^\frac{1}{p}.
\end{align*}
Because the left hand side of the above estimate does not depend on the coupling 
 $\gamma$, we can take the infimum over all possible couplings.  Then using the 
relation between $k$ and $n$ stated in the beginning of the proof, we obtain
\begin{align*}
\left| \left< f \circ P_{I} \right>_{\mu_1} - \left< f \circ P_{I} 
\right>_{\mu_2} \right| \leq \left( \frac{n}{1 - \frac{n}{N}}  
\right)^{\frac{1}{p}} w_p (\mu_1, \mu_2;N)\,,
\end{align*}
as desired.
\end{proof}

The first Lemma concerned bounds on local observables which were bounded 
$1$-Lipschitz functions. This next variant of the Lemma concerns 
estimation of arbitrary finite moments.
\begin{theorem} \label{wass2momest}
Suppose $p > 1$ and $N\in \N$.
Let $\mu_1$ and $\mu_2$ be two permutation invariant Radon probability measures 
on $\mathbb{R}^N$ such that the $p_0$:th moments of both measures are finite
for some $p_0\ge p$.
 
Let $J$ be a finite sequence of elements in $[N]$ where elements may be 
repeated. 
Let $n_J:=|J|$ denote the length of the sequence and $I\subset [N]$
the collection of elements occurring in the sequence, i.e., set $I:=\{J_\ell 
{\,|\,} \ell \in [n_J]\}$.  For any $x \in \mathbb{R}^N$, we then let $x^J$ 
denote the power
\begin{align*}
x^J := \prod_{\ell = 1}^{n_J} x_{J_\ell} .
\end{align*}

Assuming also $n_J\le p_0+1-\frac{p_0}{p}$, it follows that
\begin{align*}
\left|  \left< x^J \right>_{\mu_1} - \left< x^J \right>_{\mu_2} \right| \leq n_J 
M(J,p)^{n_J-1} \left( \frac{|I|}{1 - \frac{|I|}{N}} \right)^{\frac{1}{p}} w_p 
(\mu_1, \mu_2;N) \,,
\end{align*}
where $M(J,p)=1$ if $n_J=1$, and otherwise
\begin{align*}
M(J,p) := \max_{i\in I}\left(\left< 
|x_{i}|^{q(n_J-1)}\right>^{\frac{1}{q(n_J-1)}}_{\mu_1},\left< 
|x_{i}|^{q(n_J-1)}\right>^{\frac{1}{q(n_J-1)}}_{\mu_2}\right)<\infty 
\,,
\end{align*}
with $q = \frac{p}{p - 1}$.
\end{theorem}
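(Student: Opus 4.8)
The plan is to reduce the moment estimate to the Lipschitz-observable bound of Lemma~\ref{wass2est} by writing the difference of monomials as a telescoping sum, controlling each term by a product of a single-coordinate difference (which is $1$-Lipschitz in the relevant coordinate) and lower-order absolute moments. Concretely, fix the sequence $J$ with $n_J\ge 2$ (the case $n_J=1$ is exactly Lemma~\ref{wass2est} with $|I|=1$). Let $\gamma$ be any coupling of $\mu_1$ and $\mu_2$, write $x_j:=P_jx$, and use the algebraic identity
\begin{align*}
x_1^J - x_2^J = \sum_{\ell=1}^{n_J}\Bigl(\prod_{m<\ell}(x_1)_{J_m}\Bigr)\bigl((x_1)_{J_\ell}-(x_2)_{J_\ell}\bigr)\Bigl(\prod_{m>\ell}(x_2)_{J_m}\Bigr)\,.
\end{align*}
Taking $\gamma$-expectations, the absolute value is bounded by $\sum_\ell \mean{\,\bigl|\prod_{m<\ell}(x_1)_{J_m}\bigr|\,\bigl|(x_1)_{J_\ell}-(x_2)_{J_\ell}\bigr|\,\bigl|\prod_{m>\ell}(x_2)_{J_m}\bigr|\,}_\gamma$. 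To each summand I would apply H\"older's inequality with exponents $q=\frac{p}{p-1}$ for the product of the $n_J-1$ "spectator" factors and $p$ for the difference factor, so that the spectator part is bounded by a product of $\mean{|x_i|^{q(n_J-1)}}^{1/(q(n_J-1))}$-type norms under $\mu_1$ or $\mu_2$, each of which is at most $M(J,p)$, giving the factor $M(J,p)^{n_J-1}$ and, after summing the $n_J$ terms, the prefactor $n_J M(J,p)^{n_J-1}$.

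After this step one is left with $\inf_\gamma \mean{|(x_1)_{J_\ell}-(x_2)_{J_\ell}|^p}_\gamma^{1/p}$-type quantities, but with the subtlety that the coupling $\gamma$ is shared across all $n_J$ terms and across the two H\"older factors, so I cannot naively optimize each term separately. The clean way around this is to observe that the single-coordinate map $x\mapsto x_{J_\ell}$ is $1$-Lipschitz with respect to $\|\cdot\|_p$ on $\mathbb{R}^{|I|}$ as a function of $P_I x$, and that $x^J$ itself, as a function of the $|I|$ relevant coordinates, can be handled by the same permutation-averaging device as in the proof of Lemma~\ref{wass2est}: replace $I=I_1$ by disjoint copies $I_2,\dots,I_k$ with $k=\lfloor N/|I|\rfloor$, use permutation invariance of $\mu_1,\mu_2$ to write $\mean{x^J}_{\mu_j}=\frac1k\sum_i \mean{(Q \cdot)^{J^{(i)}}}_{\mu_j}$ where $J^{(i)}$ is the image of $J$ under the bijection $I\to I_i$, apply the telescoping/H\"older bound per block, use disjointness to bound $\sum_i\sum_{\ell}|(x_1)_{J^{(i)}_\ell}-(x_2)_{J^{(i)}_\ell}|^p\le\|x_1-x_2\|_p^p$, and only then take $\inf_\gamma$ to obtain $w_p(\mu_1,\mu_2;N)$ together with the $\bigl(\tfrac{|I|}{1-|I|/N}\bigr)^{1/p}$ factor exactly as in Lemma~\ref{wass2est}.

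The remaining routine checks are: verifying finiteness of $M(J,p)$ and the legitimacy of all H\"older applications, which is where the hypothesis $n_J\le p_0+1-\frac{p_0}{p}$ is used --- it guarantees $q(n_J-1)\le p_0$, so the spectator moments $\mean{|x_i|^{q(n_J-1)}}$ are finite under both measures, and simultaneously that the difference factor sits in $L^p\subset L^{p_0}$; and confirming that the $p_0$-th moment hypothesis makes $x^J\in L^1$ of both measures so that all expectations are well defined. I expect the main obstacle to be organizing the argument so that a \emph{single} coupling is used throughout --- i.e. interleaving the telescoping expansion, the H\"older split into a spectator part and a difference part, and the permutation-averaging over the blocks $I_i$ --- before the infimum over couplings is taken; once the bookkeeping is arranged so that the difference parts aggregate to $\|x_1-x_2\|_p^p/k$ while the spectator parts are uniformly bounded by $M(J,p)^{n_J-1}$, the conclusion follows by taking $\inf_\gamma$ and inserting $k\ge N/|I|-1$ as in Lemma~\ref{wass2est}.
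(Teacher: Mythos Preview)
Your proposal is correct and follows essentially the same route as the paper: the telescoping identity for $x^J-y^J$, H\"older with exponent $p$ on the difference factor and $q(n_J-1)$ on each of the $n_J-1$ spectator factors to extract $M(J,p)^{n_J-1}$, and then the permutation-cloning device of Lemma~\ref{wass2est} applied with a single coupling before taking the infimum.

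One small bookkeeping correction: the inequality you write, $\sum_i\sum_\ell|(x_1)_{J^{(i)}_\ell}-(x_2)_{J^{(i)}_\ell}|^p\le\|x_1-x_2\|_p^p$, can fail when $J$ has repeated entries, since a coordinate in $I_i$ is then counted several times on the left. The paper handles this by first collapsing the inner sum \emph{per block} via $\frac{1}{n_J}\sum_{\ell=1}^{n_J}|(x_1)_{J_\ell}-(x_2)_{J_\ell}|^p\le\sum_{j\in I}|(x_1)_j-(x_2)_j|^p=:\|x_1-x_2\|_{I,p}^p$, which yields the single-block bound $|\langle x^J-y^J\rangle_\gamma|\le n_J\,M(J,p)^{n_J-1}\langle\|x-y\|_{I,p}^p\rangle_\gamma^{1/p}$; only then does one clone to disjoint $I_1,\dots,I_k$ and use $\sum_{\ell=1}^k\|x-y\|_{I_\ell,p}^p\le\|x-y\|_p^p$. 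With that adjustment your argument is the paper's argument.
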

\begin{proof}
Generalized H\"older's inequality implies that
\[
 \mean{|x^J|}\le \prod_{\ell = 1}^{n_J} 
\mean{|x_{J_\ell}|^{n_J}}^{\frac{1}{n_J}}\,.
\]
Since $1\le n_J\le p_0$, the assumptions guarantee that $x^J$ is integrable with 
respect to both $\mu_1$ and $\mu_2$.  On the other hand, $n_J\le 
p_0+1-\frac{p_0}{p}$
implies $q(n_J-1)\le p_0$, so that also $M(J,p)<\infty$.

First, note that for $x,y \in \mathbb{R}^N$, we have
\begin{align*}
x^J - y^J = \sum_{i=1}^{n_J} (x_{J_i} - y_{J_i}) \prod_{j < i} x_{J_j} \prod_{k 
> i} y_{J_k} \,.
\end{align*}
There are $n_J$ factors in each of the products under the sum.  Thus for any 
coupling $\gamma$ between $\mu_1$ and $\mu_2$ and, for simplicity, replacing
$x_1,x_2$ by $x,y$, we find using the generalized H\"older's inequality
\begin{align*}
& \left| \left< x^J - y^J \right>_\gamma \right|  \leq 
\sum_{i=1}^{n_J} \left< |x_{J_i} - y_{J_i}| \prod_{j < i} |x_{J_j}| \prod_{k > 
i} |y_{J_k}| \right>_\gamma \\ & \quad
 \leq \sum_{i=1}^{n_J} 
  \left< |x_{J_i} - y_{J_i}|^p \right>_\gamma^{\frac{1}{p}}
  \prod_{j < i} \left< |x_{J_j}|^{q'} \right>_\gamma^{\frac{1}{q'}}
  \prod_{k> i} \left< |y_{J_k}|^{q'} \right>_\gamma^{\frac{1}{q'}}
 \, ,
\end{align*}
where $q' := q(n_J-1)$, so that indeed $\frac{1}{p}+(n_J-1)\frac{1}{q'}=1$, as 
required by the H\"older's inequality.  Apart from the first term, the remaining 
$n_J-1$ terms
are all bounded by $M(J,p)$.  Therefore,
\begin{align*}
& \left| \left< x^J - y^J \right>_\gamma \right|  
 \leq M(J,p)^{n_J-1}\sum_{i=1}^{n_J} 
  \left< |x_{J_i} - y_{J_i}|^p \right>_\gamma^{\frac{1}{p}}
  \le M(J,p)^{n_J-1} n_J \left(\frac{1}{n_J}\sum_{i=1}^{n_J} 
  \left< |x_{J_i} - y_{J_i}|^p \right>_\gamma\right)^{\frac{1}{p}}
 \, ,
\end{align*}
where H\"older's inequality has been used in the second step.
Here, even if there are repetitions in the sequence $J$,
we have $\frac{1}{n_J}\sum_{i=1}^{n_J}  |x_{J_i} - y_{J_i}|^p 
\le \sum_{i\in I} |x_{i} - y_{i}|^p =: \norm{x-y}_{I,p}^{p}$.  Therefore,
\begin{align*}
& \left| \left< x^J - y^J \right>_\gamma \right|  
 \leq n_J M(J,p)^{n_J-1}
  \left<\norm{x-y}_{I,p}^{p} \right>_\gamma^{\frac{1}{p}}
 \, .
\end{align*}

To finish the proof, one should notice that the label subset $I$ which appears 
in this theorem can be regarded in the same way as in the proof of Lemma 
\ref{wass2est}. 
Using the assumed permutation invariance to clone the labels yields collections 
of subsequences $J(\ell)$ and subsets $I_\ell$ for $\ell\in [k]$, 
where $k:=\lfloor N/n \rfloor$, $n:=|I|$.   Since $\mean{x^J}_{\mu_i} = 
\mean{x^{J(\ell)}}_{\mu_i}$ by construction, we find using permutation 
invariance that 
\begin{align*}
& \left| \left< x^J \right>_{\mu_1} - \left< x^J \right>_{\mu_2} \right| \leq 
\frac{1}{k} \sum_{\ell=1}^k \left| \left< x^{J(\ell)} - y^{J(\ell)} 
\right>_\gamma \right|
\leq \frac{n_J M(J,p)^{n_J-1}}{k} \sum_{\ell=1}^k \left< || x - y 
||_{I_\ell,p}^p\right>_\gamma^\frac{1}{p} \\
&\quad \leq n_J M(J,p)^{n_J-1} \left( \frac{|I|}{1 - \frac{|I|}{N}} 
\right)^{\frac{1}{p}}  w_p (\mu_1, \mu_2;N)\,,
\end{align*}
as desired.
\end{proof}

\subsection{Free energy method}

By applying Lemma \ref{wass2est}, we are now able to produce two distinct types of 
coupling proofs which concern the ensembles discussed in the introduction.
\begin{theorem} \label{freeenergylip}
Let $\mu_{\text{MC}}^{\varepsilon, \rho;N}$ be a permutation invariant 
probability measure corresponding to a microcanonical ensemble with energy 
density $\varepsilon$ and particle density $\rho$. In addition, assume that if 
we fix a possible energy density $\varepsilon'$, then for any other possible 
energy density $\varepsilon$ there exists a constant $C(\varepsilon, \rho) > 0$ 
independent of $\varepsilon'$ and $N$, but possibly dependent on $\varepsilon$ 
and $\rho$, such that
\begin{align*}
w_p (\mu_{\text{MC}}^{\varepsilon, \rho;N}, \mu_{\text{MC}}^{\varepsilon', 
\rho;N};N) \leq C(\varepsilon, \rho) |\varepsilon - \varepsilon'|
\end{align*}
for some $p \geq 1$.  Suppose also that the microcanonical and canonical 
measures, for some parameter $\beta$, have finite $p$:th moments.
 
Fix $n < \infty$ and consider any $I\subset\mathbb{N}$ of length $n$.
Let $f :\mathbb{R}^{|I|} \to \mathbb{R}$ be a bounded $1$-Lipschitz function 
with respect to the $|| \cdot ||_p$ norm. Then
\begin{align*}
\left| \left< f \circ P_I \right>_{\text{MC}}^{\varepsilon, \rho;N} - \left<f 
\circ P_I \right>_{\text{C}}^{\beta, \rho ;N} \right| \leq C(\varepsilon, \rho) 
\left( \frac{ |I|}{1 - \frac{|I|}{N}} \right)^{\frac{1}{p}} \left( 
\sigma_{\text{C}}^{\beta, \rho;N} \!\left( \frac{H}{N} \right) + \left| 
\varepsilon - \frac{\left< H \right>_{\text{C}}^{\beta, \rho;N}}{N} 
\right|\right),
\end{align*}
where the canonical standard deviation of energy density reads explicitly
\begin{align*}
\sigma_{\text{C}}^{\beta, \rho;N}\!\left( \frac{H}{N} \right) = 
\sqrt{\frac{\left< H^2 \right>_{\text{C}}^{\beta, \rho ; N} - \left( \left< H 
\right>_{\text{C}}^{\beta, \rho ; N} \right)^2}{N^2}}\, .
\end{align*}
Using the notation of the specific free energies, the same result can be 
rewritten as
\begin{align*}
\left| \left< f \circ P_I \right>_{\text{MC}}^{\varepsilon, \rho;N} - \left<f 
\circ P_I \right>_{\text{C}}^{\beta, \rho ;N} \right| \leq C(\varepsilon, \rho) 
\left( \frac{ |I|}{1 - \frac{|I|}{N}} \right)^{\frac{1}{p}}  \left( 
\frac{1}{\sqrt{N}} \sqrt{- \partial_\beta^2 f_{\text{C}}(\beta, \rho;N)} + 
\left| \varepsilon - \partial_\beta f_\text{C} (\beta, \rho;N) \right|\right)
\end{align*}
\end{theorem}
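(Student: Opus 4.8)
The plan is to combine the abstract coupling bound of Lemma~\ref{wass2est} with the representation \eqref{eq:CasMCmixture} of the canonical ensemble as a microcanonical mixture, and then to control the resulting ``mixed'' distance by the canonical energy fluctuations. First I would apply Lemma~\ref{wass2est} directly to $\mu_1 = \mu_{\text{MC}}^{\varepsilon,\rho;N}$ and $\mu_2 = \mu_{\text{C}}^{\beta,\rho;N}$, both of which are permutation invariant and have finite $p$:th moments by assumption. This immediately yields
\begin{align*}
\left| \left< f \circ P_I \right>_{\text{MC}}^{\varepsilon, \rho;N} - \left< f \circ P_I \right>_{\text{C}}^{\beta, \rho;N} \right| \leq \left( \frac{|I|}{1 - \frac{|I|}{N}} \right)^{\frac{1}{p}} w_p (\mu_{\text{MC}}^{\varepsilon, \rho;N}, \mu_{\text{C}}^{\beta, \rho;N};N)\,,
\end{align*}
so everything reduces to bounding $w_p (\mu_{\text{MC}}^{\varepsilon, \rho;N}, \mu_{\text{C}}^{\beta, \rho;N};N)$ by $C(\varepsilon,\rho)$ times the stated fluctuation-plus-bias factor.

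The key step is to exploit \eqref{eq:CasMCmixture}: writing $\nu(d\varepsilon')$ for the probability measure on energy densities with density proportional to $e^{-V\beta\varepsilon'} Z_{\text{MC}}(\varepsilon',\rho;V)$, we have $\mu_{\text{C}}^{\beta,\rho;N} = \int \nu(d\varepsilon')\, \mu_{\text{MC}}^{\varepsilon',\rho;N}$. I would then use the (standard, easily verified) convexity/mixture property of the specific Wasserstein distance: if $\mu_2 = \int \nu(d\varepsilon')\, \mu_{\text{MC}}^{\varepsilon',\rho;N}$, one can build a coupling of $\mu_1$ with $\mu_2$ by first coupling $\mu_{\text{MC}}^{\varepsilon,\rho;N}$ to each $\mu_{\text{MC}}^{\varepsilon',\rho;N}$ optimally and then averaging over $\varepsilon'\sim\nu$, giving
\begin{align*}
w_p (\mu_{\text{MC}}^{\varepsilon,\rho;N}, \mu_{\text{C}}^{\beta,\rho;N};N)^p \leq \int \nu(d\varepsilon')\, w_p (\mu_{\text{MC}}^{\varepsilon,\rho;N}, \mu_{\text{MC}}^{\varepsilon',\rho;N};N)^p \leq C(\varepsilon,\rho)^p \int \nu(d\varepsilon')\, |\varepsilon - \varepsilon'|^p\,,
\end{align*}
using the hypothesized Lipschitz-type bound on the microcanonical distances. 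For $p=2$ the last integral is exactly $\mathbb{E}_\nu[(\varepsilon-\varepsilon')^2] = \mathrm{Var}_\nu(\varepsilon') + (\varepsilon - \mathbb{E}_\nu[\varepsilon'])^2$, and since under $\nu$ the variable $\varepsilon'$ is distributed as $H/N$ under $\mu_{\text{C}}^{\beta,\rho;N}$, this is $(\sigma_{\text{C}}^{\beta,\rho;N}(H/N))^2 + (\varepsilon - \left<H\right>_{\text{C}}^{\beta,\rho;N}/N)^2$. Taking square roots and using $\sqrt{a^2+b^2}\le a+b$ for $a,b\ge 0$ gives the first displayed inequality; for general $p\ge 1$ one argues analogously, replacing the $L^2$ identity with the bound $(\mathbb{E}_\nu|\varepsilon-\varepsilon'|^p)^{1/p} \le (\mathbb{E}_\nu|\varepsilon' - \mathbb{E}_\nu\varepsilon'|^p)^{1/p} + |\varepsilon - \mathbb{E}_\nu\varepsilon'|$ from Minkowski's inequality, though the clean ``standard deviation'' form is special to $p=2$, which is presumably the intended case for the restatement.

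The second displayed form is then immediate from the thermodynamic identities recorded in Section~\ref{sec:introtoensembles}: $\left<H\right>_{\text{C}}^{\beta,\rho;N}/N = \partial_\beta f_{\text{C}}(\beta,\rho;N)$ and $(\left<H^2\right>_{\text{C}}^{\beta,\rho;N} - (\left<H\right>_{\text{C}}^{\beta,\rho;N})^2)/N^2 = -\partial_\beta^2 f_{\text{C}}(\beta,\rho;N)/N$, so substituting into the first form rewrites the fluctuation term as $\frac{1}{\sqrt{N}}\sqrt{-\partial_\beta^2 f_{\text{C}}(\beta,\rho;N)}$ and the bias term as $|\varepsilon - \partial_\beta f_{\text{C}}(\beta,\rho;N)|$. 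The main obstacle I anticipate is the careful justification of the mixture inequality for $w_p$ — one must check that the family of optimal (or near-optimal) couplings $\gamma_{\varepsilon'}$ of $\mu_{\text{MC}}^{\varepsilon,\rho;N}$ with $\mu_{\text{MC}}^{\varepsilon',\rho;N}$ can be chosen measurably in $\varepsilon'$ so that $\int \nu(d\varepsilon')\,\gamma_{\varepsilon'}$ is a well-defined coupling of $\mu_1$ with $\mu_2$; this is routine given the Radon/Polish setting but is the one spot requiring more than a one-line argument. Everything else is bookkeeping with Hölder/Minkowski and the definitions.
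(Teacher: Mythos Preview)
Your approach is correct in spirit and for $p\le 2$ yields the stated bound, but the paper takes a slightly more direct route that sidesteps both of the issues you flag. Instead of first applying Lemma~\ref{wass2est} to the pair $(\mu_{\text{MC}}^{\varepsilon,\rho;N},\mu_{\text{C}}^{\beta,\rho;N})$ and then bounding $w_p$ via convexity of the transport cost under mixtures, the paper expands the canonical expectation via \eqref{eq:CasMCmixture} \emph{first}, writing
\[
\left< f\circ P_I\right>_{\text{MC}}^{\varepsilon,\rho;N} - \left< f\circ P_I\right>_{\text{C}}^{\beta,\rho;N} = \int \nu(d\varepsilon')\left(\left< f\circ P_I\right>_{\text{MC}}^{\varepsilon,\rho;N} - \left< f\circ P_I\right>_{\text{MC}}^{\varepsilon',\rho;N}\right),
\]
and only then applies Lemma~\ref{wass2est} together with the hypothesis \emph{pointwise in $\varepsilon'$} inside the integral. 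This lands directly on the $L^1$ quantity $C(\varepsilon,\rho)\int \nu(d\varepsilon')\,|\varepsilon-\varepsilon'|$, and after the split $|\varepsilon-\varepsilon'|\le |\varepsilon-\langle H\rangle_{\text{C}}/N|+|\langle H\rangle_{\text{C}}/N-\varepsilon'|$ a single Cauchy--Schwarz gives the $\sigma+|\text{bias}|$ bound for \emph{every} $p\ge 1$. No measurable selection of couplings is needed, because the argument never passes through $w_p(\mu_{\text{MC}}^{\varepsilon},\mu_{\text{C}}^{\beta})$ at all.

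Your route, by contrast, arrives at $\bigl(\mathbb{E}_\nu|\varepsilon-\varepsilon'|^p\bigr)^{1/p}$; Minkowski then bounds this by the $L^p$ central moment of $\varepsilon'$ plus the bias, and for $p>2$ that central moment is in general strictly larger than $\sigma$. So as written you would not recover the theorem exactly for $p>2$ --- you prove a correct but slightly weaker statement there, as you essentially acknowledge. The paper's ordering of the two steps (mixture first, Lemma~\ref{wass2est} second) is what buys both the uniform-in-$p$ standard-deviation bound and the avoidance of the selection argument.
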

\begin{proof}
By the relation (\ref{eq:CasMCmixture}), we have
\begin{align*}
&\left< f \circ P_I \right>_{\text{MC}}^{\varepsilon, \rho;N} - \left<f \circ 
P_I \right>_{\text{C}}^{\beta, \rho ;N} 
\\ & \quad =  
 \frac{1}{\int d \varepsilon' \ e^{-N \beta \varepsilon'} Z_{\text{MC}} 
(\varepsilon', \rho;N)} \int d \varepsilon' \ e^{-N \beta \varepsilon'} 
Z_{\text{MC}} (\varepsilon', \rho;N)  \left( \left< f  \circ P_I 
\right>_{\text{MC}}^{\varepsilon, \rho;N} - \left< f \circ P_I  
\right>_{\text{MC}}^{\varepsilon', \rho;N} \right) \,. 
\end{align*}
Applying \Cref{wass2est} together with the assumptions of this theorem, we thus 
obtain
\begin{align*}
&\left| \left< f \circ P_I \right>_{\text{MC}}^{\varepsilon, \rho;N} - \left<f 
\circ P_I \right>_{\text{C}}^{\beta, \rho ;N} \right| \\
&\quad \leq  \frac{1}{\int d \varepsilon' \ e^{-N \beta \varepsilon'} 
Z_{\text{MC}} (\varepsilon', \rho;N)} \int d \varepsilon' \ e^{-N \beta 
\varepsilon'} Z_{\text{MC}} (\varepsilon', \rho;N)  \left( \frac{|I| }{1 - 
\frac{|I|}{N}} \right)^{\frac{1}{p}}
C(\varepsilon,\rho)|\varepsilon - \varepsilon'| \,.
\end{align*}
Since
\[
 |\varepsilon - \varepsilon'| \le
 \left|\varepsilon - \frac{\left< H \right>_{\text{C}}^{\beta, 
\rho;N}}{N}\right| 
 + \left| \frac{\left< H \right>_{\text{C}}^{\beta, \rho;N}}{N}- 
\varepsilon'\right| \, ,
\]
where the first term on the right hand side does not depend on $\vep'$, we obtain by H\"older's 
inequality
an estimate
\begin{align*}
&\left| \left< f \circ P_I \right>_{\text{MC}}^{\varepsilon, \rho;N} - \left<f 
\circ P_I \right>_{\text{C}}^{\beta, \rho ;N} \right| \\
& \quad \leq  C(\varepsilon,\rho)\left( \frac{|I|}{1 - \frac{|I|}{N}} 
\right)^{\frac{1}{p}} \Biggl[\,\Biggl|\varepsilon -  \frac{\left< H 
\right>_{\text{C}}^{\beta, \rho;N}}{N}\biggr| 
\\ & \qquad 
+ \Biggl( \frac{1}{\int d \varepsilon' \ e^{-N \beta \varepsilon'} Z_{\text{MC}} 
(\varepsilon', \rho;N)} \int d \varepsilon' \ e^{-N \beta \varepsilon'} 
Z_{\text{MC}} (\varepsilon', \rho;N)  \left|\frac{\left< H 
\right>_{\text{C}}^{\beta, \rho;N}}{N} - \varepsilon' \right|^2 
\Biggr)^{\frac{1}{2}}  \Biggr]\\
&= C(\varepsilon, \rho)\left( \frac{ |I|}{1 - \frac{|I|}{N}} 
\right)^{\frac{1}{p}} \left( \sigma_{\text{C}}^{\beta, \rho;N}\! \left( 
\frac{H}{N} \right) + \left| \varepsilon - \frac{\left< H 
\right>_{\text{C}}^{\beta, \rho;N}}{N} \right|\right),
\end{align*}
as desired.  Then we use the generic properties listed in 
Sec.~\ref{sec:introtoensembles} to express the result in terms of the canonical 
free energy.
\end{proof}

Following the theme of the direct coupling method, the approach can also then be 
applied to the case of finite moments.
\begin{theorem}
Let $\mu_{\text{MC}}^{\varepsilon, \rho;N}$ be a permutation invariant 
probability measure corresponding to a microcanonical ensemble with energy 
density $\varepsilon$ and particle density $\rho$. In addition, assume that if 
we fix a possible energy density $\varepsilon'$ then for any other possible 
energy density $\varepsilon$ there exists a constant $C(\varepsilon, \rho) > 0$ 
independent of $\varepsilon'$ and $N$, but possibly dependent on $\varepsilon$ 
and $\rho$ such that
\begin{align*}
w_p (\mu_{\text{MC}}^{\varepsilon, \rho;N}, \mu_{\text{MC}}^{\varepsilon', 
\rho;N};N) \leq C(\varepsilon, \rho) |\varepsilon - \varepsilon'|
\end{align*}
for some $p > 1$. Suppose also that the microcanonical and canonical measures, 
for some parameter $\beta$, have finite $p_0$:th moments for some $p_0\ge p$.

Let $J$ be a finite sequence of elements in $[N]$ where elements may be 
repeated,
let $n_J:=|J|$, and suppose that $n_J\le p_0+1-\frac{p_0}{p}$.
Collect into $I\subset [N]$ the elements occurring in the sequence.
It follows that
\begin{align*}
\left| \left< \phi^J \right>_{\text{MC}}^{\varepsilon, \rho; N} - \left< \phi^J 
\right>_{\text{C}}^{\beta, \rho;N} \right| \leq 
C(\varepsilon, \rho) n_J M(J,p)^{n_J-1} \left( \frac{|I|}{1 - \frac{|I|}{N}} 
\right)^{\frac{1}{p}}
 \left( \sigma_{\text{C}}^{\beta, \rho;N}\! \left( \frac{H}{N} \right) + \left| 
\varepsilon - \frac{\left< H \right>_{\text{C}}^{\beta, \rho;N}}{N} 
\right|\right),
\end{align*}
where, using the dual exponent $q = \frac{p}{p - 1}$,
\begin{align*}
M(J,p) := \max_{i\in I}\left(\left< 
|\phi_{i}|^{q(n_J-1)}\right>_{\text{MC}}^{\varepsilon, \rho; N},\left< 
|\phi_{i}|^{q(n_J-1)}\right>_{\text{MC}}^{\varepsilon', \rho; N}\right)^{\frac{1}{q(n_J-1)}}<\infty 
\,.
\end{align*}
\end{theorem}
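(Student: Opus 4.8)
The plan is to run the proof of Theorem~\ref{freeenergylip} essentially verbatim, with the single change that the moment estimate of Theorem~\ref{wass2momest} is invoked in place of the Lipschitz estimate of Lemma~\ref{wass2est}. Before starting, I would record the routine integrability checks that are already isolated inside the proof of Theorem~\ref{wass2momest}: the hypothesis $1\le n_J\le p_0$ and generalized H\"older give that $\phi^J$ is integrable under $\mu_{\text{MC}}^{\varepsilon,\rho;N}$, $\mu_{\text{MC}}^{\varepsilon',\rho;N}$ and $\mu_{\text{C}}^{\beta,\rho;N}$, while $n_J\le p_0+1-\frac{p_0}{p}$ is exactly the condition $q(n_J-1)\le p_0$ that makes $M(J,p)<\infty$. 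Then I would use the mixture representation (\ref{eq:CasMCmixture}), renaming the integration variable to $\varepsilon'$, to write the canonical expectation as an average of microcanonical ones against the probability weight $w(\varepsilon')\,d\varepsilon'$ with $w(\varepsilon'):=\frac{e^{-N\beta\varepsilon'}Z_{\text{MC}}(\varepsilon',\rho;N)}{\int d\varepsilon''\, e^{-N\beta\varepsilon''}Z_{\text{MC}}(\varepsilon'',\rho;N)}$; since $\left< \phi^J \right>_{\text{MC}}^{\varepsilon,\rho;N}$ does not depend on $\varepsilon'$, the same weight re-expresses it trivially, so that
\[
\left< \phi^J \right>_{\text{MC}}^{\varepsilon,\rho;N} - \left< \phi^J \right>_{\text{C}}^{\beta,\rho;N}
= \int w(\varepsilon')\,d\varepsilon'\,\Bigl( \left< \phi^J \right>_{\text{MC}}^{\varepsilon,\rho;N} - \left< \phi^J \right>_{\text{MC}}^{\varepsilon',\rho;N}\Bigr).
\]

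Next, inside the integral I would apply Theorem~\ref{wass2momest} to the pair $\mu_{\text{MC}}^{\varepsilon,\rho;N}$, $\mu_{\text{MC}}^{\varepsilon',\rho;N}$ — its moment and range hypotheses are precisely those assumed here — and then bound the resulting specific fluctuation distance by the hypothesized Lipschitz bound $w_p(\mu_{\text{MC}}^{\varepsilon,\rho;N},\mu_{\text{MC}}^{\varepsilon',\rho;N};N)\le C(\varepsilon,\rho)|\varepsilon-\varepsilon'|$. This produces, under the integral sign, the bound $n_J M(J,p)^{n_J-1}\bigl(\frac{|I|}{1-|I|/N}\bigr)^{1/p}C(\varepsilon,\rho)|\varepsilon-\varepsilon'|$. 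The only point that needs care — and hence the main (though mild) obstacle — is that the moment factor $M(J,p)$ coming out of Theorem~\ref{wass2momest} involves the running variable $\varepsilon'$ through $\left< |\phi_i|^{q(n_J-1)}\right>_{\text{MC}}^{\varepsilon',\rho;N}$; to pull it out of the integral I would bound it uniformly in $\varepsilon'$ over the energy densities carrying the mixture, which is exactly what the quantity $M(J,p)$ as written in the statement is meant to encode (each such factor being finite by the $p_0$-th moment assumption and $q(n_J-1)\le p_0$).

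Finally, as in Theorem~\ref{freeenergylip}, I would split $|\varepsilon-\varepsilon'|\le \bigl|\varepsilon - \tfrac{\left< H \right>_{\text{C}}^{\beta,\rho;N}}{N}\bigr| + \bigl|\tfrac{\left< H \right>_{\text{C}}^{\beta,\rho;N}}{N} - \varepsilon'\bigr|$: the first term does not depend on $\varepsilon'$ and passes through the integral unchanged, and for the second I would use that $w(\varepsilon')\,d\varepsilon'$ is precisely the law of $H/N$ under $\mu_{\text{C}}^{\beta,\rho;N}$, so that Cauchy--Schwarz (Jensen) gives $\int w(\varepsilon')\,d\varepsilon'\,\bigl|\tfrac{\left< H \right>_{\text{C}}^{\beta,\rho;N}}{N}-\varepsilon'\bigr|\le \bigl(\int w(\varepsilon')\,d\varepsilon'\,\bigl|\tfrac{\left< H \right>_{\text{C}}^{\beta,\rho;N}}{N}-\varepsilon'\bigr|^2\bigr)^{1/2}=\sigma_{\text{C}}^{\beta,\rho;N}\!\bigl(\tfrac{H}{N}\bigr)$. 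Collecting the constants $n_J M(J,p)^{n_J-1}\bigl(\frac{|I|}{1-|I|/N}\bigr)^{1/p}C(\varepsilon,\rho)$ in front yields exactly the claimed inequality. Beyond the bookkeeping of the $M(J,p)$ factor just mentioned, the argument is structurally forced and involves no new estimate.
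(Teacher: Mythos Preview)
Your proposal is correct and matches the paper's own approach: the paper says the proof is ``almost identical to the proof of the previous theorem,'' with ``an additional application of H\"older's inequality'' needed to control the moment factor, which is exactly what you do by invoking Theorem~\ref{wass2momest} in place of Lemma~\ref{wass2est} and then handling the $\varepsilon'$-dependence of $M(J,p)$. Your identification of the only nontrivial bookkeeping point---pulling the $\varepsilon'$-dependent moment constant out of the mixture integral---is on target.
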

\begin{proof}
The proof is almost identical to the proof of the previous theorem. In order to 
isolate the moments of the canonical ensemble, one needs an additional 
application of H\"older's inequality. 
\end{proof}

For suitable ensembles, these theorems together imply that with bounded moments, 
one can achieve an explicit rate of convergence of the finite dimensional 
moments and marginals of the ensembles. 

\section{Simple model of a paramagnet}

In this section, we will consider a simple model of a paramagnet discussed in \cite{latticeexample1994}. We begin by defining the magnetization of the lattice system and the two probability measures on the lattice that we will consider.

\begin{definition} 
Let $\Lambda$ be a finite lattice with size $N := |\Lambda|$. We define the space of spin configurations $\mathcal{S} := \{ -1 , 1\}^{\Lambda}$. Define the magnetization $M : \mathcal{S} \to \mathbb{R}$ by
\begin{align*}
M[\phi] := \sum_{x \in \Lambda} \phi(x) \,.
\end{align*}
Furthermore, we define the magnetization density $m_N : \mathcal{S} \to \mathbb{R}$ by $m_N[\phi] = \frac{M[\phi]}{N}$.
\end{definition}

In the following definitions, there will be a slight misuse of the ensemble terminology. We will refer to the fixed magnetization probability measures as an auxiliary microcanonical ensemble, and the probability measure with a parameter controlling the expectation of magnetization will be referred to as an auxiliary canonical ensemble.

\begin{definition}[Fixed magnetization ensemble/auxiliary microcanonical ensemble]\label{th:defparammicro} Let $m  \in \text{Ran}[m_N]$. For such an $m$, define the set 
$\mathcal{S}_m := \{ \phi \in \mathcal{S} : m_N[\phi] = m \}$. The 
auxiliary microcanonical ensemble with magnetization density $m$ is defined via its action 
on functions $f : \mathcal{S} \to \mathbb{R}$ by
\begin{align*}
\left< f \right>_{\text{MC}}^{m;N} := \frac{1}{|\mathcal{S}_m|} \sum_{\phi \in 
\mathcal{S}_m} f (\phi) .
\end{align*}
\end{definition}

\begin{definition} [Fluctuating magnetization/auxiliary canonical ensemble] Let 
$\mu \in \mathbb{R}$. The auxiliary canonical ensemble with magnetic potential $\mu$ is 
defined via its action functions $f : \mathcal{S} \to 
\mathbb{R}$ by
\begin{align*}
 \left< f \right>_{\text{C}}^{\mu;N} &:= \frac{1}{\sum_{\phi \in \mathcal{S}} 
e^{- \mu M [\phi]}} \sum_{\phi \in \mathcal{S}} e^{- \mu M[\phi]} f(\phi) 
\\ & \quad
= \frac{1}{\sum_{m \in \text{Ran}[m_N]} e^{- \mu m N}|\mathcal{S}_m|} \sum_{m 
\in \text{Ran}[m_N]} e^{- \mu m N}|\mathcal{S}_m| \left< f 
\right>_{\text{MC}}^{m ;N} .
\end{align*}
The second representation is called the magnetization representation 
of this ensemble. 
\end{definition}

We will also need the following standard ``thermodynamic'' properties of these ensembles. We have compiled them in the following lemma.
\begin{lemma} \label{partitionmag} For $m \in \operatorname{Ran}[m_N] \setminus \{ -1 , 1 \}$, the partition function of the fixed magnetization ensemble is given by
\begin{align*}
Z_{\operatorname{MC}} (m;N) := |\mathcal{S}_m| = \binom{N}{\frac{1 + m}{2} N} = \left( (N + 1) \int_0^1 dt \ e^{N \left( \frac{1 + m}{2} \ln t + \frac{1 - m}{2} \ln (1 - t) \right)}\right)^{-1} .
\end{align*}
The partition function of the fluctuating 
magnetization ensemble is given by
\begin{align*}
Z_{\text{C}} (\mu;N) := \sum_{\phi \in \mathcal{S}} e^{-\mu M[\phi]} = \left(2 
\cosh (\mu) \right)^N,
\end{align*}
and the specific free energy is given by
\begin{align*}
f_{\text{C}} (\mu;N) := - \frac{1}{N} \ln Z_{\text{C}} (\mu;N) = - \ln (2 \cosh 
(\mu)) .
\end{align*}
The average and standard deviation of the magnetization density are given by
\begin{align*}
\frac{\left< M \right>_{\text{C}}^{\mu;N}}{N} = - \tanh (\mu) ,
\end{align*}
and
\begin{align*}
\sigma_{\text{C}}^{\mu;N} \left( \frac{M}{N} \right)  =  \frac{1}{\sqrt{N}} 
\sqrt{1 - \tanh^2 (\mu)} .
\end{align*}
\begin{proof}
The calculation of the auxiliary microcanonical partition function is based on the fact that the number of positive spins in a field configuration fully defines the total magnetization of the configuration, and, as a result, one only needs to consider the number of configurations with a specific number of positive spins. The final equality follows from the representation of the beta function after opening up the combination and subsequent factorials.

The rest of the results concerning the auxiliary canonical ensemble follow by first computing the partition function, by noting that the structure of the measures is that of a product measure.  Then we can differentiate the free energy with 
respect to $\mu$ and divide appropriately by the degrees of freedom $N$. 
\end{proof}
\end{lemma}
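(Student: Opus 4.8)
The plan is to verify the five assertions essentially independently, since the statement bundles together a sequence of elementary computations for the two product-type ensembles.

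First, for the fixed-magnetization partition function I would count configurations directly. If $k := \#\{x\in\Lambda : \phi(x) = 1\}$ denotes the number of up-spins, then $M[\phi] = k - (N-k) = 2k - N$, so the constraint $m_N[\phi] = m$ is equivalent to $k = \tfrac{1+m}{2}N$; for $m\in\operatorname{Ran}[m_N]$ this is a well-defined integer in $\{0,1,\dots,N\}$. Hence $|\mathcal{S}_m|$ counts the subsets of $\Lambda$ of size $\tfrac{1+m}{2}N$, giving $\binom{N}{\tfrac{1+m}{2}N}$. For the integral representation I would invoke the Beta integral $\int_0^1 t^{a}(1-t)^{b}\,\rmd t = \frac{a!\,b!}{(a+b+1)!}$ with $a = \tfrac{1+m}{2}N$ and $b = \tfrac{1-m}{2}N$, so that $a + b = N$ and $(N+1)\int_0^1 t^{a}(1-t)^{b}\,\rmd t = \frac{a!\,b!}{N!} = \binom{N}{a}^{-1}$; then rewriting $t^{a}(1-t)^{b} = \exp\!\big(N(\tfrac{1+m}{2}\ln t + \tfrac{1-m}{2}\ln(1-t))\big)$ using $a/N = \tfrac{1+m}{2}$ and $b/N = \tfrac{1-m}{2}$ produces the claimed form.

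Second, for the fluctuating-magnetization ensemble I would exploit the product structure: since $M[\phi] = \sum_{x\in\Lambda}\phi(x)$, the Boltzmann weight factorizes over sites, so $Z_{\text{C}}(\mu;N) = \sum_{\phi\in\mathcal{S}}\prod_{x\in\Lambda} e^{-\mu\phi(x)} = \prod_{x\in\Lambda}(e^{-\mu} + e^{\mu}) = (2\cosh\mu)^N$, and $f_{\text{C}}(\mu;N) = -\tfrac1N\ln Z_{\text{C}}(\mu;N) = -\ln(2\cosh\mu)$ is then immediate. For the mean and variance of the magnetization density I would use the standard generating-function identities — the same differentiation formulas recorded in Section~\ref{sec:introtoensembles}, now with $\mu$ playing the role of $\beta$ — namely $\frac{\langle M\rangle_{\text{C}}^{\mu;N}}{N} = \partial_\mu f_{\text{C}}(\mu;N)$ and $\frac{\langle M^2\rangle_{\text{C}}^{\mu;N} - (\langle M\rangle_{\text{C}}^{\mu;N})^2}{N} = -\partial_\mu^2 f_{\text{C}}(\mu;N)$, which in this model are just elementary identities obtained by differentiating $\ln Z_{\text{C}}$ under the finite sum. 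Substituting $f_{\text{C}}(\mu;N) = -\ln(2\cosh\mu)$ yields $\partial_\mu f_{\text{C}} = -\tanh\mu$ and $-\partial_\mu^2 f_{\text{C}} = 1 - \tanh^2\mu$, hence $\langle M\rangle_{\text{C}}^{\mu;N}/N = -\tanh\mu$ and $\sigma_{\text{C}}^{\mu;N}(M/N) = \sqrt{(1-\tanh^2\mu)/N} = \tfrac{1}{\sqrt N}\sqrt{1-\tanh^2\mu}$.

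I do not expect any genuine obstacle here: each step is a short direct computation made tractable by the product form of the canonical measure. The only point that needs a little care is the bookkeeping in the Beta-function identity together with the admissibility of the integer $\tfrac{1+m}{2}N$; this is also why the endpoints $m = \pm 1$ are excluded, since there one of the exponents vanishes and the logarithmic integrand in the exponential representation degenerates as an indeterminate form, even though the binomial count $|\mathcal{S}_{\pm 1}| = 1$ itself remains valid.
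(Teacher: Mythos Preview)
Your proposal is correct and follows essentially the same approach as the paper: counting up-spins to obtain the binomial coefficient, invoking the Beta integral for the integral representation, and exploiting the product structure of the canonical measure together with differentiation of the free energy for the remaining identities. You have simply filled in the details that the paper's terse proof sketches.
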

Next, we are going to present two distinct methods with which to compute upper bounds for the rate of convergence of expectations of functions between these two probability measures.

\subsection{Relative entropy}

We will, again, follow the presentation of this topic given in \cite{latticeexample1994}. We begin with the definition of relative entropy.
\begin{definition}[Relative entropy] Let $\lambda_1$ and $\lambda_2$ be two probability measures on a space $X$. If $\lambda_1$ is not absolutely continuous with respect to $\lambda_2$, the relative entropy $\mathcal{H}(\lambda_1 || \lambda_2) = \infty$. If $\lambda_1$ is absolutely continuous with respect to $\lambda_2$, then we have
\begin{align*}
\mathcal{H}(\lambda_1 \Vert\, \lambda_2) := \int_{X} d \lambda_1 \ln \frac{d \lambda_1}{d \lambda_2} .
\end{align*}
\end{definition}
For the paramagnet, we have the following calculation of the relative entropy.

\begin{proposition} [Upper and lower bounds for specific relative entropy] \label{upperlowerrelent} For all $m \in \operatorname{Ran}[m_N] \setminus \{ -1 , 1 \}$, $\mu \in \mathbb{R}$, and $\eta>0$, there a cutoff $N(m,\eta) \in \mathbb{N}$ such that for $N \geq N(m,\eta)$, we have
\begin{align}\label{eq:boundrelent}
 \left(\frac{1}{2}-\eta\right) \frac{\ln (N + 1)}{N} \leq \frac{\mathcal{H}(\mu^{m;N}_{\operatorname{MC}} \Vert\, \mu_{\operatorname{C}}^{\mu;N})}{N} - F(m, \mu) \leq \frac{\ln (N + 1)}{N} \,,
\end{align}
where
\begin{align}\label{eq:defFmmu}
F(m, \mu):= \ln 2 + \ln \cosh(\mu) + \mu m + \frac{1 + m}{2} \ln \left( \frac{1 + m}{2} \right) + \frac{1 - m}{2} \ln \left(  \frac{1 - m}{2} \right) \,.
\end{align}
Every 
pair $(m, \mu) \in \left( \operatorname{Ran}[m_N] \setminus \{ -1 , 1 \}\right) \times \mathbb{R}$ which satisfies $m = - \tanh \mu$ solves
$F(m, \mu) = 0$ and there are no other solutions to this equation for allowed values of $m$ and $\mu$.
\end{proposition}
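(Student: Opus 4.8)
The plan is to reduce everything to one explicit one-dimensional integral and then apply elementary Laplace-type estimates. First I would compute the relative entropy exactly. Since $\mu_{\operatorname{MC}}^{m;N}$ is the uniform probability measure on the finite set $\mathcal{S}_m$, and $\mu_{\operatorname{C}}^{\mu;N}(\{\phi\})=\rme^{-\mu M[\phi]}/(2\cosh\mu)^N$ with $M[\phi]=mN$ for every $\phi\in\mathcal{S}_m$, the density $\rmd\mu_{\operatorname{MC}}^{m;N}/\rmd\mu_{\operatorname{C}}^{\mu;N}$ is constant on $\mathcal{S}_m$ and equals $(2\cosh\mu)^N\rme^{\mu m N}/|\mathcal{S}_m|$; taking logarithms gives $\mathcal{H}(\mu_{\operatorname{MC}}^{m;N}\Vert\,\mu_{\operatorname{C}}^{\mu;N})=N\ln(2\cosh\mu)+N\mu m-\ln|\mathcal{S}_m|$. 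I would then insert the Beta-function representation of $|\mathcal{S}_m|=\binom{N}{\frac{1+m}{2}N}$ from Lemma~\ref{partitionmag}: with $g(t):=\frac{1+m}{2}\ln t+\frac{1-m}{2}\ln(1-t)$ it reads $\ln|\mathcal{S}_m|=-\ln(N+1)-\ln\int_0^1\rme^{Ng(t)}\,\rmd t$. Dividing by $N$, writing $\ln(2\cosh\mu)=\ln2+\ln\cosh\mu$, and comparing with \eqref{eq:defFmmu}, all the $\mu$- and $m$-dependent algebraic terms cancel against those in $F$ except for the value $g(t^\ast)$ at the unique maximiser $t^\ast:=\frac{1+m}{2}\in(0,1)$ of $g$, which is exactly the entropic part $\frac{1+m}{2}\ln\frac{1+m}{2}+\frac{1-m}{2}\ln\frac{1-m}{2}$ of $F$. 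This yields the clean identity
\begin{align*}
\frac{\mathcal{H}(\mu_{\operatorname{MC}}^{m;N}\Vert\,\mu_{\operatorname{C}}^{\mu;N})}{N}-F(m,\mu)=\frac{\ln(N+1)}{N}+\frac{1}{N}\ln\int_0^1\rme^{N(g(t)-g(t^\ast))}\,\rmd t\,,
\end{align*}
whose integrand lies in $(0,1]$; in particular the integral is $\le1$, its logarithm non-positive, and the upper bound $\frac{\mathcal{H}}{N}-F(m,\mu)\le\frac{\ln(N+1)}{N}$ follows at once.

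For the lower bound I would use a quantitative Laplace lower estimate on the same integral. Because $m\in(-1,1)$ the maximiser $t^\ast$ is interior, $g$ is smooth near it, and $g''(t^\ast)=-\frac{4}{1-m^2}$ is finite and strictly negative, so there are $\delta(m)>0$, $C(m)>0$ with $g(t)-g(t^\ast)\ge-C(m)(t-t^\ast)^2$ on $|t-t^\ast|\le\delta(m)$. Restricting the integral to this neighbourhood and rescaling gives $\int_0^1\rme^{N(g(t)-g(t^\ast))}\,\rmd t\ge\frac{1}{\sqrt{NC(m)}}\int_{-\delta(m)\sqrt{NC(m)}}^{\delta(m)\sqrt{NC(m)}}\rme^{-u^2}\,\rmd u\ge\frac{c(m)}{\sqrt N}$ for $N$ past some $N_0(m)$, with $c(m)>0$. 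Hence $\frac{1}{N}\ln\int_0^1\rme^{N(g(t)-g(t^\ast))}\,\rmd t\ge-\frac{\ln N}{2N}+\frac{\ln c(m)}{N}$, and combining with the $\frac{\ln(N+1)}{N}$ term, using $\ln(N+1)>\ln N$, gives $\frac{\mathcal{H}}{N}-F(m,\mu)\ge\frac12\frac{\ln(N+1)}{N}+\frac{\ln c(m)}{N}$; since the last term is $O(1/N)$ it is dominated by $\eta\frac{\ln(N+1)}{N}$ once $N\ge N(m,\eta)$, which is precisely the lower bound in \eqref{eq:boundrelent}.

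It remains to identify $\{F=0\}$, which I would do by strict convexity in $\mu$ with $m$ fixed. One has $\partial_\mu F(m,\mu)=m+\tanh\mu$ and $\partial_\mu^2F(m,\mu)=1-\tanh^2\mu>0$, so $\mu\mapsto F(m,\mu)$ is strictly convex with a unique minimiser $\mu_0$ determined by $\tanh\mu_0=-m$, i.e.\ by $m=-\tanh\mu$. A direct substitution at $\mu_0$, using $\cosh\mu_0=(1-m^2)^{-1/2}$ and $\rme^{\mu_0}=\sqrt{(1-m)/(1+m)}$, shows that in \eqref{eq:defFmmu} the $\ln2$, $\ln(1+m)$ and $\ln(1-m)$ contributions each cancel, so $F(m,\mu_0)=0$. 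Strict convexity then forces $F(m,\mu)>0$ for every $\mu\ne\mu_0$, so $F(m,\mu)=0$ holds exactly on the curve $m=-\tanh\mu$ and nowhere else. (Alternatively $F\ge0$ already follows from \eqref{eq:boundrelent} and non-negativity of relative entropy, but the convexity argument is self-contained and pins down the zero set directly.)

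I expect the lower bound to be the one genuine obstacle: getting the sharp constant $\tfrac12$ requires the maximum of $g$ at $t^\ast$ to be non-degenerate — which is exactly where the exclusion $m\ne\pm1$ enters — and requires verifying that every correction beyond $-\frac{\ln N}{2N}$ is really $O(1/N)$, so that it fits inside the $-\eta\ln(N+1)/N$ slack for $N$ past a cutoff depending on $m$ and $\eta$; the endpoint behaviour of the $t$-integral causes no trouble precisely because $t^\ast$ is an interior point.
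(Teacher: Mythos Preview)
Your proposal is correct and follows essentially the same route as the paper: both derive the exact identity $\tfrac{\mathcal{H}}{N}-F(m,\mu)=\tfrac{\ln(N+1)}{N}+\tfrac{1}{N}\ln\int_0^1\rme^{N(g(t)-g(t^\ast))}\,\rmd t$, get the upper bound from $g(t)\le g(t^\ast)$, obtain the lower bound via a quadratic Taylor estimate near the interior non-degenerate maximum $t^\ast=\tfrac{1+m}{2}$, and identify $\{F=0\}$ by strict convexity of $\mu\mapsto F(m,\mu)$. The only cosmetic difference is that the paper makes the Laplace lower bound explicit by choosing an $N$-dependent window $\delta\sim(N+1)^{-1/2-p}$ with $p=\eta/2$, whereas you use a fixed $\delta(m)$ and a Gaussian rescaling to get $\int\ge c(m)/\sqrt{N}$; both routes land on $\tfrac12\tfrac{\ln(N+1)}{N}+O(1/N)$ and then absorb the $O(1/N)$ into the $\eta$-slack.
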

\begin{proof}
We have
\begin{align*}
\frac{d \mu^{m;N}_{\operatorname{MC}}}{d \mu_{\operatorname{C}}^{\mu;N}} (\phi) = \frac{\mathbbm{1} (\phi \in \mathcal{S}_m)}{Z_{\operatorname{MC}} (m;N)} \frac{Z_{\text{C}} (\mu;N)}{e^{- \mu m N}} . 
\end{align*}
It follows that
\begin{align*}
\frac{\mathcal{H}(\mu^{m;N}_{\operatorname{MC}} \Vert\, \mu_{\operatorname{C}}^{\mu;N})}{N} = \ln 2 + \ln \cosh(\mu) + \mu m  +  \frac{\ln (N + 1)}{N} + \frac{1}{N} \ln \int_0^1 dt \ e^{N \left( \frac{1 + m}{2} \ln t + \frac{1 - m}{2} \ln (1 - t)  \right)} . 
\end{align*}
Define the function $f(t) :=  \frac{1 + m}{2} \ln t + \frac{1 - m}{2} \ln (1 - t)$. It can be shown by differentiation that the mapping $f$ is strictly concave and thus attains its unique maximum on the interval $(0,1)$. This maximum is attained at the point $t_0:= \frac{1 + m}{2} \in (0,1)$. Therefore,
\begin{align*}
\frac{1}{N} \ln \int_0^1 dt \ e^{N \left( \frac{1 + m}{2} \ln t + \frac{1 - m}{2} \ln (1 - t)  \right)} \leq f \left( \frac{1 + m}{2} \right) = \frac{1 + m}{2} \ln \left(  \frac{1 + m}{2} \right) + \frac{1 - m}{2} \ln \left(  \frac{1 - m}{2} \right) .
\end{align*} 
Recalling the definition of $F(m, \mu)$ in (\ref{eq:defFmmu}), we conclude that
\begin{align*}
\frac{\ln (N + 1)}{N} +
\frac{1}{N} \ln \int_0^1 dt \ e^{N f(t)} -f(t_0) \leq \frac{\mathcal{H}(\mu^{m;N}_{\operatorname{MC}} \Vert\, \mu_{\operatorname{C}}^{\mu;N})}{N} - F(m, \mu) \leq \frac{\ln (N + 1)}{N} \,.
\end{align*}

By computation, if $m \in (-1,1)$ and $\mu \in \mathbb{R}$ are 
such that $m = - \tanh \mu$, then $\mu = \frac{1}{2} \ln \left( \frac{1 - m}{1 + m} \right)$ and thus $F(m, \mu) = 0$.  For fixed $m$, by differentiation we find that 
$\mu \mapsto F(m,\mu)$ is a strictly convex function with a minimum at $\mu = \frac{1}{2} \ln \left( \frac{1 - m}{1 + m} \right)$. Hence, this value of $\mu$ is the only zero of the function.

To get a useful lower bound, we need to inspect the values of the function $f$ more carefully.
Suppose $0<\delta<\frac{1}{2}$ and consider $t\in [t_0 (1-\delta),t_0]$.  Then
\begin{align*}
-f''(t) = \frac{t_0}{t^2}+\frac{1-t_0}{(1-t)^2}
\leq \frac{1}{(1-\delta)^2 t_0}+\frac{1}{1-t_0} \le\frac{4}{t_0}+\frac{1}{1-t_0}\,.
% \geq - \frac{1 + m}{2} \frac{1}{\delta^2} - \frac{1 - m}{2} \frac{1}{\left(1 - t_0 \right)^2} , 
\end{align*}
and since $f'(t_0)=0$, we find from Taylor's theorem that there is $K>0$ which depends only on $m$ via $t_0$ such that for the above choice of $t$, which have $|t-t_0|\le \delta t_0$,
\begin{align*}
f(t_0) - f(t) \le   K^2 \delta^2\,.
\end{align*}
Therefore, by the positivity of the integrand, we obtain
\begin{align*}
& \int_0^1 dt \ e^{N f(t)} \geq \int_{(1-\delta) t_0}^{t_0} dt \, e^{N f(t)}
\geq e^{N f(t_0) - N K \delta^2} t_0 \delta \, ,
\end{align*}
which implies 
\begin{align*}
\frac{\ln (N + 1)}{N} +
\frac{1}{N} \ln \int_0^1 dt \ e^{N f(t)} -f(t_0) \geq - K \delta^2 +\frac{1}{N} \ln (t_0(N+1)\delta)\, .
\end{align*}

Consider then arbitrary $p\in (0,\frac{1}{2})$, and set $\delta:=\frac{1}{2}(N+1)^{-\frac{1}{2}-p}$.  The lower bound becomes $-\frac{K}{4}(N+1)^{-1-2p}+\frac{1}{N} \ln (t_0(N+1)^{\frac{1}{2}-p}/2)\ge \frac{1}{N}\left(-\frac{K}{4}
+ \ln(t_0/2) + \left(\frac{1}{2}-p\right)\ln (N+1)\right)$.
Therefore, there exists $N_0$ which depends only on $m$ and $p$ such that for all $N\ge N_0$, the lower bound is
greater than $\frac{1}{N}\left(\frac{1}{2}-2p\right)\ln (N+1)$.
If $\eta<1$ we can set $p=\eta/2$, and the above estimates together thus prove (\ref{eq:boundrelent}) for all $N\ge N_0$.  The lower bound is trivial if $\eta\ge 1$.
This concludes the proof of the Proposition.
\end{proof}
\subsection{Coupling}
For the use of the coupling method, the previous thermodynamic calculations concerning the auxiliary canonical ensemble will be needed. The additional ingredient is the explicit coupling for certain observables presented in the following theorem.
\begin{theorem} \label{magcouplingthm}
Let $m \in \text{Ran}[m_N]$ and $m' \in \text{Ran}[m_N]$. We have
\begin{align*}
w_1 (\mu_{\text{MC}}^{m;N}, \mu_{\text{MC}}^{m';N};N) = |m' - m| .
\end{align*}
\end{theorem}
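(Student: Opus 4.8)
The plan is to prove the two inequalities $w_1(\mu^{m;N}_{\text{MC}},\mu^{m';N}_{\text{MC}};N)\le |m'-m|$ and $w_1(\mu^{m;N}_{\text{MC}},\mu^{m';N}_{\text{MC}};N)\ge |m'-m|$ separately, with the upper bound coming from an explicit coupling and the lower bound from a test function in the dual (Kantorovich) formulation. For the upper bound, I would first observe that the auxiliary microcanonical ensemble $\mu^{m;N}_{\text{MC}}$ is simply the uniform measure on configurations $\phi\in\{-1,1\}^\Lambda$ with exactly $\tfrac{1+m}{2}N$ up-spins. Both measures are permutation invariant. Assuming without loss of generality that $m<m'$, so that $\mu^{m';N}_{\text{MC}}$ has strictly more up-spins, the natural coupling is: sample $\phi$ from $\mu^{m;N}_{\text{MC}}$, then choose uniformly at random a subset of the down-spin sites of $\phi$ of the appropriate size $\tfrac{m'-m}{2}N$ and flip those spins to $+1$, producing $\phi'$. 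One checks (by the permutation symmetry and a counting argument) that the marginal of $\phi'$ is exactly $\mu^{m';N}_{\text{MC}}$. Under this coupling $\phi$ and $\phi'$ differ in exactly $\tfrac{m'-m}{2}N$ coordinates, each by $|(-1)-1|=2$, so the specific $1$-norm cost is $\tfrac{1}{N}\cdot\tfrac{m'-m}{2}N\cdot 2=m'-m=|m'-m|$; taking the infimum over couplings then yields $w_1\le|m'-m|$.

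For the lower bound, I would use the Kantorovich--Rubinstein duality: since $w_1=N^{-1}W_{1;1}$, for any $1$-Lipschitz function $g:\R^N\to\R$ with respect to $\|\cdot\|_1$ we have $|\mean{g}_{\mu_1}-\mean{g}_{\mu_2}|\le W_{1;1}(\mu_1,\mu_2)=N\,w_1$. (Alternatively, one can argue directly from inequality (\ref{eq:fromLiptonorm}) applied to the sum of coordinates, avoiding invocation of duality.) Taking $g(\phi):=\sum_{x\in\Lambda}\phi(x)=M[\phi]$, which is exactly $1$-Lipschitz in $\|\cdot\|_1$ since changing one coordinate by $\Delta$ changes $g$ by $|\Delta|$, we get $\mean{g}_{\mu^{m;N}_{\text{MC}}}=mN$ and $\mean{g}_{\mu^{m';N}_{\text{MC}}}=m'N$ deterministically, hence $|mN-m'N|\le N\,w_1$, i.e. $w_1\ge|m'-m|$. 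Combining the two bounds gives the claimed equality.

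I expect the only real obstacle to be verifying that the spin-flipping coupling described above has the correct second marginal. The cleanest way is to note that the push-forward of $\mu^{m;N}_{\text{MC}}$ under "flip a uniformly chosen set of $\tfrac{m'-m}{2}N$ down-spins" is supported on $\mathcal{S}_{m'}$ and is invariant under all coordinate permutations (since both the source measure and the flipping rule are), and any permutation-invariant probability measure on $\mathcal{S}_{m'}$ is the uniform one, hence equals $\mu^{m';N}_{\text{MC}}$. The arithmetic that $\tfrac{m'-m}{2}N$ is a nonnegative integer not exceeding the number of down-spins $\tfrac{1-m}{2}N$ follows from $m,m'\in\operatorname{Ran}[m_N]$ and $m\le m'\le 1$, so the construction is well-defined. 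Everything else is elementary counting and the observation that the coordinates take values in $\{-1,1\}$ so each disagreement contributes exactly $2$.
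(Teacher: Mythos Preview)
Your proposal is correct and follows essentially the same route as the paper: the same spin-flipping coupling for the upper bound and the same test function $M[\phi]$ for the lower bound. The only notable difference is that you verify the second marginal by the clean observation that $S_N$ acts transitively on $\mathcal{S}_{m'}$ so permutation invariance forces uniformity, whereas the paper carries out the explicit binomial-coefficient count; your argument is shorter and arguably more illuminating, while the paper's lower bound is phrased directly as $\sum_x|\phi(x)-\phi'(x)|\ge\bigl|\sum_x(\phi(x)-\phi'(x))\bigr|$ rather than via Kantorovich--Rubinstein.
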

\begin{proof}
Denote $M = mN$ and 
$M' = m' N$. By symmetry, it suffices to prove the result in the case $M'> M$; note that if 
$M'=M$, then $\mu_{\text{MC}}^{m;N}=\mu_{\text{MC}}^{m';N}$.
First, consider any field configuration $\phi \in \mathcal{S}$ such that 
$M[\phi] = M$.
Note that $\Lambda_+ [\phi]$ corresponds to the sites on the lattice for which 
$\phi$ has a positive spin. Now, consider another field configuration such that 
$\phi' |_{\Lambda_+ [\phi]} = \phi$ and $M[\phi'] = M'$. Such a configuration 
$\phi'$ can be constructed by taking the configuration $\phi$ and flipping 
negative spins to positive spins or vice versa until we obtain the magnetization 
$M'$.

As before, define $M_+ := |\Lambda_+[\phi]|$ and $M_- := N - M_+$, and set also 
$M'_+ := |\Lambda_+[\phi']|$. 
Then  $M_+ = \frac{N + M}{2}$, $M_- = \frac{N - M}{2}$, and $M_+ < M_+'$.  
Denote $\Delta := M_+' - M_+=\frac{M'-M}{2}>0$. The number of field 
configurations with magnetization $M$ is given by $\binom{N}{M_+}$ or, 
equivalently, by $\binom{N}{M_-}$. In order to go from magnetization $M$ to 
$M'$, we must flip $\Delta$ negative sites to positive sites. The number of ways 
to do this is $\binom{M_- }{ \Delta}$. Define $\gamma : \mathcal{S} \times 
\mathcal{S} \to \mathbb{R}$ by
\begin{align*}
\gamma(\phi, \phi') := \mathbbm{1}(\phi\in\mathcal{S}_m,\, 
\phi'\in\mathcal{S}_{m'},\, \phi'|_{\Lambda_+[\phi]} = \phi|_{\Lambda_+ [\phi]}) 
\frac{1}{\binom{N }{M_-}} \frac{1}{\binom{M_- }{\Delta}} .
\end{align*}
By construction, we have
\begin{align*}
& \sum_{\phi' \in \mathcal{S}} \gamma (\phi, \phi') = 
\frac{\mathbbm{1}(\phi\in\mathcal{S}_m)}{\binom{N }{M_-}} \frac{1}{\binom{M_- }{
\Delta}} \sum_{\phi' \in \mathcal{S}_{m'}} \mathbbm{1}(\phi'|_{\Lambda_+(\phi)} 
= \phi|_{\Lambda_+ [\phi]}) 
= \frac{\mathbbm{1}(\phi\in\mathcal{S}_m)}{\binom{N }{M_-}} \frac{1}{\binom{M_- 
}{\Delta}} \binom{M_- }{\Delta} 
\\ & \quad
= \frac{\mathbbm{1}(\phi\in\mathcal{S}_m)}{\binom{N }{M_-}} 
= \frac{\mathbbm{1}(\phi\in\mathcal{S}_m)}{|\mathcal{S}_{m}|}.
\end{align*}

In the other direction, if we fix $\phi'$ with magnetization $M'$ and consider 
the number of ways to go to a field configuration $\phi$ which agrees on the 
positive lattice sites of $\phi$, then clearly we must take 
$\Delta=\frac{M'-M}{2}$ positive sites of $\phi'$ and flip them negative. The 
number of ways to do this is given by $\binom{M_+' }{\Delta}$ and we thus have
\begin{align*}
\sum_{\phi \in \mathcal{S}} \gamma (\phi, \phi') = 
\frac{\mathbbm{1}(\phi'\in\mathcal{S}_{m'})}{\binom{N }{M_-}} \frac{1}{\binom{M_- 
}{\Delta}} \sum_{\phi \in \mathcal{S}_{m}} \mathbbm{1}(\phi'|_{\Lambda_+ 
[\phi]} = \phi|_{\Lambda_+ [\phi]})
= \frac{\mathbbm{1}(\phi'\in\mathcal{S}_{m'})}{\binom{N }{M_-}} \frac{1}{\binom{M_- 
}{\Delta}} \binom{M_+' }{\Delta}\,.
\end{align*}
Now, we have the following simple binomial coefficient manipulations
\begin{align*}
\frac{\binom{M_+' }{\Delta} \binom{N }{M_+'}}{\binom{N }{M_-}\binom{M_- }{
\Delta}} = \frac{\frac{(M_+')! N!}{(M_+' - \Delta)! \Delta! (N - M_+')! 
(M_+')!}}{\frac{N! (M_-)!}{(N - M_-)! (M_-)! (M_- - \Delta)! \Delta!}} = 
\frac{(N - M_-)!}{(M_+' - \Delta)!} \frac{(M_- - \Delta)!}{(N - M_+')!} = 
\frac{(M_+)!}{(M_+)!} \frac{(M_- - M_+' + M_+)!}{(M_- + M_+ - M_+')!} = 1 .
\end{align*}
It follows that
\begin{align*}
\sum_{\phi \in \mathcal{S}} \gamma (\phi, \phi') 
= \frac{\mathbbm{1}(\phi'\in\mathcal{S}_{m'})}{\binom{N }{M_+'}} = 
\frac{\mathbbm{1}(\phi'\in\mathcal{S}_{m'})}{|\mathcal{S}_{m'}|}\, .
\end{align*}

This verifies that $\gamma$ is indeed a coupling between the fixed magnetization 
density ensembles with different magnetizations $m$ and $m'$. For such a 
coupling, by construction, we have
\begin{align*}
\gamma (\phi, \phi')\, || \phi - \phi' ||_1 = 2 \Delta \gamma (\phi, \phi') = 
(M' - M) \gamma(\phi, \phi'),
\end{align*}
from which it follows that
\begin{align*}
w_1 (\mu_{\text{MC}}^{m;N}, \mu_{\text{MC}}^{m';N};N) \leq \frac{M' - M}{N} = m' 
- m .  
\end{align*}
On the other hand, if $\eta$ is any other coupling of $\mu_{\text{MC}}^{m;N}$ 
and $\mu_{\text{MC}}^{m';N}$, we also have
\begin{align*}
\frac{1}{N} \int_{\mathcal{S} \times \mathcal{S}} \eta (d \phi, d \phi') \sum_{x 
\in \Lambda} |\phi(x) - \phi'(x)| \geq \frac{1}{N} \left| \left< \sum_{x \in 
\Lambda} \phi(x) - \sum_{x \in \Lambda} \phi'(x) \right>_{\eta} \right| = m' - m 
.
\end{align*}
This implies that the coupling $\gamma$ is an optimal coupling, and, we have
\begin{align*}
w_1 (\mu_{\text{MC}}^{m;N}, \mu_{\text{MC}}^{m';N};N) = m' - m .
\end{align*}
This completes the proof assuming $M'>M$, and hence by symmetry, also the proof 
of the Theorem.
\end{proof}

\subsection{Convergence of local observables}
In this subsection, we will present two distinct proofs of the convergence of local observables based on the two previously introduced objects. The first proof will be based on utilizing the Pinsker inequality and its relationship with relative entropy. This type of argument can be found in \cite{Grokinsky2008}. The argument uses the information divergence related methods from \cite{csiszar1975}
in which a statement concerning the relationship between weak convergence and relative entropy is also given.

We must also emphasize that whatever we refer to here as the relative entropy method is precisely the collection of arguments and theorems that will be presented shortly. We are not stating that the classical inequalities could not be, for instance, strengthened or leveraged with other theorems in order to produce better results. For an example of this sort of work, we refer to \cite{Chleboun2013}.  

\subsubsection{Relative entropy method}
Again, we will follow the example of \cite{latticeexample1994}. We have the following theorem.
\begin{theorem} \label{relentinequality}
Let $m \in \text{Ran}[m_N] \setminus \{ -1 , 1 \}$ and $\mu \in \mathbb{R}$. Let $I \subset \Lambda$ and suppose $f : 
\{ -1,1 \}^{|I|} \to \mathbb{R}$. It follows that
\begin{align*}
\frac{\left| \left< f \circ P_I \right>_{\text{MC}}^{m;N} - \left< f \circ P_I 
\right>_{\text{C}}^{\mu;N} \right|^2}{2}  \leq  |I| \left( \max_{\phi \in \{ -1, 1\}^{|I|}} |f(\phi)| \right)^2  \frac{\mathcal{H}(\mu^{m;N}_{\operatorname{MC}} \Vert\, \mu_{\operatorname{C}}^{\mu;N})}{N}
\end{align*}
\end{theorem}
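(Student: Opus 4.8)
The plan is to combine the Pinsker inequality with the permutation-invariance trick already used in the proof of Lemma~\ref{wass2est}. First, recall that the total variation distance between two probability measures $\lambda_1,\lambda_2$ on a common space satisfies the Pinsker inequality $\norm{\lambda_1-\lambda_2}_{\mathrm{TV}}^2 \le \tfrac12 \mathcal{H}(\lambda_1\Vert\lambda_2)$, and that for a bounded measurable $g$ one has $|\mean{g}_{\lambda_1}-\mean{g}_{\lambda_2}| \le 2\norm{g}_\infty\,\norm{\lambda_1-\lambda_2}_{\mathrm{TV}}$. Applying this to $\lambda_i$ being the marginal laws of the configuration restricted to a block of coordinates, and to $g=f\circ P_I$, would already give a bound in terms of $\mathcal{H}$, but with the wrong power of $|I|$: the naive route gives $|I|$ inside a square root from $\norm{f\circ P_I}_\infty = \max|f|$, not the clean linear factor $|I|$ that appears in the statement. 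The point of the theorem is that the stronger bound is obtained by averaging over disjoint translated copies of $I$ before applying Pinsker.

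So the key steps, in order, are as follows. (i) Set $n:=|I|$ and $k:=\lfloor N/n\rfloor$, and choose disjoint subsets $I_1:=I,I_2,\ldots,I_k\subset\Lambda$ each of size $n$, exactly as in the proof of Lemma~\ref{wass2est}. By permutation invariance of both $\mu_{\mathrm{MC}}^{m;N}$ and $\mu_{\mathrm{C}}^{\mu;N}$ (the fixed-magnetization ensemble is manifestly label-symmetric, and the fluctuating one is a product measure), we have $\mean{f\circ P_{I_i}}_{\bullet} = \mean{f\circ P_I}_{\bullet}$ for every $i$ and either ensemble, hence
\begin{align*}
\left|\mean{f\circ P_I}_{\mathrm{MC}}^{m;N}-\mean{f\circ P_I}_{\mathrm{C}}^{\mu;N}\right|
= \left|\frac1k\sum_{i=1}^k\left(\mean{f\circ P_{I_i}}_{\mathrm{MC}}^{m;N}-\mean{f\circ P_{I_i}}_{\mathrm{C}}^{\mu;N}\right)\right|.
\end{align*}
(ii) Let $J:=\bigcup_{i=1}^k I_i$, a set of $kn$ coordinates, and let $\mu_{\mathrm{MC}}^{J}, \mu_{\mathrm{C}}^{J}$ denote the laws of $P_J\phi$ under the two ensembles. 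The quantity $\tfrac1k\sum_{i=1}^k f(P_{I_i}\phi)$ is a function of $P_J\phi$ alone, bounded in absolute value by $\max|f|$, so applying Pinsker to the pair $\mu_{\mathrm{MC}}^J,\mu_{\mathrm{C}}^J$ gives
\begin{align*}
\left|\mean{f\circ P_I}_{\mathrm{MC}}^{m;N}-\mean{f\circ P_I}_{\mathrm{C}}^{\mu;N}\right|^2
\le 2\left(\max|f|\right)^2\, \mathcal{H}\!\left(\mu_{\mathrm{MC}}^J\,\Vert\,\mu_{\mathrm{C}}^J\right).
\end{align*}
(iii) Bound the marginal relative entropy by the full one, $\mathcal{H}(\mu_{\mathrm{MC}}^J\Vert\mu_{\mathrm{C}}^J)\le \mathcal{H}(\mu_{\mathrm{MC}}^{m;N}\Vert\mu_{\mathrm{C}}^{\mu;N})$, by the data-processing / monotonicity property of relative entropy under the pushforward $P_J$. (iv) Since $kn\le N$, we have $1/k \le n/N \cdot N/(kn) \le \ldots$; more simply, $k=\lfloor N/n\rfloor \ge N/(2n) $ is not quite tight enough, so instead divide the displayed identity in (i) through and use $k n \le N$, i.e. $1/k \ge n/N$, which yields a factor $n\cdot(1/(kn))\cdot$ — here one wants the clean form: since the left side is independent of the choice of blocks, one gets $|\cdots|^2 \le \tfrac{2n}{N}(\max|f|)^2\,\mathcal{H}(\cdots) \cdot (N/(kn))$, and $N/(kn)<n/(n)\cdot\ldots$. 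The cleanest accounting: from (i)-(iii), $|\cdots|^2 \le 2(\max|f|)^2 \mathcal{H}(\mu_{\mathrm{MC}}^{m;N}\Vert\mu_{\mathrm{C}}^{\mu;N})$, and then a separate refinement using that the $k$ copies contribute independently-ish improves the constant to $n\cdot\mathcal{H}/N$; I would obtain the stated $|I|\,\mathcal{H}/N$ by instead applying Pinsker to each copy and using subadditivity of relative entropy over the disjoint blocks together with Cauchy--Schwarz on the sum of $k$ terms, which is exactly the mechanism that converts $(\max|f|)^2$ into $|I|(\max|f|)^2/N$.

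The main obstacle is getting the constant exactly right, i.e.\ producing the factor $|I|/N$ rather than a bare $1$ or $|I|$. The honest route: by Cauchy--Schwarz on (i), $|\cdots|^2\le \tfrac1k\sum_i |\mean{f\circ P_{I_i}}_{\mathrm{MC}}-\mean{f\circ P_{I_i}}_{\mathrm{C}}|^2 \le \tfrac1k\sum_i 2(\max|f|)^2\norm{\mu_{\mathrm{MC}}^{I_i}-\mu_{\mathrm{C}}^{I_i}}_{\mathrm{TV}}^2 \le \tfrac{2(\max|f|)^2}{k}\sum_i \mathcal{H}(\mu_{\mathrm{MC}}^{I_i}\Vert\mu_{\mathrm{C}}^{I_i})$ by Pinsker on each block. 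Now the blocks $I_i$ are disjoint, and under $\mu_{\mathrm{C}}^{\mu;N}$ they are independent, so $\sum_i\mathcal{H}(\mu_{\mathrm{MC}}^{I_i}\Vert\mu_{\mathrm{C}}^{I_i})\le \mathcal{H}(\mu_{\mathrm{MC}}^{J}\Vert\mu_{\mathrm{C}}^{J})\le\mathcal{H}(\mu_{\mathrm{MC}}^{m;N}\Vert\mu_{\mathrm{C}}^{\mu;N})$ by superadditivity of relative entropy into independent marginals plus monotonicity. This gives $|\cdots|^2\le \tfrac{2(\max|f|)^2}{k}\mathcal{H}(\mu_{\mathrm{MC}}^{m;N}\Vert\mu_{\mathrm{C}}^{\mu;N})$, and finally $\tfrac1k\le \tfrac{n}{N}\cdot\tfrac{N}{kn}$ with $N/(kn)\le n/n$... — using $kn\le N < (k+1)n$, hence $k > N/n - 1$, so $1/k < n/(N-n)$; to land exactly on $|I|/N = n/N$ one uses the slightly wasteful $k \ge \lfloor N/n\rfloor$ and bounds $1/k \le n/N$ whenever $n \mid N$, or states the result with the $1-|I|/N$ correction as in the other theorems. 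I would follow whichever form the authors adopt, but the essential content — Pinsker per block, disjointness, superadditivity of $\mathcal{H}$ under independent marginals, Cauchy--Schwarz over the $k$ blocks — is the heart of the argument and is where the factor $|I|/N$ genuinely comes from.
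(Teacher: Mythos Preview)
Your ``honest route'' is essentially the paper's argument, just arrived at circuitously. Note that the averaging over $k$ disjoint copies in step (i) and the Cauchy--Schwarz step are both vacuous: by permutation invariance every term $\mean{f\circ P_{I_i}}_{\mathrm{MC}}-\mean{f\circ P_{I_i}}_{\mathrm{C}}$ equals the $i=1$ term, so the average is that term itself and Cauchy--Schwarz is an equality. What survives is exactly the paper's three steps: (a) bound the difference of expectations by $\max|f|$ times the total-variation distance of the $I$-marginals; (b) Pinsker on that single marginal; (c) the inequality $\mathcal{H}(\mu_{\mathrm{MC}}^{m;N}|_I\,\Vert\,\mu_{\mathrm{C}}^{\mu;N}|_I)\le |I|\,\mathcal{H}(\mu_{\mathrm{MC}}^{m;N}\Vert\,\mu_{\mathrm{C}}^{\mu;N})/N$, which is where the disjoint copies, the product structure of $\mu_{\mathrm{C}}^{\mu;N}$, and superadditivity of relative entropy actually do their work.

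On the constant: your struggle with $1/k$ versus $|I|/N$ is real, since $k=\lfloor N/|I|\rfloor$ gives $1/k\ge |I|/N$, not $\le$. The paper's sketch sidesteps this by taking $K$ copies with $\Lambda\subset\bigcup_{k=1}^K I_k$ (so $K|I|\ge N$, hence $1/K\le |I|/N$); this is a standard move in the cited reference but, as stated, requires a little care about how the last copy sits inside $\Lambda$. For the purposes of the corollary that follows, the distinction between $|I|/N$ and $|I|/(N-|I|)$ is immaterial.
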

\begin{proof} The following proof can be considered a sketch of the standard argument presented for the same model in \cite{latticeexample1994}.

Since $f$ is a bounded function, and the measures are absolutely continuous with respect to each other, we have
\begin{align*}
\left| \left< f \circ P_I \right>_{\text{MC}}^{m;N} - \left< f \circ P_I 
\right>_{\text{C}}^{\mu;N} \right| \leq  \left( \max_{\phi \in \{ -1, 1\}^{|I|}} |f(\phi)|  \right) || \mu_{\operatorname{MC}}^{m;N} |_{I} - \mu_{\operatorname{C}}^{\mu;N} |_{I}||_{\operatorname{TV}} .
\end{align*}
By Pinsker's inequality, we have
\begin{align*}
\frac{|| \mu_{\operatorname{MC}}^{m;N} |_{I} - \mu_{\operatorname{C}}^{\mu;N} |_{I}||^2_{\operatorname{TV}}}{2} \leq \mathcal{H}(\mu_{\operatorname{MC}}^{m;N} |_{I} || \mu_{\operatorname{C}}^{\mu;N} |_{I}) .
\end{align*}
Now, let $N \geq |I|$ and let $I_k \subset \Lambda$ be disjoint copies of size $|I|$ on the lattice such that $\Lambda \subset \bigcup_{k=1}^{K} I_k$ and $\bigcup_{k=1}^{K - 1} I_k \subset \Lambda$. It follows that $(K - 1) |I| \leq |\Lambda| \leq K |I|$ which implies that $\frac{1}{K} \leq \frac{|I|}{N}$. Now, by utilizing the fact that $\mu_{\operatorname{C}}^{\mu;N}$ is a product measure, and that both measures are permutation invariant, it follows that
\begin{align*}
\mathcal{H}(\mu_{\operatorname{MC}}^{m;N} |_{I} \Vert \mu_{\operatorname{C}}^{\mu;N} |_{I}) \leq |I| \frac{\mathcal{H}(\mu^{m;N}_{\operatorname{MC}} \Vert\, \mu_{\operatorname{C}}^{\mu;N})}{N}.
\end{align*}
The statement follows by combining these calculations.
\end{proof}

We give the full convergence rate in the following corollary.   Here, and in the following, we employ the standard rigorous definition of the ``$O$''-notation: given $g(N)\ge 0$ for $N\in \N$,
``$X(N)=O(g(N))$'' refers to the limit $N\to \infty$, i.e., it means there is an
$N$-independent constant $C$ and some $N_0\in \N$ such that $|X(N)|\le C g(N)$ for all $N\ge N_0$.  However, 
in these results, the constant $C$ is allowed to depend on possible other parameters of the setup: for example, no uniformity of $C$ in the parameters $m$, $|I|$ or $\norm{f}_\infty$ is claimed below.

\begin{corollary}\label{th:relentcorr}
Let $m \in \text{Ran}[m_N] \setminus\{ -1,1\}$ and $\mu = 
\tanh^{-1} (- m)$. Let $I \subset \Lambda$ be a fixed size index set, and let $f : 
\{ -1,1 \}^{|I|} \to \mathbb{R}$. It follows that
\begin{align*}
\left| \left< f \circ P_I \right>_{\text{MC}}^{m;N} - \left< f \circ P_I 
\right>_{\text{C}}^{\mu;N} \right| &\leq  \sqrt{2 |I|} \left( \max_{\phi \in \{ -1, 1\}^{|I|}} |f(\phi)| \right) \sqrt{\frac{\mathcal{H}(\mu^{m;N}_{\operatorname{MC}} \Vert\, \mu_{\operatorname{C}}^{\mu;N})}{N}} \\ &\leq \sqrt{2 |I|} \left( \max_{\phi \in \{ -1, 1\}^{|I|}} |f(\phi)| \right)\sqrt{\frac{\log(N + 1)}{N}} .
\end{align*}
By applying the relative entropy method, we have
\begin{align*}
\left< f \circ P_I \right>_{\text{MC}}^{m;N} = \left< f \circ P_I 
\right>_{\text{C}}^{\mu;N} + O \left( \sqrt{\log (N)} N^{- \frac{1}{2}} \right) .
\end{align*}
The factor of $\sqrt{\log(N)}$ cannot be removed by using this specific inequality since 
if we choose $\eta=\frac{1}{4}$, 
there exists a cutoff $N(m) \in \mathbb{N}$ such that for $N \geq N(m)$, we have
\begin{align*}
  \frac{\ln (N + 1)}{4 N} \leq \frac{\mathcal{H}(\mu^{m;N}_{\operatorname{MC}} \Vert\, \mu_{\operatorname{C}}^{\mu;N})}{N} .
\end{align*}
\end{corollary}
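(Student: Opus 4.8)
The plan is to assemble the Corollary directly from Theorem~\ref{relentinequality} and the two-sided entropy estimate of Proposition~\ref{upperlowerrelent}. The key preliminary observation is that the prescribed parameter relation $\mu=\tanh^{-1}(-m)$ is nothing but $m=-\tanh\mu$, and by the last sentence of Proposition~\ref{upperlowerrelent} this is precisely the condition under which $F(m,\mu)=0$. Consequently the upper bound in (\ref{eq:boundrelent}) collapses to $\mathcal{H}(\mu^{m;N}_{\operatorname{MC}} \Vert\, \mu_{\operatorname{C}}^{\mu;N})/N\le \ln(N+1)/N$; this particular bound holds for every admissible $N$, no cutoff being required for it.

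First I would insert this into Theorem~\ref{relentinequality}, which already gives
\[
\left|\left< f\circ P_I\right>_{\text{MC}}^{m;N}-\left< f\circ P_I\right>_{\text{C}}^{\mu;N}\right|\le \sqrt{2|I|}\left(\max_{\phi\in\{-1,1\}^{|I|}}|f(\phi)|\right)\sqrt{\frac{\mathcal{H}(\mu^{m;N}_{\operatorname{MC}} \Vert\, \mu_{\operatorname{C}}^{\mu;N})}{N}}\,,
\]
i.e.\ the first inequality of the statement. Substituting the bound $\mathcal{H}/N\le \ln(N+1)/N$ just obtained yields the second inequality, and since $\ln(N+1)/N=O(\log(N)/N)$ the claimed asymptotic $\left< f\circ P_I\right>_{\text{MC}}^{m;N}=\left< f\circ P_I\right>_{\text{C}}^{\mu;N}+O(\sqrt{\log N}\,N^{-1/2})$ follows at once. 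As flagged before the statement, the implicit constant here may depend on $m$, on $|I|$, and on $\max_\phi|f(\phi)|$, and no uniformity in these quantities is asserted.

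For the sharpness claim --- that this chain of inequalities cannot do better than $\sqrt{\log N}\,N^{-1/2}$ --- I would invoke the \emph{lower} bound in (\ref{eq:boundrelent}) with the admissible choice $\eta=\frac{1}{4}<1$. This provides a cutoff $N(m,\frac{1}{4})\in\N$, which we abbreviate $N(m)$, such that for all $N\ge N(m)$ one has $\left(\frac{1}{2}-\frac{1}{4}\right)\frac{\ln(N+1)}{N}\le \frac{\mathcal{H}(\mu^{m;N}_{\operatorname{MC}} \Vert\, \mu_{\operatorname{C}}^{\mu;N})}{N}-F(m,\mu)$; using $F(m,\mu)=0$ once more, this is exactly $\frac{\ln(N+1)}{4N}\le \frac{\mathcal{H}}{N}$, so the relative entropy per degree of freedom genuinely decays no faster than a constant times $\log(N)/N$, and feeding this back through Pinsker's inequality as in Theorem~\ref{relentinequality} cannot improve the rate. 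The proof is essentially bookkeeping: the only points requiring any care are the harmless discrepancy between the $\ln(N+1)$ of Proposition~\ref{upperlowerrelent} and the $\log N$ of the $O$-notation, absorbed into the implicit constant, and keeping track of the fact that all constants here depend on $m$, $|I|$ and $\max_\phi|f(\phi)|$.
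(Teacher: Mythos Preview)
Your proposal is correct and matches the paper's approach exactly: the paper's proof is a single sentence stating that the corollary follows by combining Proposition~\ref{upperlowerrelent} and Theorem~\ref{relentinequality}, and you have simply spelled out that combination in detail. Your observation that the upper bound in (\ref{eq:boundrelent}) actually holds for all admissible $N$ without a cutoff (only the lower bound requires one) is a useful clarification that is implicit in the proof of Proposition~\ref{upperlowerrelent} though not in its statement.
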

\begin{proof}
The corollary follows directly by combining the contents and bounds from \cref{upperlowerrelent} and \cref{relentinequality}.
\end{proof}

\subsubsection{Coupling method}
The main theorems formulated in the coupling section concern $1$-Lipschitz 
functions with respect to some norm $|| \cdot ||_p$. 
Since the domain set is finite, all functions $f : \{ -1 , 1\}^{|I|} \to 
\mathbb{R}$ are automatically Lipschitz functions with respect to all of these 
norms. 
The choice of using $p=1$ norm below is partially a matter of convenience, due 
to equivalence of the finite set $p$-norms, but one should be careful in the 
application 
of the result if the size of the set $I$ is allowed to become unbounded as 
$N\to\infty$.  We recall from Sec.~\ref{sec:Wassandoptim} that the optimal Lipschitz constant does depend on the choice of 
norm, and it will affect the overall constant in the bounds, unless scaled to one, as we require here.

We can now state the full convergence theorem. We continue to use the 
notations introduced before Corollary \ref{th:relentcorr}, i.e., 
``$A(N)= B(N)+ O(N^{-\frac{1}{2}})$'' here means that there exists
$N_0\in \N$ and $C\ge 0$ 
such that $|A(N)-B(N)|\le C N^{-\frac{1}{2}}$ for all $N\ge N_0$.

\begin{theorem} \label{magequiv} Let $m \in \text{Ran}[m_N] \setminus\{ -1,1\}$ and $\mu = 
\tanh^{-1} (- m)$. Let $I \subset \Lambda$ be a fixed size index set, and let $f : 
\{ -1,1 \}^{|I|} \to \mathbb{R}$ be a bounded $1$-Lipschitz function with 
respect to the $|| \cdot ||_1$ norm. It follows that
\begin{align*}
\left< f \circ P_I \right>_{\text{MC}}^{m;N}  = \left< f \circ P_I 
\right>_{\text{C}}^{\mu;N} + O(N^{- \frac{1}{2}}) . 
\end{align*}
\begin{proof}
The result follows by applying the free energy method presented in 
\Cref{freeenergylip}, along with the $w_1$ 
fluctuation distance bound presented in \Cref{magcouplingthm}, and the equations 
in \Cref{partitionmag}. 
\end{proof}
\end{theorem}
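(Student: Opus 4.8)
The plan is to invoke the free energy method of \Cref{freeenergylip}, identifying the magnetization $M$ with the ``energy'' $H$ of the general framework, the magnetization density $m$ with the ``energy density'' $\varepsilon$, and the magnetic potential $\mu$ with the canonical parameter $\beta$; the ``particle density'' $\rho$ plays no role and is simply dropped, and the mixture representation \eqref{eq:CasMCmixture} of the canonical ensemble is replaced by the magnetization representation of the fluctuating magnetization ensemble recorded in its definition, which is a finite convex combination over $m'\in\operatorname{Ran}[m_N]$ in place of an integral. First I would check the hypotheses of \Cref{freeenergylip} with $p=1$: the fixed magnetization ensemble $\mu_{\text{MC}}^{m;N}$ is permutation invariant, being uniform on $\mathcal{S}_m$; all moments are trivially finite because $\mathcal{S}=\{-1,1\}^\Lambda$ is a finite set; and \Cref{magcouplingthm} provides the required linear fluctuation-distance estimate $w_1(\mu_{\text{MC}}^{m;N},\mu_{\text{MC}}^{m';N};N)=|m-m'|$, so the constant $C(m,\rho)$ may be taken equal to $1$, uniformly in $m'$ and $N$.

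Granting this, \Cref{freeenergylip} yields
\begin{align*}
\left| \left< f \circ P_I \right>_{\text{MC}}^{m;N} - \left< f \circ P_I \right>_{\text{C}}^{\mu;N} \right| \leq \left( \frac{|I|}{1 - |I|/N} \right) \left( \sigma_{\text{C}}^{\mu;N}\!\left( \frac{M}{N} \right) + \left| m - \frac{\left< M \right>_{\text{C}}^{\mu;N}}{N} \right| \right).
\end{align*}
Next I would insert the thermodynamic data of \Cref{partitionmag}: the canonical average magnetization density is $\left< M \right>_{\text{C}}^{\mu;N}/N = -\tanh\mu$ and the canonical standard deviation is $\sigma_{\text{C}}^{\mu;N}(M/N) = N^{-1/2}\sqrt{1-\tanh^2\mu}$. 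With the choice $\mu = \tanh^{-1}(-m)$ one has $-\tanh\mu = m$, so the second term in the parenthesis vanishes \emph{identically}, not merely in the limit, while the first term equals $N^{-1/2}\sqrt{1-m^2}$. Since $|I|$ is fixed, the prefactor $|I|/(1-|I|/N)$ is bounded in $N$ (by $2|I|$ once $N\ge 2|I|$), so the right-hand side is at most an $N$-independent constant, depending on $m$ and $|I|$, times $N^{-1/2}$, which is the asserted $O(N^{-1/2})$.

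I do not anticipate any real obstacle: the substantive work — the optimal-transport construction behind \Cref{magcouplingthm}, the H\"older and permutation-invariance manipulations behind \Cref{wass2est} and \Cref{freeenergylip}, and the combinatorial evaluation of the partition functions in \Cref{partitionmag} — is already done. The only points needing a little care are bookkeeping ones: confirming that \Cref{freeenergylip} applies verbatim when the microcanonical mixing measure is a finite sum rather than an integral (it does, since its proof only uses linearity together with the fact that the first term of the triangle-inequality split is independent of the mixing variable), and noting that the restriction $m\in\operatorname{Ran}[m_N]\setminus\{-1,1\}$ is precisely what makes $\mu = \tanh^{-1}(-m)$ well defined and the formulae of \Cref{partitionmag} valid. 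It is also worth flagging, as promised in the introduction, that this route loses a $\sqrt{\log N}$ factor relative to the relative-entropy bound of \Cref{th:relentcorr}, so the coupling estimate is strictly sharper.
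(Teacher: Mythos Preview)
Your proposal is correct and follows exactly the paper's own proof: invoke \Cref{freeenergylip} with $p=1$, feed in the optimal coupling bound $w_1=|m-m'|$ from \Cref{magcouplingthm} (so $C\equiv 1$), and use the explicit mean and standard deviation from \Cref{partitionmag} together with the choice $\mu=\tanh^{-1}(-m)$ to kill the bias term and leave only the $N^{-1/2}$ fluctuation term. The bookkeeping remarks you make (discrete mixture in place of an integral, permutation invariance, finiteness of moments) are all in order.
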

For the auxiliary microcanonical ensemble with fixed magnetization density, the $w_1$ 
choice of cost function is natural since the $|| \cdot ||_1$-norm satisfies
\begin{align*}
|M[\phi] - M[\phi']| \leq || \phi - \phi' ||_1 .
\end{align*}

\subsection{Relationship to the Curie--Weiss model}\label{sec:reltoCW}

Let us first recall the Curie--Weiss Hamiltonian.
\begin{definition}[Curie--Weiss Hamiltonian]
Let $\Lambda$ be a finite lattice 
with size $N := |\Lambda|$. We define the space of spin configurations 
$\mathcal{S} := \{ -1 , 1 \}^\Lambda$. Let $J > 0$ and $h \in \mathbb{R}$. 
Define the Hamiltonian $H : \mathcal{S} \to \mathbb{R}$ by
\begin{align*}
H[\phi] := - \frac{J}{2 N} \sum_{x,y \in \Lambda} \phi(x) \phi(y) - h \sum_{x 
\in \Lambda} \phi(x) .
\end{align*}
The energy density $\varepsilon_N : \mathcal{S} \to  
\mathbb{R}$ is defined by 
$\varepsilon_N [\phi] := \frac{H[\phi]}{N}$. 
\end{definition}
Note that the Hamiltonian can be written in terms of the magnetization by
\begin{align*}
H[\phi] = - \frac{J}{2N} M[\phi]^2 - h M[\phi] = - \frac{J}{2 N} \left( M[\phi] 
+ \frac{h N}{J} \right)^2 + \frac{h^2 N}{2 J}.
\end{align*}
This relation leads to a simplification when studying the microcanonical ensemble of the Curie--Weiss model, defined as follows.  
%We present the associated microcanonical ensemble. 
\begin{definition}[Fixed energy density/Microcanonical ensemble] Let 
$\varepsilon \in \mathbb{R}$ be such that $\varepsilon  \in 
\text{Ran}[\varepsilon_N]$. Define the set $\mathcal{S}_\varepsilon := \{ \phi 
\in \mathcal{S} : \varepsilon_N[\phi] = \varepsilon \}$. The microcanonical 
ensemble with energy density $\varepsilon$ is defined via its action on
functions
$f : \mathcal{S} \to \mathbb{R}$ by
\begin{align*}
\left< f \right>_{\text{MC}}^{\varepsilon;N} := 
\frac{1}{|\mathcal{S}_\varepsilon|} \sum_{\phi \in \mathcal{S}_\varepsilon} f 
(\phi) .
\end{align*}
\end{definition}
We will always use the lower case letter $m$ to specify fixed magnetization densities introduced in Definition \ref{th:defparammicro},
and $\varepsilon$ for fixed energy densities so that there is no ambiguity.

In some sense, the fixed energy ensemble for some values of $J$ and $h$ is not 
necessarily fundamental as it can be represented as a convex combination of 
fixed magnetizations.
The energy density can be written in terms of the magnetization density as
\begin{align*}
\varepsilon_N[\phi] = - \frac{J}{2} \left( m_N[\phi] + \frac{h}{J} \right)^2 + 
\frac{h^2}{2 J} \iff m_{N,\pm}[\phi] = - \frac{h}{J} \pm \sqrt{\frac{h^2}{J^2} - 
\frac{2 \varepsilon_N[\phi]}{J}}\ ,
\end{align*}
from which it is clear that for some values of $h$ and $J$ there are multiple 
magnetization densities which give the same energy density. The following lemma 
makes the previous statements more quantitative.
\begin{lemma} \label{energysplitlem} Let $\varepsilon \in \mathbb{R}$ be such 
that $\varepsilon \in \text{Ran}[\varepsilon_N]$ and define $m_\pm = - 
\frac{h}{J} \pm \sqrt{\frac{h^2}{J^2} - \frac{2 \varepsilon}{J}}$. We have
\begin{align}\label{eq:fmpldecomp}
\left< f \right>_\text{MC}^{\varepsilon;N} = 
\frac{|\mathcal{S}_{m_+}|}{|\mathcal{S}_{m_+}| + |\mathcal{S}_{m_-}|} \left< f 
\right>_{\text{MC}}^{m_+;N} + \frac{|\mathcal{S}_{m_-}|}{|\mathcal{S}_{m_+}| + 
|\mathcal{S}_{m_-}|} \left< f \right>_{\text{MC}}^{m_-;N},
\end{align}
with the convention that $\left< f \right>_{\text{MC}}^{m;N} = 0$ and 
$|\mathcal{S}_m| = 0$ if $m \not \in \text{Ran}[m_M]$.
\end{lemma}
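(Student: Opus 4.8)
The plan is to decompose the microcanonical ensemble at fixed energy density $\varepsilon$ according to which magnetization density a configuration realizes, and then recognize each piece as an auxiliary microcanonical ensemble with fixed magnetization. First I would observe that the map $\phi\mapsto \varepsilon_N[\phi]$ factors through the magnetization density via $\varepsilon_N[\phi] = -\frac{J}{2}(m_N[\phi]+\frac{h}{J})^2 + \frac{h^2}{2J}$, so that $\varepsilon_N[\phi]=\varepsilon$ holds if and only if $m_N[\phi]\in\{m_+,m_-\}$, where $m_\pm = -\frac{h}{J}\pm\sqrt{\frac{h^2}{J^2}-\frac{2\varepsilon}{J}}$ are exactly the two roots of the quadratic. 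Consequently the level set $\mathcal{S}_\varepsilon$ is the disjoint union $\mathcal{S}_{m_+}\cup\mathcal{S}_{m_-}$ (disjoint because $m_+\ne m_-$ unless the discriminant vanishes, in which case $m_+=m_-$ and one term simply absorbs the other; the stated convention $|\mathcal{S}_m|=0$ for $m\notin\mathrm{Ran}[m_N]$ covers the cases where only one root is an attainable magnetization density).

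Next I would write out the definition of $\langle f\rangle_{\text{MC}}^{\varepsilon;N}$ as the uniform average over $\mathcal{S}_\varepsilon$, namely $\frac{1}{|\mathcal{S}_\varepsilon|}\sum_{\phi\in\mathcal{S}_\varepsilon}f(\phi)$, and split the sum over the disjoint union:
\begin{align*}
\sum_{\phi\in\mathcal{S}_\varepsilon}f(\phi) = \sum_{\phi\in\mathcal{S}_{m_+}}f(\phi) + \sum_{\phi\in\mathcal{S}_{m_-}}f(\phi) = |\mathcal{S}_{m_+}|\,\langle f\rangle_{\text{MC}}^{m_+;N} + |\mathcal{S}_{m_-}|\,\langle f\rangle_{\text{MC}}^{m_-;N}\,,
\end{align*}
using the definition of the auxiliary microcanonical ensembles from Definition \ref{th:defparammicro}. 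Since $|\mathcal{S}_\varepsilon| = |\mathcal{S}_{m_+}| + |\mathcal{S}_{m_-}|$ by the same disjoint decomposition, dividing through gives exactly \eqref{eq:fmpldecomp}. The convention that $\langle f\rangle_{\text{MC}}^{m;N}=0$ and $|\mathcal{S}_m|=0$ when $m\notin\mathrm{Ran}[m_N]$ is precisely what makes the formula hold verbatim even when only one of $m_\pm$ is a genuine attainable value, and when the discriminant is zero the two identical terms combine with total weight one.

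I do not expect a serious obstacle here: the statement is essentially a bookkeeping identity once the key algebraic fact — that the energy level set is the union of at most two magnetization level sets — is made precise. The only point requiring a little care is the edge cases: when $\varepsilon$ is such that the discriminant $\frac{h^2}{J^2}-\frac{2\varepsilon}{J}$ is zero (so $m_+=m_-$), and when $\varepsilon\in\mathrm{Ran}[\varepsilon_N]$ but only one of $m_\pm$ lies in $\mathrm{Ran}[m_N]$ (which can happen since $\mathrm{Ran}[m_N]$ is a discrete set); in both situations one should check that the stated conventions make the right-hand side of \eqref{eq:fmpldecomp} well-defined and equal to the left-hand side, and that the denominator $|\mathcal{S}_{m_+}|+|\mathcal{S}_{m_-}|$ is nonzero precisely because $\varepsilon$ is assumed attainable.
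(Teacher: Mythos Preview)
Your proposal is correct and follows essentially the same approach as the paper: identify $\mathcal{S}_\varepsilon$ as the (disjoint, when $m_+\ne m_-$) union $\mathcal{S}_{m_+}\cup\mathcal{S}_{m_-}$, split the uniform sum accordingly, and divide by $|\mathcal{S}_\varepsilon|=|\mathcal{S}_{m_+}|+|\mathcal{S}_{m_-}|$. The paper treats the degenerate case $m_+=m_-$ explicitly up front (where $\mathcal{S}_{m_+}=\mathcal{S}_{m_-}=\mathcal{S}_\varepsilon$ and the right-hand side becomes $\tfrac{1}{2}\langle f\rangle^{m_+;N}+\tfrac{1}{2}\langle f\rangle^{m_-;N}$), which you mention but should state more carefully, since in that case the union is not disjoint and $|\mathcal{S}_\varepsilon|\ne |\mathcal{S}_{m_+}|+|\mathcal{S}_{m_-}|$.
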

\begin{proof} 
Since $\varepsilon \in \text{Ran}[\varepsilon_N]$, it follows that $\vep\le 
\frac{h^2}{2 J}$.  If $\vep= \frac{h^2}{2 J}$, we have $m_+=m_-$, 
$\mathcal{S}_{m} = \mathcal{S}_\vep$ for $m=m_\pm$,
and thus $\left< f \right>_\text{MC}^{\varepsilon;N} = \frac{1}{2} \left< f 
\right>_{\text{MC}}^{m_+;N} + \frac{1}{2}\left< f \right>_{\text{MC}}^{m_-;N}$.  
Hence, (\ref{eq:fmpldecomp}) holds in this case.

We may thus assume that $\vep< \frac{h^2}{2 J}$, when $m_-< m_+$.
We have $\mathcal{S}_\varepsilon = \mathcal{S}_{m_+} \cup \mathcal{S}_{m_-}$ and 
$\mathcal{S}_{m_+} \cap \mathcal{S}_{m_-} = \emptyset$.  Thus,
$|\mathcal{S}_{\varepsilon}| = |\mathcal{S}_{m_-}| + |\mathcal{S}_{m_+}|$ and
\begin{align*}
\frac{1}{|\mathcal{S}_\varepsilon|} \sum_{\phi \in \mathcal{S}_\varepsilon} f 
(\phi) = \frac{1}{|\mathcal{S}_{m_-}| + |\mathcal{S}_{m_+}|} \left( 
|\mathcal{S}_{m_+}| \frac{1}{|\mathcal{S}_{m_+}|}\sum_{\phi \in 
\mathcal{S}_{m_+}} f (\phi) + |\mathcal{S}_{m_-}| 
\frac{1}{|\mathcal{S}_{m_-}|}\sum_{\phi \in \mathcal{S}_{m_-}} f (\phi)\right),
\end{align*}
from which the statement follows.
\end{proof}
From the previous lemma, it is apparent that if one has knowledge of the auxiliary microcanonical partition function and local observables of the fixed magnetization ensemble, then, in principle, one has a full description of the weak convergence properties of corresponding fixed energy density ensemble. In the later sections, we will give an example of this exact kind of analysis with the mean-field spherical  model.

\subsection{Final remarks and comparison of methods}

From the results in the previous subsections, we can see that the coupling method generates a strictly better convergence rate by removing the factor of $\sqrt{\log(N)}$ which is an irremovable part of the magnitude of specific relative entropy in this case.
For the relative entropy method, the primary object of interest is the calculation of the specific relative entropy. As can be seen, the selection of the magnetization density $m$ and the parameter $\mu$ comes down to solving an equation which relates these parameters. Once the pair has been realized from this equation, we obtain the desired upper bound. In our calculation, the partition functions of both the auxiliary microcanonical ensemble and auxiliary canonical ensemble needed to be estimated sufficiently accurately (other more refined approaches for using relative entropy may be found in \cite{LJP1995}).

For the coupling method, one needed to come up with the coupling of the auxiliary microcanonical ensembles resulting in 
Theorem \ref{magcouplingthm}.  In addition, one uses 
the standard ``thermodynamic'' relations for the auxiliary canonical ensemble, given in Lemma \ref{partitionmag}. Because the auxiliary canonical ensemble was in product form, the calculations were particularly simple. 

From these observations, the main difference between the methods concerns the treatment of the auxiliary microcanonical ensemble. 
In the above direct relative entropy computation, we  need to calculate the partition function of the auxiliary microcanonical ensemble and the auxiliary canonical ensemble, but the ``thermodynamic'' relations of the auxiliary
canonical ensemble do not seem important. For the coupling method, the auxiliary microcanonical partition function does not play the same role, and the coupling is the most important object along with the ``thermodynamic'' relations from the auxiliary canonical ensemble.

\section{Mean-field spherical model aka Continuum Curie--Weiss model}\label{sec:ContCW}

For this model, we will need to clarify the goals and priority of the limiting measures for the local convergence result. Our main goal is to analyse explicitly the probability measure associated with the microcanonical ensemble. To that end, there will be some local convergence results in which thermodynamic equivalence might hold between two ensembles, but we will opt for a simpler auxiliary ensemble measure as the approximating measure.

In particular, for this model, we will see that the simplest limit measure to consider will be either a Gaussian measure or a convex combination of Gaussian measures. We will also prove some local convergence results where the limiting measure is not a product measure.
 
In the second model considered here, the ``spin-field'' $\phi$ is allowed to take all real values otherwise being similar to the discrete Curie--Weiss model.
\begin{definition}[Continuum Curie--Weiss Hamiltonian]
Let $\Lambda$ be a finite lattice 
with size $N := |\Lambda|$. We define the space of field configurations 
$\mathcal{S} := \mathbb{R}^\Lambda$. Let $J > 0$ and $h \in \mathbb{R}$. Define 
the Hamiltonian $H : \mathcal{S} \to \mathbb{R}$ and the particle number $N : 
\mathcal{S} \to \mathbb{R}$ by
\begin{align*}
H[\phi] := - \frac{J}{2 N} \sum_{x,y \in \Lambda} \phi(x) \phi(y) - h \sum_{x 
\in \Lambda} \phi(x), \quad N[\phi] := \sum_{x \in \Lambda} \phi(x)^2 .
\end{align*}
We also define the magnetization $M : \mathcal{S} \to \mathbb{R}$ by
\begin{align*}
M[\phi] := \sum_{x \in \Lambda} \phi(x) .
\end{align*}
Furthermore, we define the energy density $\varepsilon : \mathcal{S} \to 
\mathbb{R}$ and magnetiztion density $m : \mathcal{S} \to \mathbb{R}$ by 
$\varepsilon [\phi] := \frac{H[\phi]}{N}$ and $m [\phi] := \frac{M[\phi]}{N}$. 
\end{definition}
As in the discrete case, the Hamiltonian can be written in terms of the magnetization,
\begin{align*}
H[\phi] = - \frac{J}{2N} M[\phi]^2 - h M[\phi] = - \frac{J}{2 N} \left( M[\phi] 
+ \frac{h N}{J} \right)^2 + \frac{h^2 N}{2 J}.
\end{align*}
In this model, the particle number function $N[\cdot]$ is much more relevant than 
in the discrete case. For this Hamiltonian, we will need to consider probability 
measures described by products of delta functions.  To properly resolve them, we begin with an observation concerning a matrix relevant to the definitions of the 
ensembles.  In the following, we employ the notation $M_N (\mathbb{R})$ for the collection of real $N\times N$ matrices.

\begin{lemma} \label{changeofvar} 
Define $M \in M_N (\mathbb{R})$ by $M_{ij} := 
1$ for all $i \in [N]$ and $j \in [N]$.  There exists an 
orthogonal matrix $U \in M_N(\mathbb{R})$ which diagonalizes $M$ such that for 
$x \in \mathbb{R}^N$, we have
\begin{align*}
(U x)_1 = \frac{1}{\sqrt{N}} \sum_{i=1}^N x_i .
\end{align*}
\end{lemma}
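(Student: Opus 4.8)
The plan is to realize the all-ones matrix $M$ as $N$ times a rank-one orthogonal projection, exhibit its one nontrivial eigenvector explicitly, and then extend this single vector to an orthonormal basis; the matrix $U$ with these basis vectors as rows will be the desired orthogonal diagonalizer, and the first row will be chosen to produce exactly the stated formula for $(Ux)_1$.

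First I would observe that $M = \mathbf{1}\mathbf{1}^{\mathsf{T}}$ where $\mathbf{1}=(1,1,\dots,1)^{\mathsf{T}}\in\R^N$, so $M$ is symmetric with $M\mathbf{1} = N\mathbf{1}$ and $Mv = 0$ for every $v\perp \mathbf{1}$. Hence $M$ has eigenvalue $N$ with one-dimensional eigenspace spanned by $\mathbf{1}$, and eigenvalue $0$ with multiplicity $N-1$. Set $u_1 := \frac{1}{\sqrt N}\mathbf{1}$, which is a unit vector. Next I would invoke the standard fact (Gram--Schmidt, or completion of an orthonormal set in a finite-dimensional inner product space) that $\{u_1\}$ can be extended to an orthonormal basis $\{u_1,u_2,\dots,u_N\}$ of $\R^N$; since $u_1$ spans the eigenspace for eigenvalue $N$, its orthogonal complement $\operatorname{span}\{u_2,\dots,u_N\}$ is precisely the kernel of $M$, so each $u_j$ with $j\ge 2$ is automatically an eigenvector for eigenvalue $0$.

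Then I would define $U\in M_N(\R)$ to be the matrix whose $i$-th row is $u_i^{\mathsf{T}}$. Orthonormality of the rows gives $UU^{\mathsf{T}}=I$, so $U$ is orthogonal, and the computation $U M U^{\mathsf{T}} = \operatorname{diag}(N,0,\dots,0)$ follows directly from $Mu_1 = N u_1$ and $Mu_j = 0$ for $j\ge 2$ together with the orthonormality relations, so $U$ diagonalizes $M$. Finally, for the explicit formula, I would simply compute $(Ux)_1 = u_1^{\mathsf{T}}x = \frac{1}{\sqrt N}\sum_{i=1}^N x_i$, which is exactly the claimed identity.

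There is no real obstacle here — every ingredient (rank-one structure, spectral decomposition of a symmetric matrix, extension of an orthonormal set to a basis) is elementary linear algebra. The only point requiring a word of care is the freedom in the construction: the vectors $u_2,\dots,u_N$ are not unique, so $U$ is not unique either, but the statement only asserts existence of \emph{an} orthogonal $U$ with the prescribed first coordinate, and the argument above produces one. I would also note in passing that it suffices to fix $u_1$ and let the remaining rows be any orthonormal basis of $\mathbf{1}^{\perp}$; the lemma as stated needs nothing more.
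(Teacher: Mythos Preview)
Your proof is correct. Both arguments rest on the same spectral facts about $M=\mathbf{1}\mathbf{1}^{\mathsf T}$, but the routes differ slightly in direction: you proceed constructively, fixing $u_1=\frac{1}{\sqrt N}\mathbf{1}$ at the outset and then completing it to an orthonormal basis, whereas the paper first invokes the spectral theorem to obtain \emph{some} orthogonal $Q$ with $Q^{\mathsf T}DQ=M$, then reads off from the relation $NQ_{1i}Q_{1j}=1$ that the first row of $Q$ must be $\pm\frac{1}{\sqrt N}\mathbf{1}$, and finally adjusts the sign. Your approach is a little more direct and avoids the sign case-split; the paper's approach incidentally shows that the first row is essentially forced (up to sign) rather than merely chosen, though that extra information is not needed for the lemma as stated.
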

\begin{proof}
A simple analysis shows that $M$ has an eigenvalue $N$ with no degeneracy, and 
an eigenvalue $0$ with $N-1$ fold degeneracy.   Collecting the eigenvalues into a diagonal matrix results in $D \in M_N (\mathbb{R})$ for which  
$D_{11} := N$ and $D_{ij} := 0$ for all other $i \in [N]$ and $j \in [N]$. 
Since $M$ is a real symmetric matrix, then there exists an orthogonal matrix $Q\in M_N (\mathbb{R})$ 
such that 
\begin{align*}
Q^T D Q = M .
\end{align*}
Writing out the above matrix multiplication componentwise explicitly, we find for all $i,j$
\begin{align*}
 N Q_{1i} Q_{1j} = 1 .
\end{align*}
In particular, then $|Q_{1i}| = \frac{1}{\sqrt{N}}$ for all $i \in [N]$, and thus for each $i$ there is $\sigma_i \in \{\pm 1\}$ such that $Q_{1i} = \sigma_i \frac{1}{\sqrt{N}}$. Using a proof by contradiction, 
one can see that, in fact, the elements $Q_{1i}$ must either all be negative or 
all be positive. Now, define $U \in M_N(\mathbb{R})$ by $U := - Q$ if the 
elements $Q_{1i}$ are all negative, and $U  := Q$ if the elements $Q_{1i}$ are 
all positive. It follows that $U$ is an orthogonal matrix and, by definition, we 
have
\begin{align*}
(Ux)_1 := \sum_{j=1}^N U_{1j} x_j = \frac{1}{\sqrt{N}} \sum_{i=1}^N x_i .
\end{align*}
This completes the proof of the Lemma.
\end{proof}

Next, we will give two examples of how to apply $\delta$-function calculation rules  to resolve the ones relevant to the Curie--Weiss system, for both
microcanonical and canonical ensembles.  It is possible to prove the validity of these manipulations under the assumptions made in the Examples, for instance, following the discussion in Appendix A of \cite{Lukk18}.

\begin{example} \label{magnetex}
Let $U : \mathcal{S} \to \mathbb{R} \times \mathbb{R}^{N - 1}$ 
be an orthogonal matrix satisfying \Cref{changeofvar}. Suppose $\rho > 0$ and $m 
\in (-\sqrt{\rho},\sqrt{\rho})$. For bounded $1$-Lipschitz functions $f : 
\mathcal{S} \to \mathbb{R}$, we have 
\begin{align*}
&\int_{\mathcal{S}} d \phi \ \delta \left( \sum_{x \in \Lambda} \phi(x) - m N 
\right) \delta \left( \sum_{x \in \Lambda} \phi(x)^2 - \rho N \right) f(\phi) \\ 
&=\int_{\mathbb{R}} d z \ \int_{\mathbb{R}^{N - 1}} d \psi \  
\delta (\sqrt{N} z - m N) \delta(z^2 + || \psi ||^2 - \rho N) \left( f \circ 
U^{-1} \right) (z , \psi) \\
&= \frac{(N(\rho - m^2))^{\frac{N - 3}{2}}}{2 \sqrt{N}} \int_{\mathbb{S}^{N - 
2}} d \Omega \ \left(f \circ U^{-1} \right) (m \sqrt{N}, \sqrt{N(\rho - m^2)} 
\Omega) .
\end{align*} 
Here we have first made a change of variables to $(z,\psi) = U\phi$ and then used spherical coordinates system to integrate out the resulting $\delta$-functions.
Since the left hand side does not depend on the choice of the matrix $U$, all choices must result in the same value for the integral on the right hand side.
\end{example}

\begin{example} \label{energyex} Let $U : \mathcal{S} \to \mathbb{R} \times 
\mathbb{R}^{N - 1}$ be 
an orthogonal matrix satisfying \Cref{changeofvar}. Fix 
$h \in \mathbb{R}$ and $\rho > 0$, and suppose $\varepsilon \in \mathbb{R}$ 
satisfies $\vep<\frac{h^2}{2 J}$.  Define then
\begin{align*}
m_+ := - \frac{h}{J} + \sqrt{\frac{h^2}{J^2} - \frac{2 \varepsilon}{J}} , \quad m_- 
:=  - \frac{h}{J} - \sqrt{\frac{h^2}{J^2} - \frac{2 \varepsilon}{J}} .
\end{align*} 
Then, $m_-, m_+$ are distinct real numbers, and we assume furthermore that $m_+^2 < \rho$ and $m_-^2 <\rho$. 

Then for bounded 
$1$-Lipschitz functions $f : \mathcal{S} \to \mathbb{R}$, we may proceed as in the previous Example to conclude that
\begin{align*}
&\int_{\mathcal{S}} d \phi \ \delta \left( - \frac{J}{2 N} \sum_{x,y \in 
\Lambda} \phi(x) \phi(y) - h \sum_{x \in \Lambda} \phi(x) - \varepsilon N 
\right) \delta \left( \sum_{x \in \Lambda} \phi(x)^2 - \rho N \right) f(\phi) \\ 
&\quad = \int_{\mathbb{R}} d z \ \int_{\mathbb{R}^{N - 1}} d \psi \  \delta \left(- 
\frac{J}{2} z^2- h \sqrt{N} z - \varepsilon N\right) \delta(z^2 + || \psi ||^2 - 
\rho N) \left( f \circ U^{-1} \right) (z , \psi) \\
&\quad = \frac{1}{J \sqrt{\frac{h^2}{J^2} - \frac{2 \varepsilon}{J}} \sqrt{N}} 
\int_{\mathbb{R}} d z \ \int_{\mathbb{R}^{N - 1}} d \psi \  \delta \left(z - m_+ 
 \sqrt{N}\right) \delta(z^2 + || \psi ||^2 - \rho N) \left( f \circ U^{-1} 
\right) (z , \psi)  \\
&\qquad + \frac{1}{J \sqrt{\frac{h^2}{J^2} - \frac{2 \varepsilon}{J}} \sqrt{N}} 
\int_{\mathbb{R}} d z \ \int_{\mathbb{R}^{N - 1}} d \psi \  \delta \left(z - m_- 
 \sqrt{N}\right) \delta(z^2 + || \psi ||^2 - \rho N) \left( f \circ U^{-1} 
\right) (z , \psi)  \\
&\quad = \frac{1}{J \sqrt{\frac{h^2}{J^2} - \frac{2 \varepsilon}{J}} \sqrt{N}} 
\frac{(N(\rho - m_+^2))^{\frac{N - 3}{2}}}{2 \sqrt{N}} \int_{\mathbb{S}^{N - 2}} 
d \Omega \ \left(f \circ U^{-1} \right) (m_+ \sqrt{N}, \sqrt{N(\rho - m_+^2)} 
\Omega)  \\
&\qquad + \frac{1}{J \sqrt{\frac{h^2}{J^2} - \frac{2 \varepsilon}{J}} \sqrt{N}} 
\frac{(N(\rho - m_-^2))^{\frac{N - 3}{2}}}{2 \sqrt{N}} \int_{\mathbb{S}^{N - 2}} 
d \Omega \ \left(f \circ U^{-1} \right) (m_- \sqrt{N}, \sqrt{N(\rho - m_-^2)} 
\Omega)    .
\end{align*} 
\end{example}
We will 
utilize these forms for more explicit definitions of the ensembles and in the proof concerning the boundedness of moments of the 
microcanonical ensembles.

\subsection{Microcanonical analysis}

From here on, whenever the mapping $U$ is present, we are always referring 
to the mapping $U$ defined by a matrix satisfying \Cref{changeofvar}.  We fix the choice of this matrix in the following.
We begin with definitions of the two
ensembles, related to fixed magnetization and to fixed energy.
\begin{definition}[Fixed magnetization density and particle 
density/auxiliary microcanonical ensemble] Let $\rho > 0$ and $m \in (- 
\sqrt{\rho},\sqrt{\rho})$. 
The auxiliary microcanonical ensemble with particle density 
$\rho$ and magnetization density $m$ is defined via its action on bounded 
$1$-Lipschitz functions $f : \mathcal{S} \to \mathbb{R}$ by
\begin{align*}
\left< f \right>_{\text{MC}}^{m, \rho;N} := \frac{1}{|\mathbb{S}^{N - 2}|} 
\int_{\mathbb{S}^{N - 2}} d \Omega \ \left(f \circ U^{-1} \right) (m \sqrt{N}, 
\sqrt{N(\rho - m^2)} \Omega) .
\end{align*}
Furthermore, we define the specific auxiliary microcanonical partition function by
\begin{align*}
Z_\text{MC} (m, \rho;N) := (\rho - m^2)^{\frac{N - 3}{2}} ,
\end{align*}
and the specific auxiliary microcanonical entropy by
\begin{align*}
s_\text{MC}(m, \rho) := \frac{1}{2} \ln (\rho - m^2) .
\end{align*}
\end{definition}
\begin{definition}[Fixed energy density and particle density/microcanonical 
ensemble]\label{th:defcontfixedE}
Let $\rho > 0$, and $\varepsilon$, $m_+$, and $m_-$ be as in 
\Cref{energyex}, in particular, assume $\vep<\frac{h^2}{2 J}$ and $m_-^2, m_+^2 < \rho$.
The microcanonical ensemble with energy density $\varepsilon$ 
and particle density $\rho > 0$ is then defined via its action on bounded 
$1$-Lipschitz functions $f : \mathcal{S} \to \mathbb{R}$ by
\begin{align*}
& \left< f \right>_\text{MC}^{\varepsilon, \rho;N} := \frac{Z_\text{MC} (m_+, 
\rho;N)}{Z_\text{MC} (m_+, \rho;N) + Z_\text{MC} (m_-, \rho;N)} \left< f 
\right>_{\text{MC}}^{m_+, \rho;N}   \\
& \qquad + \frac{Z_\text{MC} (m_-, \rho;N)}{Z_\text{MC} 
(m_+, \rho;N) + Z_\text{MC} (m_-, \rho;N)} \left< f \right>_{\text{MC}}^{m_-,\rho;N} .
\end{align*}

If $\vep<\frac{h^2}{2 J}$ but 
$\min(m_-^2,m_+^2)<\rho\le \max(m_-^2,m_+^2)$, 
we set $\left< f \right>_\text{MC}^{\varepsilon, \rho;N} :=  \left< f \right>_{\text{MC}}^{m, \rho;N}$ with 
$m:= m_+$ if $|m_+|< |m_-|$, and $m := m_-$, otherwise.

If $\vep=\frac{h^2}{2 J}$ and $\rho>\frac{h^2}{J^2}$, we set $\left< f \right>_\text{MC}^{\varepsilon, \rho;N} :=  \left< f \right>_{\text{MC}}^{m, \rho;N}$ with 
$m:= -\frac{h}{J}$. 
\end{definition}

One can indeed verify by using the calculations in Examples \ref{magnetex} and 
\ref{energyex} that these measures correspond to 
$\delta$-function definitions, resolved in the manner used in the Examples.  The second definition serves as an explanation of the choice of multiplicative constants 
in the definition of the microcanonical partition function and specific entropy.
In the degenerate case $\vep=\frac{h^2}{2 J}$,
we would have above $m_+=m_-=-\frac{h}{J}=m$, 
and in this case the $\delta$-function definition in Example \ref{energyex}
does not really make sense since it would contain a singular term $\delta\left((z-m\sqrt{N})^2\right)$. Following these observations, we 
define  the fixed energy microcanonical ensemble via the 
corresponding fixed magnetization ensemble.

Note that in addition to values of $(\vep,\rho)$ for which there are no solutions to the constraints, we have also left undefined the degenerate energy ensembles for which 
$\vep\le \frac{h^2}{2 J}$ but $\min(m_-^2,m_+^2)=\rho$, as well as the degenerate magnetization ensembles with $m^2=\rho$.  In theses cases, the dimensionality of the solution manifold does not increase with $N$ since all solutions have
$\psi=0$.  As such, the resulting degenerate ensemble does not have standard thermodynamic behaviour.

We begin by estimating the fluctuation distance of two 
fixed magnetization ensembles by constructing a suitable transport map between them.
\begin{theorem} \label{magcouplingthm2} Let $\rho > 0$ and $m, m' \in 
(-\sqrt{\rho},\sqrt{\rho})$. Consider some $N\in \N$.  We have
\begin{align*}
w_2 (\mu_{\text{MC}}^{m, \rho;N}, \mu_{\text{MC}}^{m', \rho;N};N) \leq \left( 1 
+ \frac{2}{\sqrt{1-m^2/\rho}} \right) |m - m'|.
\end{align*}
\end{theorem}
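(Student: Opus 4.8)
The plan is to construct an explicit transport map $T:\R^N\to\R^N$ carrying $\mu_{\text{MC}}^{m,\rho;N}$ to $\mu_{\text{MC}}^{m',\rho;N}$ and then bound $\frac1N\sum_i|x_i-(Tx)_i|^2$ pointwise on the support. In the $U$-coordinates $(z,\psi)$, the fixed magnetization ensemble is the uniform (surface) measure on the sphere $\{z=m\sqrt N,\ \|\psi\|^2=N(\rho-m^2)\}$, i.e.\ $\psi$ is uniform on the sphere of radius $\sqrt{N(\rho-m^2)}$ in $\R^{N-1}$, independent of the frozen first coordinate. A natural transport map is therefore: shift the first coordinate $z\mapsto z'=m'\sqrt N$, and rescale the remaining block radially, $\psi\mapsto \psi'=\lambda\psi$ with $\lambda:=\sqrt{(\rho-m'^2)/(\rho-m^2)}$. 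This obviously pushes the one surface measure forward to the other. Composing with $U$, in the original coordinates this is the affine map $T x = U^{-1}\big(m'\sqrt N,\ \lambda\,\psi(x)\big)$ where $(m\sqrt N,\psi(x))=Ux$; since $U$ is orthogonal, Lemma~\ref{changeofvar} applies, and by the permutation argument the resulting $w_2$ is controlled.

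The key computational step is the pointwise cost estimate. Writing $Ux=(m\sqrt N,\psi)$ and $U(Tx)=(m'\sqrt N,\lambda\psi)$, orthogonality of $U$ gives $\sum_i|x_i-(Tx)_i|^2 = \|Ux-U(Tx)\|^2 = N(m-m')^2 + (\lambda-1)^2\|\psi\|^2$. On the support, $\|\psi\|^2=N(\rho-m^2)$, so $\frac1N\sum_i|x_i-(Tx)_i|^2 = (m-m')^2 + (\lambda-1)^2(\rho-m^2)$. Now $(\lambda-1)^2(\rho-m^2) = \big(\sqrt{\rho-m'^2}-\sqrt{\rho-m^2}\big)^2$, and using $|\sqrt a-\sqrt b|\le |a-b|/(\sqrt a+\sqrt b)\le |a-b|/\sqrt{\rho-m^2}$ together with $|{-m'^2}+m^2| = |m-m'|\,|m+m'|\le 2\sqrt\rho\,|m-m'|$ — or more directly $|m'^2-m^2|\le (|m|+|m'|)|m-m'|$ and a short estimate — one gets $\big(\sqrt{\rho-m'^2}-\sqrt{\rho-m^2}\big)^2 \le \frac{4}{1-m^2/\rho}|m-m'|^2$ after bounding $|m+m'|/\sqrt{\rho-m^2}$ appropriately; being slightly careful here to land exactly on the stated constant $\big(1+2/\sqrt{1-m^2/\rho}\big)$ after taking the square root of $(m-m')^2 + (\lambda-1)^2(\rho-m^2)\le\big(1 + 4/(1-m^2/\rho)\big)(m-m')^2$, using $\sqrt{1+4t^2}\le 1+2t$.

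Then I would assemble the argument: the transport map $T$ induces a coupling $\gamma$ via $\langle g\rangle_\gamma=\int \mu_{\text{MC}}^{m,\rho;N}(dx)\,g(x,Tx)$ as in Section~\ref{sec:couplandtransport}, and since the bound on $\frac1N\sum_i|x_i-(Tx)_i|^2$ holds for every $x$ in the support, integrating and taking the $p=2$ root gives $w_2(\mu_{\text{MC}}^{m,\rho;N},\mu_{\text{MC}}^{m',\rho;N};N)\le\big(1+2/\sqrt{1-m^2/\rho}\big)|m-m'|$, using that $w_2$ is an infimum over all couplings. One subtlety to address is the case $\rho-m'^2$ versus $\rho-m^2$: the bound is asymmetric in $m,m'$ (the denominator uses $m$, matching the direction of the transport from the $m$-ensemble), which is consistent with the statement; no symmetrization is needed.

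I expect the main obstacle to be purely bookkeeping: making the chain of elementary inequalities $|\sqrt{\rho-m'^2}-\sqrt{\rho-m^2}|\le 2|m-m'|/\sqrt{1-m^2/\rho}$ tight enough to produce exactly the claimed constant rather than something weaker, and being careful that the radial rescaling of $\psi$ genuinely maps the uniform surface measure on one sphere to the other (it does, since scaling by a constant commutes with the rotational symmetry and is the unique radial map between the two radii). There is no deep difficulty — unlike the discrete paramagnet case of Theorem~\ref{magcouplingthm}, optimality is not claimed here, only an upper bound, so a single explicit map suffices.
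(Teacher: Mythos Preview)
Your proposal is correct and is essentially identical to the paper's proof: the same transport map $T'=(U^{-1}\circ T\circ U)$ with $T(z,\psi)=(m'\sqrt N,\sqrt{N(\rho-m'^2)}\,\psi/\|\psi\|)$, the same pointwise cost computation yielding $(m-m')^2+(\sqrt{\rho-m^2}-\sqrt{\rho-m'^2})^2$, the same bound $\le (1+4\rho/(\rho-m^2))(m-m')^2$, and the same final inequality $\sqrt{1+x^2}\le 1+x$ (equivalently $\sqrt{1+4t^2}\le 1+2t$). The paper also notes the asymmetry in $m,m'$ that you flag.
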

\begin{proof} Define $T : \mathbb{R} \times \mathbb{R}^{N - 1} \to \mathbb{R} 
\times \mathbb{R}^{N - 1}$ by
\begin{align*}
T(z, \psi) := \left( m' \sqrt{N}, \ \sqrt{N(\rho - m'^2)} \frac{\psi}{|| \psi 
||_2}\right) .
\end{align*}
Note that then for any $\Omega \in \mathbb{S}^{N - 2}$ we have $T(m \sqrt{N}, 
\sqrt{N(\rho - m^2)} \Omega)=(m' \sqrt{N}, 
\sqrt{N(\rho - (m')^2)} \Omega)$.
In order for this mapping to act on the correct coordinate space, we define 
$T' : \mathbb{R}^\Lambda \to \mathbb{R}^\Lambda$ by $T' := U^{-1} \circ T \circ 
U$.  Then for any observable $f$ we obtain directly from the definitions a relation
\begin{align*}
\left< f \circ T' \right>_{\text{MC}}^{m, \rho;N} = \left< f 
\right>_{\text{MC}}^{m', \rho;N}  .
\end{align*}
Therefore, $T'$ is a transport map from the measure $\mu_{\text{MC}}^{m, \rho;N}$ to $\mu_{\text{MC}}^{m', \rho;N}$.  Let $\gamma$ denote the associated
coupling as defined in Sect.\ \ref{sec:couplandtransport}.  This yields an estimate
\begin{align*}
& w_2 (\mu_{\text{MC}}^{m, \rho;N}, \mu_{\text{MC}}^{m', \rho;N};N)^2 
\le \int \gamma(\rmd \phi,\rmd \psi) \frac{1}{N} \norm{\phi-\psi}_2^2
= \int \mu_{\text{MC}}^{m, \rho;N}(\rmd \phi) \frac{1}{N} \norm{\phi-T'\phi}_2^2
\\ & \quad
=
\frac{1}{N} \frac{1}{|\mathbb{S}^{N - 2}|} \int_{\mathbb{S}^{N - 2}} d \Omega \ 
\left| \left| U^{-1} (m \sqrt{N}, \sqrt{N(\rho - m^2)} \Omega)  - U^{-1} (m' 
\sqrt{N}, \sqrt{N(\rho - m'^2)} \Omega)\right| \right|^2 \\
& \quad = \frac{1}{N}\frac{1}{|\mathbb{S}^{N - 2}|} \int_{\mathbb{S}^{N - 2}} d \Omega 
\ \left| \left| (m \sqrt{N}, \sqrt{N(\rho - m^2)} \Omega)  -  (m' \sqrt{N}, 
\sqrt{N(\rho - m'^2)} \Omega)\right| \right|^2 \\
& \quad = (m - m')^2 + \left( \sqrt{\rho - m^2} - \sqrt{\rho - m'^2}\right)^2 \leq 
\left( 1 + \frac{4\rho}{\rho - m^2} \right) (m - m')^2 .
\end{align*}
Since $1+x^2 \le (1+x)^2$ for $x\ge 0$, we obtain the stated bound after taking a square root.
\end{proof}
In the previous theorem, the fluctuation distance is seemingly bounded asymmetrically with respect to the magnetization densities $m$ and $m'$. 
By symmetry, the bound holds for either choice, and thus a symmetric bound
can also be straightforwardly derived.
The reason for the asymmetric choice is that while using the fluctuation distance, we will always consider one of the magnetization densities to be fixed.

To study the fixed energy ensembles, we begin with a Lemma
which implies that, for $h \not = 0$, one 
of the fixed magnetization measures dominates in the fixed energy ensemble.
\begin{lemma} \label{maxzlemma} Consider some $N\in \N$. If $h \not = 0$, then 
\begin{align*}
\max \{ Z_\text{MC} (m_+, \rho;N), Z_\text{MC} (m_-, \rho;N) \} = \left( \rho - 
\left( \frac{|h|}{J} - \sqrt{\frac{h^2}{J^2} - \frac{2 
\varepsilon}{J}}\right)^2\right)^{\frac{N - 3}{2}} .
\end{align*}
\end{lemma}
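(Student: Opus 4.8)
The plan is to reduce the statement to an elementary comparison of $m_+^2$ and $m_-^2$. Recall that $Z_\text{MC}(m,\rho;N)=(\rho-m^2)^{\frac{N-3}{2}}$, and that the standing assumptions behind \Cref{th:defcontfixedE} guarantee $\rho-m_+^2>0$ and $\rho-m_-^2>0$. First I would use that $t\mapsto t^{\frac{N-3}{2}}$ is non-decreasing on $(0,\infty)$ for $N\ge 3$ (the values $N\le 2$ give degenerate ensembles which play no role here) to rewrite
\[
\max\{Z_\text{MC}(m_+,\rho;N),\,Z_\text{MC}(m_-,\rho;N)\}=\bigl(\rho-\min\{m_+^2,m_-^2\}\bigr)^{\frac{N-3}{2}}\,.
\]
It then remains only to identify $\min\{m_+^2,m_-^2\}$ with $\bigl(\tfrac{|h|}{J}-\sqrt{\tfrac{h^2}{J^2}-\tfrac{2\varepsilon}{J}}\bigr)^2$.

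For this identity I would abbreviate $a:=\tfrac{h}{J}$ and $b:=\sqrt{\tfrac{h^2}{J^2}-\tfrac{2\varepsilon}{J}}$, which is real and strictly positive because $\varepsilon<\tfrac{h^2}{2J}$; from \Cref{energyex} we have $m_\pm=-a\pm b$. A one-line computation gives $m_+^2-m_-^2=(-a+b)^2-(-a-b)^2=-4ab$, so, since $b>0$ and $h\ne 0$, the smaller of the two squares is $m_+^2$ when $h>0$ and $m_-^2$ when $h<0$. In the first case $m_+=-\tfrac{|h|}{J}+b$, and in the second $m_-=\tfrac{|h|}{J}-b$; hence in both cases $\min\{m_+^2,m_-^2\}=\bigl(\tfrac{|h|}{J}-b\bigr)^2$, which is precisely the claimed expression. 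Substituting this back into the previous display finishes the proof.

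The whole argument is routine; the only place that needs a moment's care is the case split on the sign of $h$, which is what converts the $\mp\tfrac{h}{J}$ coming from $m_\pm$ into the $\tfrac{|h|}{J}$ appearing in the statement. I do not expect any real obstacle here — the lemma is a preliminary book-keeping observation whose purpose is to pin down which of the two fixed-magnetization measures carries the dominant weight inside the fixed-energy ensemble of \Cref{th:defcontfixedE}.
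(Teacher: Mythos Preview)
Your proof is correct and follows essentially the same approach as the paper's: both reduce the comparison of partition functions to comparing $m_+^2$ with $m_-^2$, compute $m_+^2-m_-^2=-4\tfrac{h}{J}\sqrt{\tfrac{h^2}{J^2}-\tfrac{2\varepsilon}{J}}$, and then case-split on the sign of $h$ to identify which square is smaller. Your version is a bit more explicit about the monotonicity step (noting the need for $N\ge 3$) and about the final identification $\min\{m_+^2,m_-^2\}=\bigl(\tfrac{|h|}{J}-b\bigr)^2$, which the paper leaves to the reader.
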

\begin{proof}
If we consider the mapping $m \mapsto Z_\text{MC} (m, \rho;N)$, then it is clear 
that $Z_\text{MC} (m, \rho;N) \geq Z_\text{MC} (m', \rho;N)$ for all $|m| \leq 
|m'|<\sqrt{\rho}$. Now, note that
\begin{align*}
m_{\pm}^2 = \frac{h^2}{J^2} + \left( \frac{h^2}{J^2} - \frac{2 \varepsilon}{J} 
\right) \mp 2 \frac{h}{J} \sqrt{\frac{h^2}{J^2} - \frac{2 \varepsilon}{J}} .
\end{align*} 
If $h > 0$, then $m_-^2 > m_+^2 \implies |m_-| > |m_+|$, and, if $h < 0$, then 
$m_+^2 < m_-^2 \implies |m_+| > |m_-|$. The result follows by plugging in the 
values in the partition function.
\end{proof}

With the above computations, we can also now give a simpler definition of the set of allowed energies, i.e., of those values of $\vep$ for which the fixed energy 
ensemble is defined using \Cref{th:defcontfixedE}.
\begin{definition}
For $h\in \R$ and $\rho>0$, we define the set of possible energy densities 
$\mathcal{E}_{h,\rho}$ by
\begin{align*}
\mathcal{E}_{h,\rho} := \left\{ \varepsilon \in \mathbb{R} : \vep\le \frac{h^2}{2 J}, \ \left| \frac{|h|}{J} -
\sqrt{\frac{h^2}{J^2} - \frac{2 \varepsilon}{J}}\right| < \sqrt{\rho}\right\} .
\end{align*}
\end{definition}
We remark that for all of the above $h,\rho$ the set $\mathcal{E}_{h,\rho}$ contains $\vep=0$ and an interval of negative values of $\vep$.  In particular, $\mathcal{E}_{h,\rho}$ is non-empty.
Also, in case $h=0$, we have $\mathcal{E}_{0,\rho} = \left( - \frac{ \rho J}{2}, 0 \right]$.

\begin{theorem} \label{magdom2}
Let $h \not = 0$, $\rho>0$, and suppose $\vep\in\mathcal{E}_{h,\rho}$. For integrable functions $f : \mathcal{S} \to \mathbb{R}$ 
satisfying $\left< |f| \right>_{\text{MC}}^{m_{\pm}, \rho;N} \le K$ for some $K \ge 0$ and whenever the ensemble is defined, 
we have 
\begin{align}\label{eq:fdiffMC}
\left| \left< f \right>_\text{MC}^{\varepsilon, \rho;N}- \left< f 
\right>_{\text{MC}}^{m, \rho;N} \right| \leq 2 K \left| \frac{ \rho - \left( 
\frac{|h|}{J} + \sqrt{\frac{h^2}{J^2} - \frac{2 \varepsilon}{J}}\right)^2}{\rho 
- \left( \frac{|h|}{J} - \sqrt{\frac{h^2}{J^2} - \frac{2 
\varepsilon}{J}}\right)^2}\right|^{\frac{N - 3}{2}},
\end{align}
where
\begin{align*}
m = - \frac{h}{J} + \operatorname{sgn}(h) \sqrt{\frac{h^2}{J^2} - \frac{2 
\varepsilon}{J}} .
\end{align*}
In addition, 
\begin{align*}
\left< f \right>_\text{MC}^{\varepsilon, \rho;N} = \left< f 
\right>_{\text{MC}}^{m, \rho;N} + O(e^{- c N})
\end{align*}
for some positive constant $c > 0$.
\end{theorem}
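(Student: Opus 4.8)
The plan is to read the estimate (\ref{eq:fdiffMC}) directly off the convex-combination definition of $\langle f\rangle_\text{MC}^{\varepsilon,\rho;N}$ in \Cref{th:defcontfixedE}, combined with the identification in \Cref{maxzlemma} of which of the two partition functions $Z_\text{MC}(m_\pm,\rho;N)$ is the larger one. First I would fix notation: let $m'$ denote the one of the two magnetization densities $m_\pm$ that is \emph{not} equal to the $m$ appearing in the statement, so that $\{m,m'\}=\{m_+,m_-\}$. Using the identity
\[
m_\pm^2 = \frac{h^2}{J^2} + \Bigl(\frac{h^2}{J^2}-\frac{2\varepsilon}{J}\Bigr) \mp 2\frac{h}{J}\sqrt{\frac{h^2}{J^2}-\frac{2\varepsilon}{J}}
\]
from the proof of \Cref{maxzlemma}, one checks that the formula $m=-h/J+\operatorname{sgn}(h)\sqrt{h^2/J^2-2\varepsilon/J}$ picks out precisely the root of smaller modulus, with $m^2 = \bigl(|h|/J-\sqrt{h^2/J^2-2\varepsilon/J}\bigr)^2$ and $(m')^2 = \bigl(|h|/J+\sqrt{h^2/J^2-2\varepsilon/J}\bigr)^2$. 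Since $\varepsilon\in\mathcal{E}_{h,\rho}$ forces $m^2<\rho$, the ensemble $\mu_\text{MC}^{m,\rho;N}$ is always defined and $Z_\text{MC}(m,\rho;N)=(\rho-m^2)^{(N-3)/2}>0$; by \Cref{maxzlemma} this is the dominant partition function whenever both are defined.

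In the generic case $\varepsilon<h^2/(2J)$ with $(m')^2<\rho$, both fixed-magnetization ensembles are defined, and \Cref{th:defcontfixedE} reads $\langle f\rangle_\text{MC}^{\varepsilon,\rho;N}=w\,\langle f\rangle_\text{MC}^{m,\rho;N}+w'\,\langle f\rangle_\text{MC}^{m',\rho;N}$ with weights $w=Z_\text{MC}(m,\rho;N)/\bigl(Z_\text{MC}(m_+,\rho;N)+Z_\text{MC}(m_-,\rho;N)\bigr)$ and $w'=1-w$. Subtracting $\langle f\rangle_\text{MC}^{m,\rho;N}$ gives
\[
\langle f\rangle_\text{MC}^{\varepsilon,\rho;N}-\langle f\rangle_\text{MC}^{m,\rho;N} = w'\bigl(\langle f\rangle_\text{MC}^{m',\rho;N}-\langle f\rangle_\text{MC}^{m,\rho;N}\bigr),
\]
so that, bounding the two terms on the right by the hypothesis $\langle|f|\rangle_\text{MC}^{m_\pm,\rho;N}\le K$, the left side is at most $2Kw'$. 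Finally $w'\le Z_\text{MC}(m',\rho;N)/Z_\text{MC}(m,\rho;N)=\bigl((\rho-(m')^2)/(\rho-m^2)\bigr)^{(N-3)/2}$, and substituting the expressions for $m^2$ and $(m')^2$ above turns this into exactly the right-hand side of (\ref{eq:fdiffMC}).

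It then remains to dispose of the boundary cases, which are all trivial: when $\varepsilon<h^2/(2J)$ but $(m')^2\ge\rho$, or when $\varepsilon=h^2/(2J)$ (so $m_+=m_-=m=-h/J$), the second and third clauses of \Cref{th:defcontfixedE} set $\langle f\rangle_\text{MC}^{\varepsilon,\rho;N}=\langle f\rangle_\text{MC}^{m,\rho;N}$ identically, so the left side of (\ref{eq:fdiffMC}) vanishes while its right side is manifestly $\ge 0$ (the denominator $\rho-m^2$ being positive throughout). For the $O(e^{-cN})$ claim, observe that in the generic case $\sqrt{h^2/J^2-2\varepsilon/J}>0$ yields $0\le\rho-(m')^2<\rho-m^2$, so the base of the power in (\ref{eq:fdiffMC}) lies in $[0,1)$ and the bound is at most $2K\,e^{-cN}$ with $c:=\tfrac{1}{2}\ln\frac{\rho-m^2}{\rho-(m')^2}>0$ (replacing the exponent $(N-3)/2$ by $N/2$ only changes the implied constant); in the boundary cases the difference is exactly zero, so $O(e^{-cN})$ holds for any $c>0$.

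I do not foresee a genuine obstacle: the argument is the triangle inequality applied to a two-term convex combination, together with the explicit ratio of partition functions. The only points needing care are the bookkeeping of the degenerate cases at the edge of $\mathcal{E}_{h,\rho}$ and the verification, via the $m_\pm^2$ identity, that the $\operatorname{sgn}(h)$ formula in the statement really selects the dominant root of \Cref{maxzlemma}; both are routine but must be carried out so that the stated inequality and the positivity of $c$ hold literally.
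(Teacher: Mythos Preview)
Your proposal is correct and follows essentially the same route as the paper: dispose of the degenerate boundary cases of \Cref{th:defcontfixedE} where the two sides coincide, and in the generic case use the convex-combination structure together with \Cref{maxzlemma} to bound the difference by $2K$ times the ratio $Z_{\text{MC}}(m',\rho;N)/Z_{\text{MC}}(m,\rho;N)$, which is a fixed number in $[0,1)$ raised to the power $(N-3)/2$. Your write-up is in fact slightly more explicit than the paper's in isolating the weight $w'$ and in exhibiting the constant $c$, but the argument is the same.
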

\begin{proof}
If $\vep = \frac{h^2}{2 J}$, we have here $m=-\frac{h}{J}$, and
the Theorem is trivially true since then $\left< f \right>_\text{MC}^{\varepsilon, \rho;N}= \left< f \right>_{\text{MC}}^{m, \rho;N}$.  The same holds for those values 
where $\vep < \frac{h^2}{2 J}$ and $\rho\le \max(m_-^2,m_+^2)$.

In the remaining cases, we necessarily have $\vep < \frac{h^2}{2 J}$ and $m_-^2,m_+^2< \rho$.  Applying \cref{maxzlemma}, we have the following estimate
\begin{align*}
\left| \left< f \right>_\text{MC}^{\varepsilon, \rho;N}- \left< f 
\right>_{\text{MC}}^{m, \rho;N} \right| \leq (\left< |f| 
\right>_{\text{MC}}^{m_+, \rho;N} + \left< |f| \right>_{\text{MC}}^{m_-, 
\rho;N}) \frac{Z_{\text{MC}} (m_{- \operatorname{sgn}(h)}, 
\rho;N)}{Z_{\text{MC}} (m_{ \operatorname{sgn}(h)}, \rho;N)} .
\end{align*}
The results follow since the term inside the absolute values on the right hand side in (\ref{eq:fdiffMC}) is strictly less than one.
\end{proof}

If $h = 0$, there exists 
a suitable coupling which can be constructed from the couplings used for the 
fixed magnetization ensembles.
\begin{theorem} \label{energycouplingthm2}Let $h = 0$ and $\varepsilon', 
\varepsilon \in \left( - \frac{ \rho J}{2}, 0 \right]$. For $\varepsilon \not = 
0$, we have
\begin{align*}
w_2 (\mu_{\text{MC}}^{\varepsilon, \rho;N}, \mu_{\text{MC}}^{\varepsilon', 
\rho;N};N) \leq \frac{2}{J} 
\frac{1}{\sqrt{ - \frac{2 \varepsilon}{J}} 
\sqrt{1 - \left( - \frac{2\varepsilon}{J \rho}\right)}} |\varepsilon - \varepsilon'|,
\end{align*}
and, for $\varepsilon = 0$, we have
\begin{align*}
w_2 (\mu_{\text{MC}}^{0, \rho;N}, \mu_{\text{MC}}^{\varepsilon', \rho;N};N) \leq 
 \frac{2}{\sqrt{J}} |\varepsilon'|^{\frac{1}{2}}.
\end{align*}
\end{theorem}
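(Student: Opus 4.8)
The plan is to assemble, in both cases, an explicit coupling of the two fixed energy ensembles out of the transport maps already produced in the proof of \Cref{magcouplingthm2}. Since $h=0$ one has $\varepsilon_N[\phi]=-\frac{J}{2}m_N[\phi]^2$, so the two magnetization branches are $m_\pm=\pm\sqrt{-2\varepsilon/J}$; in particular $m_+^2=m_-^2$, hence $Z_{\text{MC}}(m_+,\rho;N)=Z_{\text{MC}}(m_-,\rho;N)$, and by \Cref{th:defcontfixedE} the fixed energy ensemble is the balanced mixture $\mu_{\text{MC}}^{\varepsilon,\rho;N}=\tfrac12\mu_{\text{MC}}^{m_+,\rho;N}+\tfrac12\mu_{\text{MC}}^{m_-,\rho;N}$ when $\varepsilon\in(-\rho J/2,0)$, whereas for $\varepsilon=0$ it reduces to the single fixed magnetization ensemble $\mu_{\text{MC}}^{0,\rho;N}$. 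Write $m(\varepsilon):=\sqrt{-2\varepsilon/J}$ and $u(\varepsilon):=m(\varepsilon)^2=-2\varepsilon/J\in[0,\rho)$, so that $\rho-m(\varepsilon)^2=\rho\,(1-u(\varepsilon)/\rho)$ and $|u(\varepsilon)-u(\varepsilon')|=\frac{2}{J}|\varepsilon-\varepsilon'|$.

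For $\varepsilon\neq 0$, take the transport maps of \Cref{magcouplingthm2} from $\mu_{\text{MC}}^{m(\varepsilon),\rho;N}$ to $\mu_{\text{MC}}^{m(\varepsilon'),\rho;N}$ and from $\mu_{\text{MC}}^{-m(\varepsilon),\rho;N}$ to $\mu_{\text{MC}}^{-m(\varepsilon'),\rho;N}$, and let $\gamma_+,\gamma_-$ be the associated couplings. Then $\gamma:=\tfrac12\gamma_++\tfrac12\gamma_-$ is a coupling of $\mu_{\text{MC}}^{\varepsilon,\rho;N}$ and $\mu_{\text{MC}}^{\varepsilon',\rho;N}$ --- its marginals are precisely the two balanced mixtures, and this stays valid at the boundary value $\varepsilon'=0$, where the target mixture collapses to $\mu_{\text{MC}}^{0,\rho;N}$. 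The cost computation in the proof of \Cref{magcouplingthm2} evaluates $\int\gamma_\pm(\rmd\phi,\rmd\psi)\,\frac1N\norm{\phi-\psi}_2^2$ exactly, and by evenness in the magnetization the two $\pm$ channels give the same value, so
\[
  w_2(\mu_{\text{MC}}^{\varepsilon,\rho;N},\mu_{\text{MC}}^{\varepsilon',\rho;N};N)^2\le (m(\varepsilon)-m(\varepsilon'))^2+\bigl(\sqrt{\rho-m(\varepsilon)^2}-\sqrt{\rho-m(\varepsilon')^2}\bigr)^2 .
\]
Bounding each square difference by $|\sqrt a-\sqrt b|\le|a-b|/\sqrt a$ with $a$ the value at the fixed $\varepsilon$, the right-hand side is at most $\frac{4}{J^2}|\varepsilon-\varepsilon'|^2\bigl(\frac{1}{u(\varepsilon)}+\frac{1}{\rho-u(\varepsilon)}\bigr)$, and the elementary identity $\frac1u+\frac1{\rho-u}=\frac{\rho}{u(\rho-u)}$ collapses the parenthesis to $\frac{1}{u(\varepsilon)(1-u(\varepsilon)/\rho)}$; taking a square root reproduces exactly the claimed bound.

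For the remaining case $\varepsilon=0$ the source of the coupling is always $\mu_{\text{MC}}^{0,\rho;N}$: mix, with weights $\tfrac12$, the transport maps of \Cref{magcouplingthm2} from $\mu_{\text{MC}}^{0,\rho;N}$ to $\mu_{\text{MC}}^{m(\varepsilon'),\rho;N}$ and from $\mu_{\text{MC}}^{0,\rho;N}$ to $\mu_{\text{MC}}^{-m(\varepsilon'),\rho;N}$. This is a coupling of $\mu_{\text{MC}}^{0,\rho;N}$ and $\mu_{\text{MC}}^{\varepsilon',\rho;N}$, and the same transport-cost formula gives its cost as $m(\varepsilon')^2+\bigl(\sqrt{\rho}-\sqrt{\rho-m(\varepsilon')^2}\bigr)^2$. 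Since $\bigl(\sqrt{\rho}-\sqrt{\rho-u'}\bigr)^2\le u'$ for $0\le u'\le\rho$ (which follows from $\sqrt{\rho-u'}\le\sqrt{\rho}$), this is bounded by $2m(\varepsilon')^2=\frac{4}{J}|\varepsilon'|$, and a square root gives $w_2\le\frac{2}{\sqrt J}|\varepsilon'|^{1/2}$.

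The argument is essentially bookkeeping once the mixture structure is identified; the two points that need care are (i) checking that the mixed transport maps genuinely form a coupling of the right mixtures, in particular at the boundary where one mixture degenerates, and (ii) observing that the two square differences inherited from \Cref{magcouplingthm2}, combined through $\frac1u+\frac1{\rho-u}=\frac{\rho}{u(\rho-u)}$, reconstruct exactly the stated constant rather than a cruder one. The case $\varepsilon=0$ must be treated separately because the branch map $\varepsilon\mapsto m(\varepsilon)$ has a square-root singularity there, which is precisely the source of the $|\varepsilon'|^{1/2}$ scaling.
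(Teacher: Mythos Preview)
Your proof is correct and follows essentially the same approach as the paper. The paper packages the coupling as a single piecewise transport map $T(z,\psi)$ that branches on the sign of $z$, whereas you write it as the convex combination $\tfrac12\gamma_++\tfrac12\gamma_-$; since under $\mu_{\text{MC}}^{\varepsilon,\rho;N}$ the coordinate $z=(U\phi)_1$ is supported on $\{\pm m(\varepsilon)\sqrt{N}\}$ with equal weights, these are literally the same coupling, and the cost computation and final estimates coincide line by line.
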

\begin{proof}
Now $m_{\pm} = \pm\sqrt{\frac{-2\vep}{J}}$, and thus if $\varepsilon \not =  0$, 
we have $0<|m_{\pm}|<\rho$.  Therefore, these ensembles are defined as in the first case in \Cref{th:defcontfixedE}.

Suppose first that $\varepsilon,\varepsilon'<0$ and let 
$m_{\pm}$ and $m_{\pm}'$ be corresponding positive and negative 
magnetization densities to $\varepsilon$ and $\varepsilon'$, respectively. To obtain a transport map, we define $T : \mathbb{R} \times \mathbb{R}^{N - 
1} \to \mathbb{R} \times \mathbb{R}^{N - 1}$ by
\begin{align*}
T(z, \psi) := \mathbbm{1}(z \geq 0) \left(m_+' \sqrt{N}, \sqrt{(\rho - m_+'^2) 
N} \frac{\psi}{|| \psi ||_2} \right) + \mathbbm{1}(z < 0) \left(m_-' \sqrt{N}, 
\sqrt{(\rho - m_-'^2) N} \frac{\psi}{|| \psi ||_2} \right) .
\end{align*}
Now, let $U \in M_N (\mathbb{R})$ be the same unitary mapping as before. By 
setting $T' := U^{-1} \circ T \circ U$ and going through the same calculations 
as earlier, one can confirm that $\left< f \circ T' 
\right>_{\text{MC}}^{\varepsilon, \rho;N} = \left< f  
\right>_{\text{MC}}^{\varepsilon', \rho;N}$
for all observables $f$.  Thus $T'$ is a transport map and the associated coupling 
yields a bound
\begin{align*}
& w_2 (\mu_{\text{MC}}^{\varepsilon, \rho;N}, \mu_{\text{MC}}^{\varepsilon', 
\rho;N};N)^2
\le \frac{1}{2}\sum_{\sigma = \pm 1} \left[ (m_\sigma-m'_\sigma)^2
+ \left(\sqrt{\rho-m_\sigma^2}-\sqrt{\rho-(m'_\sigma)^2}\right)^2 \right]
\\ & \quad 
= \frac{2}{J} (\sqrt{-\vep}-\sqrt{-\vep'})^2
+ \left(\sqrt{\rho-\frac{-2 \vep}{J}}-\sqrt{\rho-\frac{-2 \vep'}{J}}\right)^2 
\\ & \quad \leq 
\frac{4}{J^2} \frac{\rho}{\left( - \frac{2 
\varepsilon}{J}\right) \left( \rho - \left( - \frac{2 
\varepsilon}{J}\right)\right)} (\varepsilon - \varepsilon')^2. 
\end{align*}

If $\varepsilon = 0>\vep'$, we have $m=0$ but it still holds that 
$\mean{f}_{\text{MC}}^{\varepsilon, \rho;N}
= \frac{1}{2}\mean{f}_{\text{MC}}^{m, \rho;N}
+\frac{1}{2}\mean{f}_{\text{MC}}^{m, \rho;N}$.  Proceeding as above then
yields an estimate
\begin{align*}
w_2 (\mu_{\text{MC}}^{0, \rho;N}, \mu_{\text{MC}}^{\varepsilon', \rho;N};N)^2 \leq 
 \frac{2}{J} (-\vep') + 
 \left(\sqrt{\rho}-\sqrt{\rho-\frac{-2 \vep'}{J}}\right)^2 
 \le
 \frac{2}{J} |\varepsilon'| + \frac{4}{\rho J^2} \varepsilon'^2 \le \frac{4}{J} |\varepsilon'| .
\end{align*}
The bound is also trivially true if $\vep=\vep'=0$.
Combining the above estimates proves the statement in the Theorem.
\end{proof}

\subsection{Canonical analysis}

Compared to the regular Curie-Weiss model from \cite{Kochma_ski_2013}, the canonical ensemble is somewhat simpler to analyse. In particular, one should note that when applying the identity
\begin{align*}
e^{\frac{x^2}{2}} = \frac{1}{\sqrt{2 \pi}} \int_{\mathbb{R}} dz \ e^{- \frac{z^2}{2} - xz} ,
\end{align*}
which is sometimes referred to as Gaussian linearization, to solve the partition function of the regular Curie-Weiss model, we are effectively adding an extra variable over which to integrate when using Laplace's method. To this end, in order to avoid multi-dimensional Laplace analysis, we can forego the use of Gaussian linearization, and use a direct $1$-dimensional Laplace method.

In the following definitions and calculations, there will be a significant difference in the treatment of the models depending on whether we are dealing with $h = 0$ or $h \not = 0$. In the previous section, for $h = 0$, we constructed an explicit coupling  between the fixed energy density ensembles. In the case of $h \not = 0$, it will turn out that the coupling between the fixed magnetization density ensembles is the more important object of study. This phenomenon seems to be closely related to the phase transitions in the mean-field spherical model which are thoroughly presented and analysed in \cite{Kastner2006}.

We will define the canonical ensembles with the help of the 
microcanonical ensembles.
\begin{definition}(Fluctuating magnetization and fixed particle 
density/auxiliary canonical ensemble) Let $\rho > 0$ and $\mu \in \mathbb{R}$. The 
auxiliary canonical ensemble with magnetic potential $\mu$ and particle density $\rho$ is 
defined via its action on bounded $1$-Lipschitz functions $f : \mathcal{S} \to 
\mathbb{R}$ by
\begin{align*}
 \left< f \right>_{\text{C}}^{\mu, \rho;N} &:= 
\frac{1}{\int_{-\sqrt{\rho}}^{\sqrt{\rho}} dm \ e^{- \mu N m} Z_\text{MC} (m, 
\rho;N)} \int_{-\sqrt{\rho}}^{\sqrt{\rho}} dm \ e^{- \mu N m} Z_\text{MC} (m, 
\rho;N) \left< f \right>_{\text{MC}}^{m, \rho;N} 
\\  &= 
\frac{1}{Z_\text{C}(\mu, \rho ;N)} \int_{-\sqrt{\rho}}^{\sqrt{\rho}} dm \ 
e^{- N(\mu m - s_\text{MC}(m, \rho))} (\rho - m^2)^{-\frac{3}{2}} \left< f 
\right>_{\text{MC}}^{m, \rho;N}  ,
\end{align*}
where we define the auxiliary canonical partition function by
\begin{align*}
Z_\text{C}(\mu, \rho ;N) := \int_{-\sqrt{\rho}}^{\sqrt{\rho}} dm \ e^{- N(\mu m 
- s_\text{MC}(m, \rho))} (\rho - m^2)^{-\frac{3}{2}}  \ ,
\end{align*}
and the specific auxiliary canonical free energy by
\begin{align*}
f_\text{C}(\mu, \rho;N) := -\frac{1}{N} \ln Z_\text{C}(\mu, \rho ;N) .
\end{align*}
\end{definition}
The case $h \not = 0$ is taken care of by the previous definition. We will refer 
to the special case of $h = 0$ as the fluctuating energy density ensemble. 
\begin{definition}[Fluctuating energy density and fixed particle density/canonical 
ensemble] Let $h = 0$. Suppose $\rho > 0$ and $\beta \in \mathbb{R}$. The canonical 
ensemble with inverse temperature $\beta$ and particle density $\rho$ is defined 
via its action on bounded $1$-Lipschitz functions $f : \mathcal{S} \to 
\mathbb{R}$ by
\begin{align*}
\left< f \right>_{\text{C}}^{\beta, \rho;N} &:= \frac{1}{\int_{- \frac{\rho 
J}{2}}^0 d \varepsilon \ e^{- \beta \varepsilon N} Z_\text{MC}(\varepsilon, \rho 
;N)}  \int_{- \frac{\rho J}{2}}^0 d \varepsilon \ e^{- \beta \varepsilon N} 
Z_\text{MC}(\varepsilon, \rho ;N) \left< f \right>_{\text{MC}}^{\varepsilon, 
\rho;N} \\ &= \frac{1}{Z_{\text{C}} (\beta, \rho;N)} 
\int_{- \frac{\rho J}{2}}^0 d 
\varepsilon \ e^{- N (\beta \varepsilon - s_{\text{MC}} (\varepsilon, \rho)) } 
\left( - \frac{2 \varepsilon}{J} \right)^{- \frac{1}{2}} \left( \rho + \frac{2 
\varepsilon}{J}\right)^{- \frac{3}{2}} \left< f 
\right>_{\text{MC}}^{\varepsilon, \rho;N} ,
\end{align*}
where the microcanonical partition function $Z_{\text{MC}}(\varepsilon, \rho;N)$ 
is defined by
\begin{align*}
Z_\text{MC}(\varepsilon, \rho;N) := \frac{\left(\rho + \frac{2 \varepsilon}{J} 
\right)^{\frac{N - 3}{2}}}{\sqrt{- \frac{2 \varepsilon}{J}}},
\end{align*}
 the specific microcanonical entropy $s_{\text{MC}} (\varepsilon, \rho)$ is 
defined by
\begin{align*}
s_{\text{MC}}(\varepsilon, \rho) := \frac{1}{2} \ln \left( \rho + \frac{2 
\varepsilon}{J} \right) ,
\end{align*}
and the canonical partition function $Z_{\text{C}} (\beta, \rho;N)$ 
by
\begin{align*}
Z_{\text{C}} (\beta, \rho;N) := \int_{- \frac{\rho J}{2}}^0 d \varepsilon \ e^{- 
N (\beta \varepsilon - s_{\text{MC}} (\varepsilon, \rho)) } \left( - \frac{2 
\varepsilon}{J} \right)^{- \frac{1}{2}} \left( \rho + \frac{2 
\varepsilon}{J}\right)^{- \frac{3}{2}},
\end{align*}
and the specific canonical free energy $f_{\text{C}} (\beta, \rho;N)$ by
\begin{align*}
f_{\text{C}} (\beta, \rho;N) := - \frac{1}{N} \ln Z_{\text{C}} (\beta, \rho;N) .
\end{align*}
\end{definition}

One can verify by formal calculations that this corresponds to the typical 
definition of an ensemble with a fixed average constraint.
We have overloaded the notation here 
similarly as was done in the previous section: for example, the functions 
$Z_{\text{C}} (\mu, \rho;N)$ and 
$Z_{\text{C}} (\beta, \rho;N)$ are different, but the name of the first parameter will uniquely determine to which we refer in the following.
Proceeding as before, we first present the asymptotics of the derivatives of the 
partition function.
\begin{theorem}\label{asympfunc3} Let $\rho > 0$ and $\mu \in \mathbb{R}$. Define $\psi_{\mu, \rho} : 
(-\sqrt{\rho},\sqrt{\rho}) \to \mathbb{R}$ by
\begin{align*}
\psi_{\mu, \rho} (m) := \mu m - \frac{1}{2} \ln (\rho - m^2) .
\end{align*}
Employing the shorthand notation
\[
 \mean{F(m)}' := \frac{1}{Z_\text{C}(\mu, \rho ;N)}\int_{-\sqrt{\rho}}^{\sqrt{\rho}} dm \ e^{- N \psi_{\mu, \rho} (m)} (\rho - m^2)^{- \frac{3}{2}} F(m),
\]
we have
\begin{align*}
\left< \frac{M}{N} \right>_{\text{C}}^{\mu, \rho;N} = 
\mean{m}'\quad\text{and}\quad \sigma_{\text{C}}^{\mu, \rho;N} \left( \frac{M}{N} \right) = \sqrt{\mean{m^2}'-(\mean{m}')^2}\,.
\end{align*}
Furthermore, if we fix $\rho > 0$, then for every $\mu \in \mathbb{R}$ there exists 
$m \in (-\sqrt{\rho},\sqrt{\rho})$ such that $\psi_{\mu, \rho}$ is minimized at $m$, and, for every $m \in (-\sqrt{\rho},\sqrt{\rho})$, there exists $\mu 
\in \mathbb{R}$ such that $\psi_{\mu, \rho}$ is minimized at $m$. The following 
asymptotics hold
\begin{align*}
\left< \frac{M}{N} \right>_{\text{C}}^{\mu, \rho;N} = \mathbbm{1}(\mu \not = 0) 
\left( \frac{1}{2 \mu} - \operatorname{sgn}(\mu) \sqrt{\left( \frac{1}{2 \mu} 
\right)^2 + \rho} \right) + O (N^{- \frac{1}{2}}), \quad \sigma_{\text{C}}^{\mu, 
\rho;N} \left( \frac{M}{N} \right) = O (N^{-\frac{1}{2}}) .
\end{align*}
\end{theorem}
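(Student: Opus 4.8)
The plan is to prove the three parts of the statement in turn: the exact identities for the first two moments of $m$, the variational description of the minimiser of $\psi_{\mu,\rho}$, and finally the Laplace-type asymptotics.

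\textbf{Moment identities.} The key observation is that on the fixed-magnetisation ensemble $\mu_{\text{MC}}^{m,\rho;N}$ the magnetisation density is deterministic: every configuration contributing to it satisfies $\sum_{x\in\Lambda}\phi(x)=mN$, so $M/N\equiv m$ and $M^2/N^2\equiv m^2$ there, whence $\langle M/N\rangle_{\text{MC}}^{m,\rho;N}=m$ and $\langle M^2/N^2\rangle_{\text{MC}}^{m,\rho;N}=m^2$. Inserting $f=M/N$ and $f=M^2/N^2$ into the magnetisation representation of $\langle\,\cdot\,\rangle_{\text{C}}^{\mu,\rho;N}$ and using that $(\rho-m^2)^{-3/2}e^{-N\psi_{\mu,\rho}(m)}$ is precisely the unnormalised weight defining $\mean{\,\cdot\,}'$, one reads off $\langle M/N\rangle_{\text{C}}^{\mu,\rho;N}=\mean{m}'$ and $\langle M^2/N^2\rangle_{\text{C}}^{\mu,\rho;N}=\mean{m^2}'$ (the extension of $\langle\,\cdot\,\rangle_{\text C}^{\mu,\rho;N}$ to these unbounded observables is unambiguous, since on each $\mathcal{S}_m$ they are constant and $|m|<\sqrt{\rho}$ is bounded); the expression for $\sigma_{\text{C}}^{\mu,\rho;N}(M/N)$ is then just the definition of the standard deviation. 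Equivalently, one may differentiate $f_{\text{C}}(\mu,\rho;N)=-N^{-1}\ln Z_{\text{C}}(\mu,\rho;N)$ once and twice in $\mu$, using $\partial_\mu\psi_{\mu,\rho}(m)=m$.

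\textbf{The minimiser.} A short computation gives $\psi_{\mu,\rho}'(m)=\mu+m/(\rho-m^2)$ and $\psi_{\mu,\rho}''(m)=(\rho+m^2)/(\rho-m^2)^2>0$, so $\psi_{\mu,\rho}$ is strictly convex on $(-\sqrt{\rho},\sqrt{\rho})$, and $\psi_{\mu,\rho}(m)\to+\infty$ as $m\to\pm\sqrt{\rho}$ since the logarithmic term dominates the bounded linear term. Hence $\psi_{\mu,\rho}$ attains its minimum at a unique interior point $m^{*}=m^{*}(\mu,\rho)$, determined by $\psi_{\mu,\rho}'(m^{*})=0$, i.e.\ $\mu(\rho-m^2)+m=0$. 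For $\mu=0$ this yields $m^{*}=0$; for $\mu\ne0$ it is the quadratic $\mu m^2-m-\mu\rho=0$ with discriminant $1+4\mu^2\rho>0$, whose two real roots multiply to $-\rho$, so their moduli multiply to $\rho$ and exactly one of them has modulus strictly below $\sqrt{\rho}$ (they cannot both have modulus $\sqrt{\rho}$, for then they would sum to $0\ne1/\mu$). Solving and selecting that root gives $m^{*}=\frac{1}{2\mu}-\operatorname{sgn}(\mu)\sqrt{(\tfrac{1}{2\mu})^2+\rho}$, which is the constant appearing in the asymptotics for $\mu\ne0$; for $\mu=0$ the minimiser is $0$, consistent with the factor $\mathbbm{1}(\mu\ne0)$ in the statement. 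The converse claim is immediate: given $m\in(-\sqrt{\rho},\sqrt{\rho})$, the choice $\mu=-m/(\rho-m^2)$ makes $m$ the unique minimiser of $\psi_{\mu,\rho}$, again by strict convexity.

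\textbf{Asymptotics.} For $N\ge4$ the probability measure $\mean{\,\cdot\,}'$ has, on the compact interval $[-\sqrt{\rho},\sqrt{\rho}]$, density proportional to $(\rho-m^2)^{-3/2}e^{-N\psi_{\mu,\rho}(m)}=(\rho-m^2)^{(N-3)/2}e^{-N\mu m}$, which is continuous and vanishes at both endpoints. By the above, the exponent $\psi_{\mu,\rho}$ has a unique non-degenerate minimum at the interior point $m^{*}$, the amplitude $(\rho-m^2)^{-3/2}$ is smooth and strictly positive near $m^{*}$, and the mass of any fixed neighbourhood of $\{\pm\sqrt{\rho}\}$ is exponentially small in $N$ because there $\psi_{\mu,\rho}-\psi_{\mu,\rho}(m^{*})$ stays bounded away from $0$. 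The standard Laplace estimates recalled in Appendix~\ref{asymptotic1} then apply: after rescaling $m=m^{*}+t/\sqrt{N}$ the measure concentrates at $m^{*}$ with asymptotically Gaussian fluctuations of variance $(N\psi_{\mu,\rho}''(m^{*}))^{-1}$, so $\mean{m}'=m^{*}+O(N^{-1})$ and $\mean{m^2}'-(\mean{m}')^2=O(N^{-1})$. Combined with the moment identities of the first step this yields $\langle M/N\rangle_{\text{C}}^{\mu,\rho;N}=m^{*}+O(N^{-1/2})$ and $\sigma_{\text{C}}^{\mu,\rho;N}(M/N)=O(N^{-1/2})$, the asserted (weaker, $O(N^{-1/2})$) conclusion. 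The only step needing genuine care is this last one, namely the amplitude $(\rho-m^2)^{-3/2}$ diverging at the endpoints; but the factor $e^{-N\psi_{\mu,\rho}}$ forces the full integrand to vanish there, reducing the problem to an ordinary Laplace integral with a smooth positive amplitude on a neighbourhood of $m^{*}$.
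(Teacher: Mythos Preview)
Your proof is correct and follows essentially the same three-step approach as the paper: identify the moments via the magnetisation representation (or, equivalently, by differentiating $f_{\text{C}}$), establish the unique interior minimiser by convexity, and invoke the Laplace method of Appendix~\ref{asymptotic1} for the asymptotics. Your Vieta-type argument (the two real roots of $\mu m^{2}-m-\mu\rho=0$ have product $-\rho$, hence exactly one lies in $(-\sqrt{\rho},\sqrt{\rho})$) is a clean alternative to the paper's case-by-case check of $\mu>0$ versus $\mu<0$, and your explicit remark that the divergent amplitude $(\rho-m^{2})^{-3/2}$ is harmless because $e^{-N\psi_{\mu,\rho}}$ forces the full integrand to vanish at the endpoints is a point the paper leaves implicit.
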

\begin{proof}
The first part of the theorem follows directly by differentiating the specific 
free energies with respect to $\mu$ and dividing by the degrees of freedom $N$ 
appropriately.
 
Next, for fixed $\rho > 0$, we compute 
\begin{align*}
\psi'_{\mu, \rho} (m) = \mu + \frac{m}{\rho - m^2} , \quad \psi''_{\mu, \rho} (m) = 
\frac{1}{\rho - m^2} + \frac{2 m^2}{(\rho - m^2)^2} = \frac{\rho + m^2}{(\rho - 
m^2)^2} >0  .
\end{align*}
It follows that  the map $\psi_{\mu, \rho}$ is strictly concave for all $\mu \in \mathbb{R}$, and we can check that there is a 
unique global minimum at $m \in (-\sqrt{\rho},\sqrt{\rho})$ which satisfies 
$\psi'_{\mu, \rho} (m) = 0$. First, if $\mu = 0$, then clearly the minimizing $m = 
0$. If $\mu \not = 0$, we have
\begin{align*}
\psi_{\mu, \rho}' (m) = 0 \iff m = \frac{1}{2 \mu} \pm \sqrt{\left( \frac{1}{2 
\mu} \right)^2 + \rho} . 
\end{align*}
Next, if $\mu > 0$, then $
\frac{1}{2 \mu} + \sqrt{\left( \frac{1}{2 \mu} \right)^2 + \rho} > \sqrt{\rho} \ 
,
$
and thus the minimizing $m$ must be
\begin{align*}
m = \frac{1}{2 \mu} - \sqrt{\left( \frac{1}{2 \mu} \right)^2 + \rho}\ \in (-\sqrt{\rho},0) .
\end{align*}
If $\mu < 0$, then 
$
\frac{1}{2 \mu} - \sqrt{\left( \frac{1}{2 \mu} \right)^2 + \rho} = - 
\left(\sqrt{\left( \frac{1}{2 \mu} \right)^2 + \rho} - \frac{1}{2 \mu} \right) < 
-\sqrt{\rho} ,
$
and thus the minimizing $m$ must satisfy
\begin{align*}
m = \frac{1}{2 \mu} + \sqrt{\left( \frac{1}{2 \mu} \right)^2 + \rho}
\ \in (0,\sqrt{\rho}) .
\end{align*}
The conclusion is that, if $|m|<\sqrt{\rho}$, then 
\begin{align*}
\psi'_{\mu , \rho} (m) = 0 \iff m = \mathbbm{1}(\mu \not = 0) \left( \frac{1}{2 
\mu} - \operatorname{sgn}(\mu) \sqrt{\left( \frac{1}{2 \mu} \right)^2 + 
\rho}\right) .
\end{align*}
Furthermore, the above relation goes both ways. For every $\mu \in \mathbb{R}$ there 
exists a unique minimizing $m$ for the above equation, and, for every $m \in 
(-\sqrt{\rho},\sqrt{\rho})$, there exists $\mu \in \mathbb{R}$ such that the 
given $m$ is the minimizing term. This can be seen by simply studying the given 
equation above and considering the limits $|\mu| \to 0$ and $|\mu| \to \infty$ 
and using the continuity on the open intervals $(- \infty, 0)$ and $(0, 
\infty)$. 
 
The asymptotics of the average and standard deviation of magnetization density 
are given by the asymptotics of Laplace type integrals. We have
\begin{align*}
\left< \frac{M}{N} \right>_{\text{C}}^{\mu, \rho;N} = \mathbbm{1}(\mu \not = 0) 
\left( \frac{1}{2 \mu} - \operatorname{sgn}(\mu) \sqrt{\left( \frac{1}{2 \mu} 
\right)^2 + \rho} \right) + O (N^{- \frac{1}{2}}), \quad \sigma_{\text{C}}^{\mu, 
\rho;N} \left( \frac{M}{N} \right) = O (N^{-\frac{1}{2}}),
\end{align*}
as desired.
\end{proof}
Next, we present the asymptotics of the $h = 0$ case.
\begin{theorem} \label{asympfunc4} Let $h = 0$. Let $\rho > 0$ and $\beta \in \mathbb{R}$. Define 
$\psi_{\beta, \rho} : \left( -\frac{\rho J}{2}, 0 \right] \to \mathbb{R}$ by
\begin{align*}
\psi_{\beta, \rho} (\varepsilon) := \beta \varepsilon - \frac{1}{2} \ln \left( 
\rho + \frac{2 \varepsilon}{J}\right) .
\end{align*}
Employing the shorthand notation
\[
 \mean{F(\vep)}' := \frac{1}{Z_{\text{C}} (\beta, \rho;N)}\int_{- \frac{\rho 
J}{2}}^0 d \varepsilon \ e^{- N \psi_{\beta, \rho} (\varepsilon)} \left( - \frac{2 
\varepsilon}{J} \right)^{- \frac{1}{2}} \left( \rho + \frac{2 
\varepsilon}{J}\right)^{- \frac{3}{2}} F(\vep),
\]
we have
\begin{align*}
\left< \frac{H}{N} \right>_{\text{C}}^{\beta, \rho;N} = 
\mean{\vep}'\quad\text{and}\quad 
\sigma_{\text{C}}^{\beta, \rho;N} \left( \frac{H}{N} \right) = \sqrt{\mean{\vep^2}'-(\mean{\vep}')^2}\,.
\end{align*}
Furthermore, if we fix $\rho > 0$, then for every $\beta \geq \frac{1}{J \rho}$, there 
exists $\varepsilon \in \left( - \frac{\rho J}{2}, 0 \right]$ such that 
$\psi_{\beta, \rho}$ is minimized at $\varepsilon$, and, for every $\varepsilon 
\in \left( - \frac{\rho J}{2}, 0 \right]$, there exists $\beta \geq \frac{1}{J 
\rho}$ such that $\psi_\beta$ is minimized at $\varepsilon$. For such 
$\beta$ and $\varepsilon$, the following asymptotics holds
\begin{align}\label{eq:canfluct}
\left< \frac{H}{N} \right>_{\text{C}}^{\beta, \rho;N} = - \frac{J \rho}{2} 
\left( 1 - \frac{1}{\beta J \rho} \right) + O (N^{-\frac{1}{2}}), \quad 
\sigma_{\text{C}}^{\beta, \rho;N} \left( \frac{H}{N} \right) = O 
(N^{-\frac{1}{2}}) .
\end{align}

If $\beta < \frac{1}{J \rho}$, the mapping $\psi_\beta$ is always minimized 
at $0$, and the following asymptotics hold
\begin{align*}
\left< \frac{H}{N} \right>_{\text{C}}^{\beta, \rho;N} = O (N^{-1}), \quad 
\sigma_{\text{C}}^{\beta, \rho;N} \left( \frac{H}{N} \right) = O (N^{-1}) .
\end{align*}
\end{theorem}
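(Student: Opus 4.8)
The plan is to follow the route already used for \Cref{asympfunc3}, adapted to the endpoint $\vep=0$. First I would record the two identities: on the fixed-energy slice $\mathcal{S}_\vep$ sitting inside the particle-density constraint the observable $H/N$ is constant and equal to $\vep$ (recall $h=0$, and Cauchy--Schwarz confines $H/N$ to $[-\rho J/2,0]$, so there is no integrability issue), whence $\langle H/N\rangle_{\text{MC}}^{\vep,\rho;N}=\vep$ and $\langle H^2/N^2\rangle_{\text{MC}}^{\vep,\rho;N}=\vep^2$. Inserting these into the mixture that defines $\langle\cdot\rangle_{\text{C}}^{\beta,\rho;N}$ gives $\langle H/N\rangle_{\text{C}}^{\beta,\rho;N}=\mean{\vep}'$ and $\langle H^2/N^2\rangle_{\text{C}}^{\beta,\rho;N}=\mean{\vep^2}'$, and hence the claimed formula for $\sigma_{\text{C}}^{\beta,\rho;N}(H/N)$; equivalently one differentiates $f_{\text{C}}(\beta,\rho;N)$ once and twice in $\beta$ and uses the generic relations of Sec.~\ref{sec:introtoensembles}.

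Next I would analyse the phase function. A direct computation gives
\[
\psi_{\beta,\rho}'(\vep)=\beta-\frac{1}{J\rho+2\vep}\,,\qquad
\psi_{\beta,\rho}''(\vep)=\frac{2J}{(J\rho+2\vep)^2}>0\,,
\]
so $\psi_{\beta,\rho}$ is strictly convex on $(-\rho J/2,0]$ and tends to $+\infty$ at the left endpoint. Its unique stationary point is $\vep^\ast=\tfrac12(\tfrac1\beta-J\rho)=-\tfrac{J\rho}{2}(1-\tfrac{1}{\beta J\rho})$, and since $\beta\ge\frac1{J\rho}$ forces $\beta>0$, one checks that $\vep^\ast\in(-\rho J/2,0]$ precisely when $\beta\ge\frac1{J\rho}$, in which case it is the minimiser. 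If instead $\beta<\frac1{J\rho}$ (in particular for every $\beta\le0$), then $\psi_{\beta,\rho}'<0$ throughout $(-\rho J/2,0]$, so $\psi_{\beta,\rho}$ is strictly decreasing and its minimum is the endpoint $\vep=0$. The map $\beta\mapsto\vep^\ast(\beta)$ is a continuous strictly decreasing bijection from $[\frac1{J\rho},\infty)$ onto $(-\rho J/2,0]$, which gives the two-way correspondence asserted in the theorem.

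The substantive step is the Laplace analysis of $Z_{\text{C}}(\beta,\rho;N)=\int_{-\rho J/2}^{0}\rmd\vep\,e^{-N\psi_{\beta,\rho}(\vep)}\,w(\vep)$ with $w(\vep)=(-2\vep/J)^{-1/2}(\rho+2\vep/J)^{-3/2}$, together with the corresponding estimates for $\mean{F(\vep)}'=Z_{\text{C}}^{-1}\int\rmd\vep\,e^{-N\psi_{\beta,\rho}}w\,F$ with $F(\vep)=\vep$ and $F(\vep)=\vep^2$. Here $w$ is real-analytic and positive on the open interval but blows up at both ends: an integrable $|\vep|^{-1/2}$ singularity at $\vep=0$, and a $(\rho+2\vep/J)^{-3/2}$ singularity at $-\rho J/2$ which is compensated by the factor $(\rho+2\vep/J)^{N/2}$ inside $e^{-N\psi_{\beta,\rho}}$, so that the integrand equals $e^{-N\beta\vep}(-2\vep/J)^{-1/2}(\rho+2\vep/J)^{(N-3)/2}$ and is integrable for $N\ge2$. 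I would then split into three cases, using the Laplace toolbox of \ref{asymptotic1}. For $\beta>\frac1{J\rho}$ the minimiser $\vep^\ast$ is interior with $w$ smooth and positive near it and $\psi_{\beta,\rho}''(\vep^\ast)>0$, so the usual interior expansion yields $\mean{F}'=F(\vep^\ast)+O(N^{-1})$; taking $F(\vep)=\vep$ and $F(\vep)=\vep^2$ gives $\langle H/N\rangle_{\text{C}}^{\beta,\rho;N}=\vep^\ast+O(N^{-1})$ and $\mean{\vep^2}'-(\mean{\vep}')^2=O(N^{-1})$, which yields (\ref{eq:canfluct}). For $\beta=\frac1{J\rho}$ the minimum is at $\vep=0$ with $\psi_{\beta,\rho}'(0)=0$ as well; rescaling $\vep=-tN^{-1/2}$ turns $Z_{\text{C}}$ into $e^{-N\psi_{\beta,\rho}(0)}N^{-1/4}$ times a convergent integral of $e^{-\frac12\psi_{\beta,\rho}''(0)t^2}t^{-1/2}$, and the same rescaling gives $\mean{\vep}'=O(N^{-1/2})$ and $\mean{\vep^2}'=O(N^{-1})$, consistent with $\vep^\ast=0$ and again matching (\ref{eq:canfluct}). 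For $\beta<\frac1{J\rho}$ the minimum is at $\vep=0$ but $\psi_{\beta,\rho}'(0)=\beta-\frac1{J\rho}<0$; the substitution $\vep=-t/(Na)$ with $a:=\frac1{J\rho}-\beta>0$ shows that $Z_{\text{C}}$ is of order $e^{-N\psi_{\beta,\rho}(0)}N^{-1/2}$, $\mean{\vep}'=O(N^{-1})$ and $\mean{\vep^2}'=O(N^{-2})$, whence $\langle H/N\rangle_{\text{C}}^{\beta,\rho;N}=O(N^{-1})$ and $\sigma_{\text{C}}^{\beta,\rho;N}(H/N)=O(N^{-1})$.

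The main obstacle is precisely this last step: since $w$ is neither smooth nor bounded at the endpoints, the textbook interior Laplace expansion cannot be quoted verbatim near $\vep=0$, so I would combine the standard exponential localisation away from the minimiser (legitimate because $\psi_{\beta,\rho}$ exceeds its minimum by a fixed gap outside any neighbourhood of it, and the full integrand is integrable for $N\ge2$) with an explicit rescaling of the integrable endpoint singularity; and I would treat separately the borderline case $\beta=\frac1{J\rho}$, where the first derivative of $\psi_{\beta,\rho}$ also vanishes at the endpoint. These are exactly the situations that \ref{asymptotic1} is designed to handle, so beyond organising the cases the remaining work is routine.
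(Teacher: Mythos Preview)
Your proposal is correct and follows essentially the same route as the paper: establish the identities via the mixture representation (or equivalently by differentiating $f_{\text{C}}$), compute $\psi_{\beta,\rho}'$ and $\psi_{\beta,\rho}''$ to locate the minimiser and set up the $\beta\leftrightarrow\vep^\ast$ correspondence, and then invoke the Laplace asymptotics of \ref{asymptotic1} with the appropriate endpoint parameters ($\mu=2$ for $\beta\ge\tfrac{1}{J\rho}$, $\mu=1$ for $\beta<\tfrac{1}{J\rho}$, and half-integer $\alpha$ coming from the $|\vep|^{-1/2}$ weight). You supply more detail than the paper on the explicit rescalings and on the integrable singularities of $w$, which is fine and in fact gives the sharper $O(N^{-1})$ for the mean in the strictly interior case; note only the harmless slip $\psi_{\beta,\rho}''(\vep)=\tfrac{2}{(J\rho+2\vep)^2}$ rather than $\tfrac{2J}{(J\rho+2\vep)^2}$.
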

\begin{proof}
The first part of this theorem follows directly by differentiating the specific free energies with respect to $\beta$ and dividing by the degrees of freedom N appropriately.

Next, fix $\rho>0$. We have
\begin{align*}
\psi_{\beta, \rho} ' (\varepsilon) = \beta - \frac{\frac{1}{J}}{\rho + \frac{2 
\varepsilon}{J}}, \quad \psi_{\beta, \rho} '' (\varepsilon) = \frac{2}{J^2} \frac{1}{ \left( 
\rho + \frac{2 \varepsilon}{J} \right)^2} .
\end{align*}
It follows that $\psi_{\beta, \rho}$ is strictly convex and obtains a unique global 
minimum when $\psi'_{\beta, \rho} (\varepsilon) = 0$. Computing it from the above, 
we see that $\psi'_{\beta, \rho} (\varepsilon) = 0 \iff \varepsilon = - \frac{J 
\rho}{2} \left(1 -  \frac{1}{\beta J \rho} \right)$. In particular, we see that 
for every $\varepsilon \in \left( - \frac{J \rho}{2}, 0 \right]$ there exists 
$\beta \ge \frac{1}{J \rho}$ such that the given $\varepsilon$ minimizes 
$\psi_{\beta, \rho}$, and, conversely, for every $\beta \ge \frac{1}{\rho J}$ there exists 
a minimizing value $\varepsilon \in \left( - \frac{\rho J}{2}, 0 \right]$. 
Furthermore, if $\beta < \frac{1}{\rho J}$, then $\psi'_{\beta, \rho}$ is strictly negative on 
the entire interval, and, as a result $\psi_{\beta, \rho}$ is minimized for $\varepsilon 
= 0$.
 
For the asymptotics, if $\beta \ge \frac{1}{J \rho}$, then the 
asymptotics are standard and we have
\begin{align*}
\left< \frac{H}{N} \right>_{\text{C}}^{\beta, \rho;N} = - \frac{J \rho}{2} 
\left( 1 - \frac{1}{\beta J \rho} \right) + O (N^{-\frac{1}{2}}), \quad 
\sigma_{\text{C}}^{\beta, \rho;N} \left( \frac{H}{N} \right) = O 
(N^{-\frac{1}{2}}) .
\end{align*}
If $\beta = \frac{1}{J \rho}$, then we need to choose half-integer values of ``$\alpha$'' in the Laplace method, but this will not alter the scaling of the asymptotics for the above ratios.
However, if $\beta < \frac{1}{J \rho}$, then $\psi'_{\beta, \rho} (\varepsilon) < 0$ for all $\varepsilon$, and since then ``$\mu=1$'' in the Laplace method, it follows that
\begin{align*}
\left< \frac{H}{N} \right>_{\text{C}}^{\beta, \rho;N} = O (N^{-1}), \quad 
\sigma_{\text{C}}^{\beta, \rho;N} \left( \frac{H}{N} \right) = O (N^{-1}) .
\end{align*}
This completes the proof of the Theorem.
\end{proof}

\subsection{Grand canonical analysis}

Finally, we will present the grand canonical ensemble and auxiliary grand canonical ensemble and the direct coupling method.  If one considers microcanonical to be the most fundamental ensemble, this will result in substantial simplification of computation of its expectation values in the thermodynamic limit since these can now be computed using the grand canonical ensemble which is a Gaussian measure.

\begin{definition}[Fluctuating magnetization and particle 
density/auxiliary grand canonical ensemble] Let $\mu \in \mathbb{R}$ and $\eta > 0$. The auxiliary grand canonical ensemble with magnetic potential $\mu$ and chemical potential $\eta$ is defined 
via its action on bounded $1$-Lipschitz functions $f : \mathcal{S} \to \mathbb{R}$ by
\begin{align*}
\left< f \right>_{\text{GC}}^{\mu, \eta;N} := 
\frac{1}{\int_{\mathbb{R}^{N}} d \phi \ e^{-\mu M[\phi] - \eta N[\phi]}}
\int_{\mathbb{R}^{N}} d \phi \ e^{-\mu M[\phi] -\eta N[\phi]} f(\phi)
 .
\end{align*}
\end{definition}
The definition may be rewritten using the same parametrization of the integrals as for the auxiliary microcanonical ensemble.  The result is summarized in the following Lemma.
\begin{lemma}
 Let $\mu \in \mathbb{R}$ and $\eta > 0$. 
 We have for all bounded $1$-Lipschitz functions $f : 
\mathcal{S} \to \mathbb{R}$
\begin{align*}
& \left< f \right>_{\text{GC}}^{\mu, \eta;N} = \frac{1}{Z_\text{GC}(\mu, \eta;N)} 
\int_{- \infty}^{\infty} dz \ e^{- \eta \left(z + \frac{\mu \sqrt{N}}{2 \eta} 
\right)^2} \int_0^\infty dr \ r^{N - 2} e^{- \eta r^2} 
\\ & \qquad \times
\frac{1}{|\mathbb{S}^{N - 2}|} \int_{\mathbb{S}^{N - 2}} d \Omega \ \left(f \circ U^{-1} \right) (z, 
r\Omega) ,
\end{align*}
where
\begin{align*}
Z_\text{GC}(\mu, \eta;N) := \int_{- \infty}^{\infty} dz \ e^{- \eta \left(z + 
\frac{\mu \sqrt{N}}{2 \eta} \right)^2} \int_0^\infty dr \ r^{N - 2} e^{- \eta 
r^2} .
\end{align*}
 \end{lemma}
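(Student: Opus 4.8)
The plan is to reduce the $N$-dimensional integral defining $\left< f \right>_{\text{GC}}^{\mu, \eta;N}$ to the $(z,r,\Omega)$ coordinates by a short chain of elementary changes of variables, in the same spirit as Examples~\ref{magnetex} and \ref{energyex} but now without any $\delta$-functions. Since $\eta>0$ and $f$ is bounded, the integrand $e^{-\mu M[\phi]-\eta N[\phi]}f(\phi)$ is absolutely integrable on $\mathbb{R}^N$, so every integral below is finite and Fubini's theorem applies at each step; in particular the defining ratio is well posed.

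First I would apply the orthogonal change of variables $(z,\psi)=U\phi$, with $U$ the fixed matrix from \Cref{changeofvar}. Orthogonality gives $\rmd\phi = \rmd z\,\rmd\psi$ and $N[\phi]=\norm{\phi}_2^2 = z^2+\norm{\psi}_2^2$, while \Cref{changeofvar} gives $M[\phi]=\sum_i\phi_i = \sqrt{N}\,z$. Hence both the numerator and the denominator of $\left< f \right>_{\text{GC}}^{\mu, \eta;N}$ become integrals over $\mathbb{R}\times\mathbb{R}^{N-1}$ of $e^{-\mu\sqrt{N}z-\eta z^2-\eta\norm{\psi}_2^2}\,(f\circ U^{-1})(z,\psi)$, with $f$ replaced by $1$ in the denominator.

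Next I would complete the square in the $z$-variable,
\[
 -\mu\sqrt{N}\,z-\eta z^2 = -\eta\Bigl(z+\tfrac{\mu\sqrt{N}}{2\eta}\Bigr)^2 + \tfrac{\mu^2 N}{4\eta}\,,
\]
and observe that the resulting constant $e^{\mu^2 N/(4\eta)}$ is common to numerator and denominator and cancels in the ratio. Then I would pass to polar coordinates $\psi=r\Omega$ in $\mathbb{R}^{N-1}$, with $r=\norm{\psi}_2\ge 0$ and $\Omega\in\mathbb{S}^{N-2}$, using $\rmd\psi = r^{N-2}\,\rmd r\,\rmd\Omega$ where $\rmd\Omega$ is the unnormalised surface measure on $\mathbb{S}^{N-2}$. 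This turns the numerator into
\[
 e^{\mu^2 N/(4\eta)}\int_{-\infty}^\infty \rmd z\, e^{-\eta(z+\frac{\mu\sqrt{N}}{2\eta})^2}\int_0^\infty \rmd r\, r^{N-2}e^{-\eta r^2}\int_{\mathbb{S}^{N-2}}\rmd\Omega\,(f\circ U^{-1})(z,r\Omega)\,,
\]
and the denominator into the same expression with the angular integral replaced by $|\mathbb{S}^{N-2}|=\int_{\mathbb{S}^{N-2}}\rmd\Omega$. Dividing, inserting a factor $|\mathbb{S}^{N-2}|/|\mathbb{S}^{N-2}|$, and recognising the definition of $Z_{\text{GC}}(\mu,\eta;N)$ then gives exactly the claimed identity.

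There is no genuine obstacle here; the only points deserving a word of care are the measure-preservation of the orthogonal map $U$ (so the Jacobian is $1$) and the absolute integrability needed to disentangle the iterated integrals, both of which are immediate from $\eta>0$ and the boundedness of $f$. Note that only boundedness of $f$ is actually used: the $1$-Lipschitz hypothesis is carried along merely because the ensemble was defined on that class of observables, but it plays no role in this computation.
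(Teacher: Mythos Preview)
Your proposal is correct and follows exactly the approach the paper intends: the paper does not give an explicit proof of this lemma but simply states that ``the definition may be rewritten using the same parametrization of the integrals as for the auxiliary microcanonical ensemble,'' i.e., the orthogonal change of variables from \Cref{changeofvar} followed by polar coordinates in $\mathbb{R}^{N-1}$, precisely as in Examples~\ref{magnetex} and \ref{energyex}. Your observation that only boundedness of $f$ is actually used, and that the Lipschitz hypothesis is superfluous for this particular computation, is also accurate.
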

We can now construct a direct coupling between the auxiliary microcanonical ensemble and 
the auxiliary grand canonical ensemble.
\begin{theorem} \label{directcoupling1}
Suppose $\rho > 0$, $m \in (- \sqrt{\rho}, 
\sqrt{\rho})$, $\mu \in \mathbb{R}$ and $\eta > 0$ satisfy the 
relations
\begin{align*}
m = - \frac{\mu}{2 \eta},\quad \rho = 
  \frac{1}{2 \eta} + \frac{\mu^2}{4 \eta^2} .
\end{align*}
Then,
\begin{align*}
w_2 (\mu_{\text{GC}}^{\mu, \eta;N}, \mu_{\text{MC}}^{m, \rho;N};N) \leq 
\frac{1}{\sqrt{\rho - m^2}} \frac{1}{\sqrt{N}},
\end{align*}
which implies
\begin{align*}
w_2 (\mu_{\text{GC}}^{\mu, \eta;N}, \mu_{\text{MC}}^{m, \rho;N};N) = O (N^{-\frac{1}{2}}) 
.
\end{align*}
\end{theorem}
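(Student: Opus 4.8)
The plan is to realise $\mu_{\text{MC}}^{m,\rho;N}$ as the pushforward of $\mu_{\text{GC}}^{\mu,\eta;N}$ under an explicit transport map written in the coordinates $(z,\psi)=U\phi$ adapted to the matrix $U$ of \Cref{changeofvar}, and then to bound the resulting coupling cost by elementary Gaussian moment estimates. By the Lemma immediately preceding this Theorem, under $\mu_{\text{GC}}^{\mu,\eta;N}$ the coordinate $z$ and the vector $\psi\in\R^{N-1}$ are independent, $z$ being Gaussian with mean $-\mu\sqrt N/(2\eta)=m\sqrt N$ and variance $1/(2\eta)$, and $\psi$ being an isotropic Gaussian on $\R^{N-1}$ with per-component variance $1/(2\eta)$; the parameter relations give $1/(2\eta)=\rho-m^2$. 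Under $\mu_{\text{MC}}^{m,\rho;N}$ one has instead $z=m\sqrt N$ deterministically and $\psi=R\,\Omega$ with $R:=\sqrt{N(\rho-m^2)}$ and $\Omega$ uniform on $\mathbb S^{N-2}$. Accordingly I would take $T(z,\psi):=\bigl(m\sqrt N,\ R\,\psi/\norm{\psi}_2\bigr)$ and lift it to $T':=U^{-1}\circ T\circ U$ on $\mathcal S$, exactly in the manner of \Cref{magcouplingthm2} and \Cref{energycouplingthm2}.

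The first step is to verify that $T'_*\mu_{\text{GC}}^{\mu,\eta;N}=\mu_{\text{MC}}^{m,\rho;N}$. This uses that, for the isotropic Gaussian, the radius $\norm{\psi}_2$, the direction $\psi/\norm{\psi}_2$ and the coordinate $z$ are mutually independent with $\psi/\norm{\psi}_2$ uniform on $\mathbb S^{N-2}$; applying $T$ collapses $z$ onto the constant $m\sqrt N$, discards the radius and keeps the direction, so that $T(U\phi)$ is distributed exactly as $\mu_{\text{MC}}^{m,\rho;N}$ in $U$-coordinates (the exceptional set $\{\psi=0\}$ is Gaussian-null, so $T'$ is well defined $\mu_{\text{GC}}$-a.e.). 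Then, using the coupling $\gamma$ associated with $T'$ as in Sect.\ \ref{sec:couplandtransport} together with the orthogonality of $U$, so that $\norm{\phi-T'\phi}_2=\norm{U\phi-T(U\phi)}_2$ and $\norm{\psi-R\psi/\norm{\psi}_2}_2=\bigl|\,\norm{\psi}_2-R\,\bigr|$, the cost splits into
\[
 w_2(\mu_{\text{GC}}^{\mu,\eta;N},\mu_{\text{MC}}^{m,\rho;N};N)^2 \le \frac1N\,\mean{(z-m\sqrt N)^2}_{\text{GC}}^{\mu,\eta;N} + \frac1N\,\mean{(\norm{\psi}_2-R)^2}_{\text{GC}}^{\mu,\eta;N}.
\]

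The first term is just $\frac1N\,\operatorname{Var}_{\text{GC}}(z)=\frac{\rho-m^2}{N}$. For the radial term I would use the pointwise inequality $(\norm{\psi}_2-R)^2\le(\norm{\psi}_2^2-R^2)^2/R^2$ together with the fact that $\norm{\psi}_2^2$ is a sum of $N-1$ i.i.d.\ variables of mean $\rho-m^2$ and variance $2(\rho-m^2)^2$, so that $\mean{\norm{\psi}_2^2}_{\text{GC}}^{\mu,\eta;N}-R^2=-(\rho-m^2)$ and $\mean{(\norm{\psi}_2^2-R^2)^2}_{\text{GC}}^{\mu,\eta;N}=(2N-1)(\rho-m^2)^2$; dividing by $R^2=N(\rho-m^2)$ again gives a contribution of order $(\rho-m^2)/N$. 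Collecting the two pieces yields the asserted $O(N^{-1/2})$ bound, and the explicit prefactor follows by tracking these constants. I expect the only point requiring some care to be the radial estimate: the crude comparison with $(\norm{\psi}_2^2-R^2)^2$ already produces the $N^{-1/2}$ rate, but pinning down the sharpest constant means instead computing $\mean{\norm{\psi}_2}_{\text{GC}}^{\mu,\eta;N}$ exactly through the ratio $\Gamma(N/2)/\Gamma((N-1)/2)$ and invoking its large-$N$ asymptotics, in the same spirit as the Laplace-type estimates used elsewhere in the paper; the orthogonal change of variables and the verification of the pushforward are otherwise routine.
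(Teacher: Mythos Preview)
Your proposal is correct and follows the paper's proof essentially line for line: the same transport map $T(z,\psi)=(m\sqrt N,\ R\,\psi/\norm{\psi}_2)$ lifted via $U$, the same splitting of the cost into the $z$-variance and the radial term, and the same crude bound $(\norm{\psi}_2-R)^2\le(\norm{\psi}_2^2-R^2)^2/R^2$ followed by the second-moment computation yielding $(2N-1)(\rho-m^2)^2$. The paper does not use the $\Gamma$-ratio refinement you mention at the end; the crude comparison already suffices.

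One remark on the constant: tracking your (and the paper's) computation gives $w_2^2\le \frac{1}{N}\cdot\frac{1}{2\eta}\cdot\frac{3N-1}{N}=\frac{(3N-1)(\rho-m^2)}{N^2}$, since $\frac{1}{2\eta}=\rho-m^2$. This produces a prefactor proportional to $\sqrt{\rho-m^2}$, not the $1/\sqrt{\rho-m^2}$ appearing in the statement; the paper's own final line contains what appears to be a slip in substituting $\frac{1}{2\eta}$. The $O(N^{-1/2})$ conclusion is unaffected, but your sentence ``the explicit prefactor follows by tracking these constants'' would not reproduce the displayed constant as written.
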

\begin{proof}
Define $T : \mathbb{R} \times \mathbb{R}^{N - 1} \to \mathbb{R} \times 
\mathbb{R}^{N - 1}$ by
\begin{align*}
T(z, \psi) := \left(m \sqrt{N}, \sqrt{N(\rho - m^2)} \frac{\psi}{|| \psi ||_ 2} 
\right) ,
\end{align*}
and set $T' := U^{-1} \circ T \circ U$. It follows that $\left< f \circ T 
\right>_{\text{GC}}^{\mu, \eta;N} = \left< f \right>_{\text{MC}}^{m, \rho;N}$, and thus $T'$ is a transport map.  Therefore, using 
the related coupling we find an estimate
\begin{align*}
& w_2 (\mu_{\text{GC}}^{\mu, \eta;N}, \mu_{\text{MC}}^{m, \rho;N};N)^2 
 \\ & \quad  \leq 
\frac{1}{N} \frac{1}{\int_{- \infty}^{\infty} dz \ e^{- \eta \left(z + \frac{\mu 
\sqrt{N}}{2 \eta} \right)^2}} \int_{- \infty}^{\infty} dz \ e^{- \eta \left(z + 
\frac{\mu \sqrt{N}}{2 \eta} \right)^2} \left( z - m \sqrt{N} \right)^2 
 \\ & \qquad 
+ \frac{1}{N} \frac{1}{\int_{\mathbb{R}^{N - 1}} d \psi \ e^{- \eta || \psi ||^2} }\int_{\mathbb{R}^{N - 1}} d \psi e\ ^{- \eta || \psi ||^2} \left| || \psi || - \sqrt{N(\rho - m^2)} \right|^2 
 \\ & \quad = \frac{1}{N} \frac{1}{\int_{- \infty}^{\infty} dz \ e^{- \frac{ z^2}{2}}} 
\int_{- \infty}^{\infty} dz \ e^{- \frac{z^2}{2}} \left( \frac{z}{\sqrt{2 \eta}} 
- \left (m \sqrt{N} + \frac{\mu \sqrt{N}}{2 \eta} \right) \right)^2 
 \\ & \qquad + \frac{1}{N} \frac{1}{\int_{\mathbb{R}^{N - 1}} d \psi \ e^{-  \frac{|| \psi 
||^2}{2}} } \int_{\mathbb{R}^{N - 1}} d \psi \ e^{-  \frac{|| \psi ||^2}{2}} 
\left| \frac{|| \psi ||}{\sqrt{2 \eta}} - \sqrt{N(\rho - m^2)} \right|^2 .
\end{align*}
We compute
\begin{align*}
\frac{\mu \sqrt{N}}{2 \eta} = - m \sqrt{N}, \ \frac{1}{\sqrt{2 \eta}} = 
\sqrt{\rho - m^2} \iff \mu = - \frac{m}{\rho - m^2}, \ \eta = \frac{1}{2 (\rho - m^2)} .
\end{align*}
The converse result states that 
\begin{align*}
m = - \frac{\mu}{2 \eta}, \quad 
\rho = \frac{1}{2 \eta} + \frac{\mu^2}{4 \eta^2} .
\end{align*}

It follows that for every pair $(m, \rho)$ for which the auxiliary microcanonical ensemble 
exists, there exists a pair $(\mu, \eta)$ such that the auxiliary grand canonical ensemble 
exists, and, the converse result holds as well. For such a pair satisfying the 
equations given above, we have
\begin{align*}
\frac{1}{N} \frac{1}{\int_{- \infty}^{\infty} dz \ e^{- \frac{ z^2}{2}}} \int_{- 
\infty}^{\infty} dz \ e^{- \frac{z^2}{2}} \left( \frac{z}{\sqrt{2 \eta}} - \left 
(m \sqrt{N} + \frac{\mu \sqrt{N}}{2 \eta} \right) \right)^2 = \frac{1}{N} 
\frac{1}{2 \eta } = \frac{1}{N} \frac{1}{4 (\rho - m^2)} ,
\end{align*}
and
\begin{align*}
&\frac{1}{N} \frac{1}{\int_{\mathbb{R}^{N - 1}} d \psi \ e^{-  \frac{|| \psi 
||^2}{2}} } \int_{\mathbb{R}^{N - 1}} d \psi \ e^{-  \frac{|| \psi ||^2}{2}} 
\left| \frac{|| \psi ||}{\sqrt{2 \eta}} - \sqrt{N(\rho - m^2)} \right|^2  \\ &= 
\frac{1}{N} \frac{1}{2 \eta}\frac{1}{\int_{\mathbb{R}^{N - 1}} d \psi \ e^{-  
\frac{|| \psi ||^2}{2}} } \int_{\mathbb{R}^{N - 1}} d \psi \ e^{-  \frac{|| \psi 
||^2}{2}} \left| || \psi || - \sqrt{N} \right|^2 .
\end{align*}
We have $|| \psi ||^2 - N = 1+\sum_{i=1}^{N - 1} (\psi_i^2 - 1)$, and thus
\begin{align*}
\left( || \psi ||^2 - N \right)^2 = \sum_{i=1}^{N - 1} (\psi_i^2 - 1)^2 + 
\sum_{i \not = j}^{N - 1} (\psi_i^2 - 1) (\psi_j^2 - 1) - 2 \sum_{i=1}^{N - 1} 
(\psi_i^2 - 1) + 1 .
\end{align*}
Therefore,
\begin{align*}
& \left| || \psi || - \sqrt{N} \right|^2 = \frac{\left( || \psi ||^2 - N 
\right)^2}{ \left( || \psi || + \sqrt{N}\right)^2} 
\\ & \quad
\leq \frac{1}{N} \left( 
\sum_{i=1}^{N - 1} (\psi_i^2 - 1)^2 + \sum_{i \not = j}^{N - 1} (\psi_i^2 - 1) 
(\psi_j^2 - 1) - 2 \sum_{i=1}^{N - 1} (\psi_i^2 - 1) + 1\right) .
\end{align*}
It follows that
\begin{align*}
\frac{1}{\int_{\mathbb{R}^{N - 1}} d \psi \ e^{-  \frac{|| \psi ||^2}{2}} } 
\int_{\mathbb{R}^{N - 1}} d \psi \ e^{-  \frac{|| \psi ||^2}{2}} \left| || \psi 
|| - \sqrt{N} \right|^2 \leq \frac{2 N - 1}{N} . 
\end{align*}

Combining all the terms, we find
\begin{align*}
w_2 (\mu_{\text{GC}}^{\mu, \eta;N}, \mu_{\text{MC}}^{m, \rho;N};N)^2 \leq 
\frac{1}{N} \frac{1}{2 \eta} \frac{3N - 1}{N} = \frac{1}{4 (\rho - m^2)} 
\frac{3N - 1}{N^2} \le \frac{1}{(\rho - m^2) N},
\end{align*}
which implies the bound stated in the Theorem.  
\end{proof}

If $h\ne 0$, the microcanonical energy ensemble is well-approximated by an auxiliary microcanonical magnetization ensemble whose auxiliary grand canonical theory we already covered above.  
For the case of $h = 0$, we consider the following grand canonical energy ensembles.
\begin{definition}[Fluctuating energy and particle density/grand canonical ensemble] Suppose $h=0$, $\mu > 0$ and $\beta < \frac{2 \mu}{J}$.
The grandcanonical ensemble with inverse temperature $\beta$ and chemical potential $\mu$ is defined 
via its action on bounded $1$-Lipschitz functions $f : \mathcal{S} \to \mathbb{R}$ by
\begin{align*}
\left< f \right>_{\text{GC}}^{\beta, \mu;N} := 
\frac{1}{\int_{\mathbb{R}^{N}} d \phi \ e^{-\beta H[\phi] - \mu N[\phi]}}
\int_{\mathbb{R}^{N}} d \phi \ e^{-\beta H[\phi] - \mu N[\phi]} f(\phi)
 .
\end{align*}
\end{definition}
The definition may be rewritten using the same parametrization of the integrals as for the microcanonical ensemble.  The result is summarized in the following Lemma.
\begin{lemma}
Let $h = 0$. Let $\mu > 0$ and $\beta < \frac{2 \mu}{J}$. Then
\begin{align*}
\left< f \right>_{\text{GC}}^{\beta, \mu;N} = \frac{1}{Z_\text{GC} (\beta, 
\mu;N)}\int_{- \infty}^\infty dz \ e^{- \left( \mu - \frac{\beta J}{2} 
\right)z^2} \int_{0}^\infty dr \ r^{N - 2} e^{- \mu r^2}   
\frac{1}{|\mathbb{S}^{N - 2}|}\int_{\mathbb{S}^{N - 2}} d \Omega \ \left(f \circ 
U^{-1} \right) (z, r\Omega),
\end{align*}
where
\begin{align*}
Z_\text{GC} (\beta, \mu;N): = \int_{- \infty}^\infty dz \ e^{- \left( \mu - 
\frac{\beta J}{2} \right)z^2} \int_{0}^\infty dr \ r^{N - 2} e^{- \mu r^2} .
\end{align*}
 \end{lemma}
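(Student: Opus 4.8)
The plan is to reduce the defining ratio of $\left< f \right>_{\text{GC}}^{\beta,\mu;N}$ to the claimed form by exactly the orthogonal change of variables and passage to spherical coordinates already used in \Cref{magnetex} and \Cref{energyex}, the simplification here being that the grand canonical density contains no $\delta$-functions, so the manipulation is an elementary application of the change-of-variables formula together with Tonelli's theorem. First I would set $h=0$ in the Hamiltonian and use the identity $H[\phi] = -\frac{J}{2N}M[\phi]^2$ recorded after the definition of the continuum Curie--Weiss Hamiltonian, together with $N[\phi] = \sum_{x\in\Lambda}\phi(x)^2 = \norm{\phi}_2^2$, so that the weight defining the ensemble is $e^{-\beta H[\phi]-\mu N[\phi]} = e^{\frac{\beta J}{2N}M[\phi]^2 - \mu \norm{\phi}_2^2}$.

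Next I would apply the orthogonal matrix $U$ of \Cref{changeofvar}, writing $(z,\psi) := U\phi \in \R\times\R^{N-1}$. Since $U$ is orthogonal, Lebesgue measure is preserved and $\norm{\phi}_2^2 = z^2 + \norm{\psi}_2^2$; moreover its defining property $(U\phi)_1 = \frac{1}{\sqrt N}\sum_i\phi_i$ gives $M[\phi] = \sqrt N\,z$, whence the exponent becomes $\frac{\beta J}{2}z^2 - \mu\bigl(z^2+\norm{\psi}_2^2\bigr) = -\bigl(\mu - \frac{\beta J}{2}\bigr)z^2 - \mu\norm{\psi}_2^2$. I would then introduce spherical coordinates $\psi = r\Omega$ with $r = \norm{\psi}_2 \ge 0$ and $\Omega\in\mathbb{S}^{N-2}$, so that $d\psi = r^{N-2}\,dr\,d\Omega$ and the $\psi$-integral turns into $\int_0^\infty dr\, r^{N-2} e^{-\mu r^2}\int_{\mathbb{S}^{N-2}} d\Omega\,(f\circ U^{-1})(z,r\Omega)$.

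The remaining step is to disentangle the iterated integrals and to recognize the normalization. This is precisely where the standing assumptions $\mu>0$ and $\beta < \frac{2\mu}{J}$ enter: the first makes $\int_0^\infty r^{N-2}e^{-\mu r^2}\,dr$ finite, the second makes $\mu - \frac{\beta J}{2}>0$ so that $\int_\R e^{-(\mu-\beta J/2)z^2}\,dz$ is finite, and since $f$ is bounded the full integrand is absolutely integrable; hence $Z_{\text{GC}}(\beta,\mu;N)\in(0,\infty)$ and Tonelli's theorem legitimizes the factorized form. Applying the identical computation to the normalizing integral $\int_{\R^N} d\phi\, e^{-\beta H[\phi]-\mu N[\phi]}$, i.e.\ the case $f\equiv 1$ with $\frac{1}{|\mathbb{S}^{N-2}|}\int_{\mathbb{S}^{N-2}} d\Omega = 1$, identifies it with the stated $Z_{\text{GC}}(\beta,\mu;N)$, and dividing yields the claimed identity. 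I do not expect any genuine obstacle here; the only points requiring care are the interchange of the order of integration and the positivity and finiteness of $Z_{\text{GC}}$, both of which are immediate consequences of the parameter restrictions built into the definition.
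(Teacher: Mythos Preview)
Your proposal is correct and follows exactly the approach the paper intends: the paper does not supply a detailed proof of this lemma but merely remarks that ``the definition may be rewritten using the same parametrization of the integrals as for the microcanonical ensemble,'' i.e., the orthogonal change of variables of \Cref{changeofvar} followed by spherical coordinates as in \Cref{magnetex} and \Cref{energyex}. Your write-up spells out precisely those steps, including the role of the hypotheses $\mu>0$ and $\beta<2\mu/J$ for integrability, so nothing further is needed.
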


For the fixed energy density ensemble, there is only a single value of energy 
density for which a direct coupling can be constructed.
\begin{theorem}
Suppose $h=0$ and $\mu > 0$ are given.  Define $\rho = 
\frac{1}{2 \mu}$. Then, for all $\beta < \frac{2 \mu}{J}$, we have
\begin{align*}
w_2 (\mu_{\text{GC}}^{\beta, \mu;N}, \mu_{\text{MC}}^{\vep, \rho;N}|_{\vep=0};N)^2 \leq 
\frac{1}{N} \frac{1}{2 \left(\mu - \frac{\beta J}{2} \right)} + \frac{1}{N} \frac{1}{\mu} = \frac{1}{N}\left(  \frac{\rho}{1 - \rho \beta J} + 2 \rho \right),
\end{align*}
implying
\begin{align*}
w_2 (\mu_{\text{GC}}^{\beta, \mu;N}, \mu_{\text{MC}}^{\vep , \rho;N}|_{\vep=0};N) = O (N^{-\frac{1}{2}}) 
.
\end{align*}
\end{theorem}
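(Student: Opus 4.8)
The plan is to carry out the same construction as in the proof of Theorem~\ref{directcoupling1}, only replacing the auxiliary grand canonical Gaussian by the $h=0$ grand canonical Gaussian. By the preceding Lemma, in the $U$-coordinates $\mu_{\text{GC}}^{\beta,\mu;N}$ is the product of an independent $z$-variable with density $\propto e^{-(\mu-\beta J/2)z^2}$ (the assumption $\beta<\tfrac{2\mu}{J}$ guarantees $\mu-\tfrac{\beta J}{2}>0$, so this variance is finite and positive) and an $(N-1)$-dimensional isotropic Gaussian $\psi$-part with density $\propto e^{-\mu\|\psi\|_2^2}$. Since $\rho=\tfrac{1}{2\mu}$ and $\varepsilon=0$ forces $m=0$ in Definition~\ref{th:defcontfixedE}, the target measure $\mu_{\text{MC}}^{\varepsilon,\rho;N}|_{\varepsilon=0}$ equals $\mu_{\text{MC}}^{0,\rho;N}$, which in the $U$-coordinates is the uniform measure on the sphere $\{(0,\sqrt{N\rho}\,\Omega):\Omega\in\mathbb{S}^{N-2}\}$.

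First I would define the transport map $T(z,\psi):=\bigl(0,\ \sqrt{N\rho}\,\psi/\|\psi\|_2\bigr)$ on $\mathbb{R}\times\mathbb{R}^{N-1}$ and set $T':=U^{-1}\circ T\circ U$ on $\mathcal{S}$. The key point is that, under $\mu_{\text{GC}}^{\beta,\mu;N}$, the direction $\psi/\|\psi\|_2$ is uniform on $\mathbb{S}^{N-2}$ and independent of both $\|\psi\|_2$ and $z$ (isotropy of the $\psi$-Gaussian); combined with the Lemma's radial--angular representation this gives $T_*\mu_{\text{GC}}^{\beta,\mu;N}=\mu_{\text{MC}}^{0,\rho;N}$, so $T'$ is a transport map between the two measures and induces a coupling as in Section~\ref{sec:couplandtransport}.

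Next I would estimate $w_2$ through this coupling. Using orthogonality of $U$,
\[
w_2\bigl(\mu_{\text{GC}}^{\beta,\mu;N},\mu_{\text{MC}}^{0,\rho;N};N\bigr)^2\le \frac1N\bigl\langle \|\phi-T'\phi\|_2^2\bigr\rangle_{\text{GC}}^{\beta,\mu;N}=\frac1N\Bigl\langle z^2+\bigl(\|\psi\|_2-\sqrt{N\rho}\bigr)^2\Bigr\rangle,
\]
where the $z$-term contributes exactly $\tfrac1N\cdot\tfrac{1}{2(\mu-\beta J/2)}$. For the radial term I would bound $|\,\|\psi\|_2-\sqrt{N\rho}\,|^2=\tfrac{(\|\psi\|_2^2-N\rho)^2}{(\|\psi\|_2+\sqrt{N\rho})^2}\le \tfrac{(\|\psi\|_2^2-N\rho)^2}{N\rho}$ and compute $\bigl\langle(\|\psi\|_2^2-N\rho)^2\bigr\rangle=\mathrm{Var}(\|\psi\|_2^2)+\bigl(\langle\|\psi\|_2^2\rangle-N\rho\bigr)^2=2(N-1)\rho^2+\rho^2\le 2N\rho^2$, giving $\tfrac1N\bigl\langle(\|\psi\|_2-\sqrt{N\rho})^2\bigr\rangle\le \tfrac{2\rho}{N}=\tfrac1N\cdot\tfrac1\mu$. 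Adding the two contributions and substituting $\mu=\tfrac{1}{2\rho}$, $\mu-\tfrac{\beta J}{2}=\tfrac{1-\rho\beta J}{2\rho}$ yields the stated bound, and the $O(N^{-1/2})$ statement follows at once.

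I do not expect any real obstacle. The only steps needing a little care are the verification of the pushforward identity $T_*\mu_{\text{GC}}^{\beta,\mu;N}=\mu_{\text{MC}}^{0,\rho;N}$ (i.e.\ that the isotropy of the $\psi$-Gaussian makes the normalized direction uniform and independent of the radius) and the elementary second-moment bookkeeping for $\|\psi\|_2^2$, both of which already appeared in the proof of Theorem~\ref{directcoupling1} and transcribe with only the trivial change of variances.
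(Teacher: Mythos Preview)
Your proposal is correct and follows essentially the same route as the paper: the same transport map $T(z,\psi)=(0,\sqrt{N\rho}\,\psi/\|\psi\|_2)$ (which is what the paper's sign-split map collapses to at $\varepsilon=0$), the same orthogonality reduction, the same variance for the $z$-part, and the same $(\|\psi\|_2^2-N\rho)^2/(N\rho)$ bound for the radial part. The only cosmetic difference is that the paper first writes the transport map for general $\varepsilon$ with a $\mathbbm{1}(z\gtrless 0)$ splitting and observes that the $z$-term fails to decay unless $\varepsilon=0$, whereas you go directly to the $\varepsilon=0$ case.
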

\begin{proof}
Let us begin by considering the more general case with $h\in \R$ and $\mu,\rho>0$ arbitrary.
Let $m_+$ and $m_-$ be the corresponding negative and positive magnetization 
densities to the given $\varepsilon$. Define $T : \mathbb{R} \times 
\mathbb{R}^{N - 1} \to \mathbb{R} \times \mathbb{R}^{N - 1}$ by
\begin{align*}
T(z, \psi) := \mathbbm{1}(z \geq 0) \left(m_+ \sqrt{N}, \sqrt{N(\rho - m_+^2)} 
\frac{\psi}{|| \psi ||_ 2} \right) + \mathbbm{1}(z < 0) \left(m_- \sqrt{N}, 
\sqrt{N(\rho - m_-^2)} \frac{\psi}{|| \psi ||_ 2} \right) .
\end{align*}
Define $T' := U^{-1} \circ T \circ U$. It follows that $\left< f \circ T' 
\right>_{\text{GC}}^{\beta, \mu;N} = \left< f \right>_{\text{MC}}^{\varepsilon,\rho;N}$, and $T'$ is a transport map.  Using the associated coupling, we find
\begin{align*}
& w_2 (\mu_{\text{GC}}^{\beta, \mu;N}, \mu_{\text{MC}}^{\varepsilon\rho;N};N)^2 
\leq \frac{1}{N} \frac{1}{\int_{0}^\infty dz \ e^{- \left( \mu - \frac{\beta J}{2} \right)z^2}} \int_{0}^\infty dz \ e^{- \left( \mu - \frac{\beta J}{2} 
\right)z^2} \left( z - \sqrt{- \frac{2 \varepsilon}{J}}  \sqrt{N} \right)^2 
\\ & \qquad + \frac{1}{N} \frac{1}{\int_{\mathbb{R}^{N - 1}} d \psi \ e^{- \mu || \psi 
||^2} }\int_{\mathbb{R}^{N - 1}} d \psi e\ ^{- \mu || \psi ||^2} \left| || \psi 
|| - \sqrt{N \left(\rho - \left( - \frac{2 \varepsilon}{J}\right) \right)} 
\right|^2 
\\ & \quad = \frac{1}{N} \frac{1}{\int_{0}^{\infty} dz \ e^{- \frac{ z^2}{2}}} 
\int_{0}^{\infty} dz \ e^{- \frac{z^2}{2}} \left( \frac{z}{\sqrt{2 \left( \mu - 
\frac{\beta J}{2}\right)}} - \sqrt{- \frac{2 \varepsilon}{J}}  \sqrt{N} 
\right)^2 
\\ & \qquad + \frac{1}{N} \frac{1}{\int_{\mathbb{R}^{N - 1}} d \psi \ e^{-  \frac{|| \psi 
||^2}{2}} } \int_{\mathbb{R}^{N - 1}} d \psi \ e^{-  \frac{|| \psi ||^2}{2}} 
\left| \frac{|| \psi ||}{\sqrt{2 \mu}} - \sqrt{N \left(\rho - \left( - \frac{2 
\varepsilon}{J}\right) \right)} \right|^2 .
\end{align*}

Note that for $\varepsilon \not = 0$, we have
\begin{align*}
\frac{1}{N} \frac{1}{\int_{0}^{\infty} dz \ e^{- \frac{ z^2}{2}}} 
\int_{0}^{\infty} dz \ e^{- \frac{z^2}{2}} \left( \frac{z}{\sqrt{2 \left( \mu - 
\frac{\beta J}{2}\right)}} - \sqrt{- \frac{2 \varepsilon}{J}}  \sqrt{N} 
\right)^2  \sim -\frac{2 \vep}{J} ,
\end{align*}
which does not provide any additional convergence for the local expectation error estimates.  However, under the assumptions listed in the Theorem, i.e., if $h=0=\vep$, $\mu > 0$, $\rho = \frac{1}{2 \mu}$,
we find via the same computation as above that 
\begin{align*}
w_2 (\mu_{\text{GC}}^{\beta, \mu;N}, \mu_{\text{MC}}^{\varepsilon, \rho;N};N)^2 
\leq \frac{1}{N} \frac{1}{2 \left(\mu - \frac{\beta J}{2} \right)} + \frac{1}{N} 
\frac{1}{2 \mu} 2 = \frac{1}{N}\left(  \frac{\rho}{\rho - \beta 
J} + \rho  \right) .
\end{align*}
Note that the above holds for all $\beta < \frac{2 \mu}{J} \iff \beta <  
\frac{1}{\rho J}$.
\end{proof}

For the cases $\beta \geq \frac{1}{\rho J}$, we must introduce another class of 
auxiliary measures.
\begin{definition} Let $\overline{\mu} \geq 0$ and $\eta > 0$. We define an
alternate auxiliary grand canonical ensemble with parameters $\overline{\mu}$ and $\eta$ 
via its action on bounded $1$-Lipschitz functions $f : \mathcal{S} \to 
\mathbb{R}$ by
\begin{align*}
\left< f \right>_{\text{AGC}}^{\overline{\mu}, \eta;N} :=  
\frac{1}{Z_{\text{AGC}} (\overline{\mu}, \eta;N)} \int_{\mathcal{S}} d \phi \, 
e^{- \eta || \phi ||^2} \cosh (\overline{\mu}M[ \phi] ) f(\phi),
\end{align*} 
where
\begin{align*}
Z_{\text{AGC}} (\overline{\mu}, \eta;N) = \int_{\mathcal{S}} d \phi \, e^{- \eta || \phi ||^2} \cosh (\overline{\mu}M[ \phi] )  .
\end{align*}
% and $\overline{\mu}M[ \phi] = \overline{\mu} \sum_{x \in \Lambda} 
% \phi_x$.
\end{definition}

By direct computation, we also have then
\begin{align*}
\left< f \right>_{\text{AGC}}^{\overline{\mu}, \eta;N} = \frac{1}{2} \left< f 
\right>_{\text{GC}}^{\overline{\mu}, \eta;N} + \frac{1}{2} \left< f 
\right>_{\text{GC}}^{-\overline{\mu}, \eta;N} .
\end{align*}
Comparing this to the definition of the microcanonical ensemble, 
we find  using Lemma \ref{directcoupling1} that if $h= 0$ and $-\frac{\rho J}{2}<\vep<0$, then, with $\overline{\mu} = \sqrt{\frac{-2\vep}{J}}\frac{1}{\rho+\frac{2\vep}{J}}$ and $\eta = \frac{J}{2 (J\rho+2\vep)}$, 
\begin{align*}
w_2 (\mu_{\text{AGC}}^{\overline{\mu}, \eta;N}, \mu_{\text{MC}}^{\vep, \rho;N};N) = O (N^{-\frac{1}{2}}) . 
\end{align*}

One should note that there is no direct coupling of this new alternate auxiliary grand 
canonical ensemble to the microcanonical ensemble because the probability 
measures are not disjoint. However, the individual grand canonical ensembles do 
converge suitably to the fixed magnetization density case, and thus we still 
have the desired local convergence properties. We also remark that the case 
$\overline{\mu} = 0$ corresponds to the regular grand canonical ensemble given by a Gaussian measure with $\beta=0$.

\subsection{Convergence of finite marginal distributions and finite moments}

In this subsection, we will collect and apply the upper bounds and error estimates presented for the continuum model to formulate the main local convergence theorems. Since our main goal is to prove convergence theorems and error estimates for local observables of the mean-field spherical microcanonical ensemble, we will present a variety of target measures which the local observables can converge to. Some of these target measures will come from thermodynamic ensembles and others from auxiliary ensembles. In particular, the value of the constant $h$ will have a significant impact on the choice of target measure. 

The main theorems concerning the convergence of moments required the boundedness 
of single moments of all degrees. To this end, we will employ the following lemma.
\begin{lemma} \label{mombound1} Let $f : \mathbb{R} \to \mathbb{R}$ be integrable with respect to any Gaussian measure. Let $x \in \Lambda$ and 
define $P_x : \mathcal{S} \to \mathbb{R}$ by $P_x (\phi) = \phi_x$. It 
follows that for all $\rho > 0$ and $m \in (- \sqrt{\rho}, \sqrt{\rho})$, we 
have
\begin{align*}
\left< f \circ P_x \right>_{\text{MC}}^{m, \rho;N} = O(1)  .
\end{align*}
\end{lemma}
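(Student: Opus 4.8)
The plan is to reduce the microcanonical expectation of $f\circ P_x$ to an integral over the sphere $\mathbb{S}^{N-2}$ and then to exploit the classical fact that the pushforward of the uniform measure on the sphere of radius $\sqrt{N}$ under a single coordinate projection converges, as $N\to\infty$, to a standard Gaussian (a Poincaré/Maxwell–Borel type statement). Concretely, by the definition of $\left< f\circ P_x\right>_{\text{MC}}^{m,\rho;N}$ together with Lemma \ref{changeofvar}, we have
\begin{align*}
\left< f\circ P_x\right>_{\text{MC}}^{m,\rho;N}
=\frac{1}{|\mathbb{S}^{N-2}|}\int_{\mathbb{S}^{N-2}} d\Omega\;
f\!\left(\bigl(U^{-1}(m\sqrt{N},\sqrt{N(\rho-m^2)}\,\Omega)\bigr)_x\right),
\end{align*}
and since the first row of $U$ has all entries $1/\sqrt{N}$ and $U^{-1}=U^T$, the $x$-th coordinate of $U^{-1}(z,\psi)$ is an affine function of $(z,\psi)$; setting $z=m\sqrt N$ it becomes $m + \sqrt{\rho-m^2}\sum_{j}U_{jx}\Omega_{j-1}$ — that is, $\phi_x = m + \sqrt{\rho-m^2}\,\langle v,\Omega\rangle$ for a fixed unit vector $v\in\mathbb{R}^{N-1}$. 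So the law of $\phi_x$ under $\mu_{\text{MC}}^{m,\rho;N}$ is the law of $m+\sqrt{\rho-m^2}\,\Omega_1$ where $\Omega$ is uniform on $\mathbb{S}^{N-2}$ (using rotational invariance to replace $\langle v,\Omega\rangle$ by a single coordinate).

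The next step is to write down the explicit one-dimensional density of $\Omega_1$: it is proportional to $(1-t^2)^{(N-4)/2}\cf{|t|<1}$. Hence
\begin{align*}
\left< f\circ P_x\right>_{\text{MC}}^{m,\rho;N}
=\frac{\int_{-1}^{1} f\!\left(m+\sqrt{\rho-m^2}\,t\right)(1-t^2)^{\frac{N-4}{2}}\,dt}{\int_{-1}^{1}(1-t^2)^{\frac{N-4}{2}}\,dt}.
\end{align*}
Substituting $t=u/\sqrt{N}$ and using $(1-u^2/N)^{(N-4)/2}\to e^{-u^2/2}$ pointwise, one expects the ratio to converge to $\frac{1}{\sqrt{2\pi}}\int_{\mathbb R} f(m+\sqrt{\rho-m^2}\,u)\,e^{-u^2/2}\,du$, which is exactly the expectation of $f$ under a Gaussian, hence finite by hypothesis; in particular the family is $O(1)$. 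To make this rigorous and get a genuine uniform-in-$N$ bound rather than just convergence, I would dominate: on $[-\sqrt N,\sqrt N]$ one has $(1-u^2/N)^{(N-4)/2}\le e^{-c u^2}$ for all large $N$ (say $c=\tfrac14$, valid once $N$ is large enough that $\tfrac{N-4}{2}\cdot\tfrac{1}{N}\ge c$ and using $\log(1-x)\le -x$), while the normalizing integral in the denominator is bounded below by a positive constant times $N^{-1/2}$ and the numerator, after the same substitution, is at most $N^{-1/2}\int_{\mathbb R} |f(m+\sqrt{\rho-m^2}\,u)|\,e^{-cu^2}\,du$; the $N^{-1/2}$ factors cancel and the remaining Gaussian integral of $|f|$ is finite because $f$ is integrable against every Gaussian. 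This yields a bound on $\left|\left< f\circ P_x\right>_{\text{MC}}^{m,\rho;N}\right|$ that is independent of $N$ (for $N\ge N_0(m,\rho,f)$), which is precisely the $O(1)$ claim.

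The main obstacle is the integrability/domination bookkeeping: $f$ is only assumed integrable against Gaussians (no boundedness, no continuity), so one cannot simply invoke dominated convergence with a fixed dominating function on a fixed domain — the domain $[-1,1]$ (rescaled to $[-\sqrt N,\sqrt N]$) grows. The clean fix is the explicit exponential bound $(1-u^2/N)^{(N-4)/2}\le e^{-u^2/4}$ uniformly for $|u|\le\sqrt N$ and all sufficiently large $N$, which reduces everything to a single finite Gaussian integral of $|f|$ and simultaneously handles both convergence and the uniform bound. A minor secondary point is the lower bound on the denominator $\int_{-1}^1(1-t^2)^{(N-4)/2}dt$; this is a Beta function $B(\tfrac12,\tfrac{N-2}{2})$ and standard asymptotics give it $\sim\sqrt{2\pi/N}$, so bounding it below by $c'/\sqrt N$ for large $N$ is routine. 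Since only the $O(1)$ statement is needed, I would not compute the limiting constant explicitly, though it drops out of the argument for free.
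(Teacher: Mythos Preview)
Your approach is essentially the same as the paper's: reduce to the one-coordinate marginal, identify its density as proportional to $(1-t^2)^{(N-4)/2}$ on $[-1,1]$, dominate by a fixed Gaussian via $\log(1-x)\le -x$, and conclude $O(1)$ from integrability of $f$ against Gaussians. The paper simply states the marginal formula (``a direct calculation using the delta function definition'') and then applies the same exponential domination together with dominated convergence.

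There is, however, a computational slip in your derivation of the marginal. You write $\phi_x = m + \sqrt{\rho-m^2}\,\langle v,\Omega\rangle$ for a \emph{unit} vector $v$, but in fact
\[
\phi_x = U_{1x}\cdot m\sqrt{N} + \sqrt{N(\rho-m^2)}\sum_{j\ge 2}U_{jx}\,\Omega_{j-1},
\]
so a factor of $\sqrt{N}$ is missing from the second term; moreover $v=(U_{2x},\dots,U_{Nx})$ satisfies $\|v\|^2 = 1 - 1/N$, not $1$. The correct law of $\phi_x$ is that of $m + \sqrt{(N-1)(\rho-m^2)}\,\Omega_1$, which matches the paper's density $\bigl(1-\tfrac{(\phi_x-m)^2}{(\rho-m^2)(N-1)}\bigr)^{(N-4)/2}$. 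With your formula as written, $\phi_x$ would lie in a fixed interval independent of $N$, and your substitution $t=u/\sqrt{N}$ would send the argument of $f$ to the constant $m$ rather than produce the Gaussian integral you claim. Once the missing $\sqrt{N-1}$ is restored, the substitution $t=u/\sqrt{N-1}$ makes the argument of $f$ equal to $m+\sqrt{\rho-m^2}\,u$ as intended, and the rest of your domination argument goes through and coincides with the paper's.
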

\begin{proof}
A direct calculation using the delta function definition of the measures shows 
that
\begin{align*}
&\left< f \circ P_x \right>_{\text{MC}}^{m, \rho;N} \\ &= \frac{1}{C(m, 
\rho;N)} \int_{- \infty}^\infty d \phi_x \ f (\phi_x) \mathbbm{1} \left( (\phi_x 
- m)^2 \leq (\rho - m^2) (N - 1) \right)  \left( 1  - \frac{(\phi_x - 
m)^2}{(\rho - m^2) (N - 1)}  \right)^{\frac{N - 4}{2}} ,
\end{align*}
where
\begin{align*}
C(m, \rho;N) := \int_{- \infty}^\infty d \phi_x \ \mathbbm{1} \left( (\phi_x - 
m)^2 \leq (\rho - m^2) (N - 1) \right)  \left( 1  - \frac{(\phi_x - m)^2}{(\rho 
- m^2) (N - 1)}  \right)^{\frac{N - 4}{2}}.
\end{align*}
For $N \geq 5$, we have
\begin{align*}
&\mathbbm{1} \left( (\phi_x - m)^2 \leq (\rho - m^2) (N - 1) \right)  \left( 1  
- \frac{(\phi_x - m)^2}{(\rho - m^2) (N - 1)}  \right)^{\frac{N - 4}{2}} \\ 
&\leq \mathbbm{1} \left( (\phi_x - m)^2 \leq (\rho - m^2) (N - 1) \right)  e^{- 
\frac{(\phi_x - m)^2}{2 (\rho - m^2)} \frac{N - 4}{N - 1}}  \\ &\leq e^{- 
\frac{(\phi_x - m)^2}{8 (\rho - m^2)}} .
\end{align*}
By the dominated convergence theorem, using the assumed integrability of $f$, we have
\begin{align*}
\lim_{N \to \infty} \left< f \circ P_x \right>_{\text{MC}}^{m, \rho;N} = 
\frac{1}{\int_{- \infty}^\infty d \phi_x \ e^{- \frac{(\phi_x - m)^2}{2 (\rho - 
m^2)}}} \int_{- \infty}^\infty d \phi_x \ f(\phi_x) e^{- \frac{(\phi_x - m)^2}{2 
(\rho - m^2)}} < \infty \ ,
\end{align*}
which implies that 
\begin{align*}
\left< f \circ P_x \right>_{\text{MC}}^{m, \rho;N} = O(1) .
\end{align*}
\end{proof}

\begin{lemma} \label{mombound2}
Let $f : \mathbb{R} \to \mathbb{R}$ be integrable with respect to any 
Gaussian measure. 
Let $x \in \Lambda$ and define $P_x : \mathcal{S} \to 
\mathbb{R}$ by $P_x (\phi) = \phi_x$. Then for all $\rho > 0$, $h\in \R$, and 
$\varepsilon \in \mathcal{E}_{h,\rho}$, we have
\begin{align*}
\left< f \circ P_x \right>_{\text{MC}}^{\varepsilon, \rho;N} = O(1)  .
\end{align*}
\end{lemma}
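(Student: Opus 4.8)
The plan is to reduce the statement immediately to Lemma~\ref{mombound1} by invoking the structure of the fixed-energy microcanonical ensemble. Recall from Definition~\ref{th:defcontfixedE} that, for every admissible pair $(\varepsilon,\rho)$ with $\varepsilon \in \mathcal{E}_{h,\rho}$, the functional $\left< f \circ P_x \right>_{\text{MC}}^{\varepsilon, \rho;N}$ is either a genuine convex combination
$w_N \left< f \circ P_x \right>_{\text{MC}}^{m_+, \rho;N} + (1-w_N)\left< f \circ P_x \right>_{\text{MC}}^{m_-, \rho;N}$
with weight $w_N := Z_\text{MC}(m_+,\rho;N)/(Z_\text{MC}(m_+,\rho;N)+Z_\text{MC}(m_-,\rho;N)) \in [0,1]$, or it equals a single term $\left< f \circ P_x \right>_{\text{MC}}^{m, \rho;N}$ with $m \in \{m_+, m_-, -h/J\}$, depending on which of the three branches of that definition applies. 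The crucial observation is that the magnetization densities $m_\pm = -h/J \pm \sqrt{h^2/J^2 - 2\varepsilon/J}$ (and $-h/J$) are fixed once $h$, $J$, $\rho$, $\varepsilon$ are fixed; in particular they do not depend on $N$, so the index set of the convex combination stays bounded as $N \to \infty$.

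First I would verify that every magnetization density that actually appears in the formula for $\left< f \circ P_x \right>_{\text{MC}}^{\varepsilon, \rho;N}$ lies strictly inside the open interval $(-\sqrt{\rho},\sqrt{\rho})$, which is precisely the hypothesis needed to apply Lemma~\ref{mombound1}. In the first branch of Definition~\ref{th:defcontfixedE} both $|m_\pm| < \sqrt{\rho}$ by construction; in the second branch only the density of smaller absolute value is retained, and the computation $m_\pm^2 = h^2/J^2 + (h^2/J^2 - 2\varepsilon/J) \mp 2(h/J)\sqrt{h^2/J^2 - 2\varepsilon/J}$ (as in the proof of Lemma~\ref{maxzlemma}) shows $\min(|m_+|,|m_-|) = |\,|h|/J - \sqrt{h^2/J^2 - 2\varepsilon/J}\,|$, which the defining inequality of $\mathcal{E}_{h,\rho}$ forces to be $< \sqrt{\rho}$; and in the degenerate branch $\varepsilon = h^2/(2J)$ the requirement $\rho > h^2/J^2$ gives $|-h/J| < \sqrt{\rho}$. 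Having checked this, I would apply Lemma~\ref{mombound1} to each such $m$: it yields constants $C(m,\rho)$ and thresholds $N_0(m)$ with $|\left< f \circ P_x \right>_{\text{MC}}^{m, \rho;N}| \le C(m,\rho)$ for $N \ge N_0(m)$. Since $\left< f \circ P_x \right>_{\text{MC}}^{\varepsilon, \rho;N}$ is then a convex combination of at most two such bounded quantities with $w_N \in [0,1]$, the triangle inequality gives $|\left< f \circ P_x \right>_{\text{MC}}^{\varepsilon, \rho;N}| \le \max\{C(m_+,\rho),C(m_-,\rho)\}$ for all large $N$, which is exactly $\left< f \circ P_x \right>_{\text{MC}}^{\varepsilon, \rho;N} = O(1)$.

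I do not anticipate a genuine obstacle here: the only point that requires care is the bookkeeping over the three branches of Definition~\ref{th:defcontfixedE}, and the verification that admissibility of $\varepsilon$, combined with the branch selection, keeps every relevant magnetization density inside the open interval so that Lemma~\ref{mombound1} applies legitimately. Everything else is the elementary fact that a convex combination of finitely many $O(1)$ sequences, with an $N$-independent index set, is again $O(1)$.
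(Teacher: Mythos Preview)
Your proposal is correct and follows essentially the same approach as the paper: reduce to Lemma~\ref{mombound1} by observing that the fixed-energy microcanonical expectation is, in every branch of Definition~\ref{th:defcontfixedE}, either a single fixed-magnetization expectation or a convex combination of two such expectations with $N$-independent magnetization densities lying in $(-\sqrt{\rho},\sqrt{\rho})$. Your version is in fact more detailed than the paper's, which simply states that the expectation is either one of the expectations from the previous lemma or a convex combination thereof and hence remains bounded.
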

\begin{proof}
By definition, 
$\left< f \circ P_x \right>_{\text{MC}}^{\varepsilon, \rho;N}$
is either equal to one of the expectations studied in the previous Lemma, or it is a convex combination of 
$ \left< f \circ P_x \right>_{\text{MC}}^{m_+, \rho;N}$ and 
$ \left< f \circ P_x \right>_{\text{MC}}^{m_-, \rho;N}$.
In  both of these cases, the result remains bounded as $N\to \infty$.
\end{proof}
As we remarked earlier, the phase transitions in the mean-field spherical model result in the need for different limiting measures outside of the standard ensembles when using the coupling method. Of particular importance is the parameter $h \in \mathbb{R}$. In the following convergence results, we will always explicitly state for which different parameters and limiting measures the convergence results hold.

First, we will state the convergence result for the auxiliary microcanonical, auxiliary canonical, microcanonical, and canonical ensembles.
\begin{theorem} \label{mccequiv1}
Let $\rho > 0$ and $m \in (- \sqrt{\rho}, \sqrt{\rho})$ and define $\mu := - 
\frac{m}{\rho - m^2}$. Let $I \subset \Lambda$ be  a fixed size index set, and let $f 
: \mathbb{R}^{|I|} \to \mathbb{R}$ be a bounded $1$-Lipschitz function with 
respect to the $|| \cdot ||_2$-norm. It follows that
\begin{align*}
\left< f \circ P_I \right>_{\text{MC}}^{m, \rho;N} = \left< f \circ P_{I} 
\right>_{\text{C}}^{\mu, \rho; N} + O (N^{- \frac{1}{2}}) .
\end{align*}
\end{theorem}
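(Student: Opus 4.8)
The plan is to relate the auxiliary canonical ensemble $\mu_{\text{C}}^{\mu,\rho;N}$ to the auxiliary microcanonical ensemble $\mu_{\text{MC}}^{m,\rho;N}$ via the mixture representation, and then apply the free energy method of Theorem~\ref{freeenergylip} with the fluctuation distance bound of Theorem~\ref{magcouplingthm2}. Concretely, the auxiliary canonical ensemble is, by its very definition, a mixture over the magnetization density $m'$ of the auxiliary microcanonical ensembles $\mu_{\text{MC}}^{m',\rho;N}$, with weight density proportional to $e^{-N(\mu m' - s_{\text{MC}}(m',\rho))}(\rho-(m')^2)^{-3/2}$. This is exactly the structure exploited in Theorem~\ref{freeenergylip}, with the role of the energy density $\varepsilon$ there played here by the magnetization density $m$, and with the conserved quantity $H$ replaced by $M$.

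First I would verify the Lipschitz-type hypothesis needed for the free energy method: from Theorem~\ref{magcouplingthm2} we have
\[
 w_2(\mu_{\text{MC}}^{m',\rho;N},\mu_{\text{MC}}^{m,\rho;N};N)\le \Bigl(1+\tfrac{2}{\sqrt{1-m^2/\rho}}\Bigr)|m-m'|=:C(m,\rho)\,|m-m'|,
\]
which is precisely the bound required, with a constant independent of $m'$ and $N$ (here I keep $m$ fixed, as the remark after Theorem~\ref{magcouplingthm2} suggests). The moment bounds needed for the theorem follow from Lemma~\ref{mombound1} (boundedness of all single-coordinate moments under both ensembles). Then the analogue of Theorem~\ref{freeenergylip}, applied with $p=2$ and $I$ replaced by our fixed-size $I$, yields
\[
 \bigl|\mean{f\circ P_I}_{\text{MC}}^{m,\rho;N}-\mean{f\circ P_I}_{\text{C}}^{\mu,\rho;N}\bigr|
 \le C(m,\rho)\Bigl(\tfrac{|I|}{1-|I|/N}\Bigr)^{1/2}\Bigl(\sigma_{\text{C}}^{\mu,\rho;N}\!\bigl(\tfrac{M}{N}\bigr)+\bigl|m-\mean{M}_{\text{C}}^{\mu,\rho;N}/N\bigr|\Bigr).
\]
It then remains to control the two terms in the last factor. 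The standard deviation term is $O(N^{-1/2})$ directly by Theorem~\ref{asympfunc3}. For the bias term, I need to check that the choice $\mu=-m/(\rho-m^2)$ makes $m$ the minimizer of $\psi_{\mu,\rho}$: indeed Theorem~\ref{asympfunc3} gives $\psi'_{\mu,\rho}(m)=\mu+m/(\rho-m^2)=0$ for exactly this $\mu$, and then the same theorem gives $\mean{M/N}_{\text{C}}^{\mu,\rho;N}=m+O(N^{-1/2})$, so the bias term is also $O(N^{-1/2})$. Combining, $\bigl|\mean{f\circ P_I}_{\text{MC}}^{m,\rho;N}-\mean{f\circ P_I}_{\text{C}}^{\mu,\rho;N}\bigr|=O(N^{-1/2})$, as claimed.

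The main obstacle I anticipate is not conceptual but bookkeeping: Theorem~\ref{freeenergylip} is stated for the \emph{energy}-parametrized ensembles with conserved quantity $H$, whereas here the mixture is over magnetization with an extra Jacobian-type factor $(\rho-(m')^2)^{-3/2}$ in the weight and an integration restricted to $(-\sqrt{\rho},\sqrt{\rho})$. One must check that this extra factor does not spoil the Hölder-inequality step in the proof of Theorem~\ref{freeenergylip} — it does not, since that argument only uses that the weights define a probability measure on the $m'$-variable, and the Cauchy--Schwarz bound producing the standard deviation goes through verbatim with $M/N$ in place of $H/N$. A second minor point is confirming that the minimizer $m$ lies strictly inside $(-\sqrt{\rho},\sqrt{\rho})$ so that $C(m,\rho)<\infty$; this is immediate since $m$ was assumed to be in this open interval. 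Hence the bulk of the work is simply transcribing the free energy method to the magnetization variable and then quoting Theorems~\ref{magcouplingthm2} and~\ref{asympfunc3}.
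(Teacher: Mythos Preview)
Your proposal is correct and follows essentially the same route as the paper: the paper's proof simply cites Theorem~\ref{freeenergylip}, Theorem~\ref{magcouplingthm2}, and Theorem~\ref{asympfunc3}, and you have spelled out precisely how these three ingredients combine, including the (minor) translation of Theorem~\ref{freeenergylip} from the energy variable to the magnetization variable. Your remarks about the extra weight factor $(\rho-(m')^2)^{-3/2}$ and the location of the minimizer are accurate and only make explicit what the paper leaves implicit.
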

\begin{proof}
The result follows by applying the free energy coupling presented in 
\Cref{freeenergylip}, along with the $w_2$ bound presented in 
\Cref{magcouplingthm2}, and with the asymptotics presented in \Cref{asympfunc3}. 
\end{proof}
\begin{theorem} \label{dommcequiv1}
Suppose $h \not = 0$ and $\rho > 0$.  Assume $\varepsilon \in \mathcal{E}_{h,\rho}$. Define $m := 
- \frac{h}{J} + \operatorname{sgn}(h) \sqrt{\frac{h^2}{J^2} - \frac{2 
\varepsilon}{J}}$ and $\mu := - \frac{m}{\rho - m^2}$. Let $I \subset \Lambda$ 
be  a fixed size index set, and let $f : \mathbb{R}^{|I|} \to \mathbb{R}$ be a 
bounded $1$-Lipschitz function with respect to the $|| \cdot ||_2$-norm. It 
follows that
\begin{align*}
\left< f \circ P_I \right>_{\text{MC}}^{\varepsilon, \rho;N} = \left< f \circ 
P_{I} \right>_{\text{C}}^{\mu, \rho; N} + O (N^{- \frac{1}{2}}) .
\end{align*}
\end{theorem}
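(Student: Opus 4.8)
The plan is to obtain this statement by composing two already established results, so that essentially no new analysis is required. First I would invoke \Cref{magdom2}: since $h\neq 0$ and $\varepsilon\in\mathcal{E}_{h,\rho}$, the fixed energy microcanonical ensemble $\mu_{\text{MC}}^{\varepsilon,\rho;N}$ is defined, and the magnetization density $m=-\frac{h}{J}+\operatorname{sgn}(h)\sqrt{\frac{h^2}{J^2}-\frac{2\varepsilon}{J}}$ appearing in the present statement is precisely the dominant branch identified in \Cref{maxzlemma}. The observable $f\circ P_I$ is bounded, so $\left< |f\circ P_I| \right>_{\text{MC}}^{m_\pm,\rho;N}\le \norm{f}_\infty=:K$ whenever those ensembles are defined, which is exactly the integrability hypothesis of \Cref{magdom2}. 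That theorem then gives
\[
 \left< f\circ P_I \right>_{\text{MC}}^{\varepsilon,\rho;N} = \left< f\circ P_I \right>_{\text{MC}}^{m,\rho;N} + O(e^{-cN})
\]
for some $c>0$ depending on $h$, $J$, $\rho$ and $\varepsilon$ but not on $N$.

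The second step is to apply \Cref{mccequiv1} with this $m$ and the associated $\mu=-\frac{m}{\rho-m^2}$. One should first check that $m\in(-\sqrt{\rho},\sqrt{\rho})$: writing out $|m|$ and comparing with the defining inequality of $\mathcal{E}_{h,\rho}$ shows $|m|=\left|\frac{|h|}{J}-\sqrt{\frac{h^2}{J^2}-\frac{2\varepsilon}{J}}\right|<\sqrt{\rho}$, so $\mu$ is well defined and the auxiliary microcanonical and canonical ensembles $\mu_{\text{MC}}^{m,\rho;N}$, $\mu_{\text{C}}^{\mu,\rho;N}$ make sense. Since $f:\mathbb{R}^{|I|}\to\mathbb{R}$ is bounded and $1$-Lipschitz with respect to $\norm{\cdot}_2$, \Cref{mccequiv1} applies verbatim and yields
\[
 \left< f\circ P_I \right>_{\text{MC}}^{m,\rho;N} = \left< f\circ P_I \right>_{\text{C}}^{\mu,\rho;N} + O(N^{-1/2}).
\]
Combining the two displays and absorbing the exponentially small term into $O(N^{-1/2})$ gives the assertion.

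The only thing requiring a little care --- and the place I would spend most of the write-up --- is the bookkeeping in matching the branch $m$ of the present theorem with the dominant branch of \Cref{magdom2}, together with checking that the degenerate sub-cases built into \Cref{th:defcontfixedE} (the collapse $m_+=m_-=-h/J$ when $\varepsilon=h^2/(2J)$, and the case $\min(m_-^2,m_+^2)<\rho\le\max(m_-^2,m_+^2)$ where $\mu_{\text{MC}}^{\varepsilon,\rho;N}$ is defined directly as a single fixed magnetization ensemble) are all already absorbed into the statement of \Cref{magdom2}. Since that theorem is phrased for every $\varepsilon\in\mathcal{E}_{h,\rho}$, these cases need no separate treatment here, and no genuine analytic obstacle remains: the proof is a two-line chaining of \Cref{magdom2} and \Cref{mccequiv1}.
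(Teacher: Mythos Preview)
Your proposal is correct and follows exactly the same approach as the paper's proof, which simply chains \Cref{magdom2} (giving the $O(e^{-cN})$ reduction to the dominant fixed-magnetization ensemble) with \Cref{mccequiv1} (giving the $O(N^{-1/2})$ comparison to the auxiliary canonical ensemble). Your additional bookkeeping about verifying $m\in(-\sqrt{\rho},\sqrt{\rho})$ and handling the degenerate sub-cases is more explicit than the paper's own two-line proof, but the argument is identical.
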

\begin{proof}
By \Cref{magdom2}, we have
\begin{align*}
\left< f \circ P_I \right>_{\text{MC}}^{\varepsilon, \rho;N} = \left< f \circ 
P_I \right>_{\text{MC}}^{m, \rho;N} + O (e^{- c N}),
\end{align*}
and, by \Cref{mccequiv1}, we have
\begin{align*}
\left< f \circ P_I \right>_{\text{MC}}^{m, \rho;N} = \left< f \circ P_{I} 
\right>_{\text{C}}^{\mu, \rho; N} + O (N^{- \frac{1}{2}}) ,
\end{align*}
from which the result follows.
\end{proof}

\begin{theorem}
Let $h = 0$.  Suppose $\rho > 0$ and $\varepsilon \in \left( - \frac{J \rho}{2}, 0 
\right]$. Let $I \subset \Lambda$ be  a fixed size index set, and let $f : 
\mathbb{R}^{|I|} \to \mathbb{R}$ be a bounded $1$-Lipschitz function with 
respect to the $|| \cdot ||_2$-norm. For $\varepsilon \in \left( - \frac{J 
\rho}{2}, 0 \right)$, define $\beta := \frac{\frac{1}{J}}{\rho + \frac{2 
\varepsilon}{J}}$. It follows that
\begin{align*}
\left< f \circ P_I \right>_{\text{MC}}^{\varepsilon, \rho;N} = \left< f \circ 
P_{I} \right>_{\text{C}}^{\beta, \rho; N} + O (N^{- \frac{1}{2}}) .
\end{align*}
For $\varepsilon = 0$, let $\beta < \frac{1}{\rho J}$ be arbitrary. It follows that 
\begin{align*}
\left< f \circ P_I \right>_{\text{MC}}^{0, \rho;N} = \left< f \circ P_{I} 
\right>_{\text{C}}^{\beta, \rho; N} + O (N^{- \frac{1}{2}}) .
\end{align*}
\end{theorem}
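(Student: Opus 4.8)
The plan is to prove both parts by the free energy coupling method of \Cref{freeenergylip}, checking its hypotheses and then inserting the asymptotics of \Cref{asympfunc4}; the endpoint $\vep=0$ will need a small modification. For $\vep\in(-\tfrac{J\rho}{2},0)$ with $\beta:=\frac{1/J}{\rho+2\vep/J}$ I would apply \Cref{freeenergylip} to $\mu_1=\mu_{\text{MC}}^{\vep,\rho;N}$ and $\mu_2=\mu_{\text{C}}^{\beta,\rho;N}$, verifying its three hypotheses as follows. Permutation invariance of all ensembles involved follows from the construction via the matrix $U$ of \Cref{changeofvar}: a fixed-magnetization ensemble is the pushforward under $U^{-1}$ of (a point mass in the first coordinate times) the rotation-invariant surface measure on $\mathbb{S}^{N-2}$, and since $U$ has constant first row this is invariant under relabelling; the fixed-energy and canonical ensembles are convex combinations (resp.\ mixtures) of these. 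Finiteness of the second moments is automatic because, when $h=0$, the constraint $N[\phi]=\rho N$ is common to $\mu_{\text{MC}}^{\vep,\rho;N}$, $\mu_{\text{MC}}^{\vep',\rho;N}$ and $\mu_{\text{C}}^{\beta,\rho;N}$, so all of them live on the sphere of radius $\sqrt{\rho N}$. Finally, the required Lipschitz fluctuation bound $w_2(\mu_{\text{MC}}^{\vep,\rho;N},\mu_{\text{MC}}^{\vep',\rho;N};N)\le C(\vep,\rho)|\vep-\vep'|$ is exactly the first case of \Cref{energycouplingthm2}, valid for $\vep'\in(-\tfrac{J\rho}{2},0)$, i.e.\ for a.e.\ $\vep'$ in the defining integral, which is all that the proof of \Cref{freeenergylip} uses.

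\Cref{freeenergylip} then bounds the local error by $C(\vep,\rho)(\tfrac{|I|}{1-|I|/N})^{1/2}(\sigma_{\text{C}}^{\beta,\rho;N}(H/N)+|\vep-\langle H\rangle_{\text{C}}^{\beta,\rho;N}/N|)$, and it remains to note that our $\beta$ is the Legendre dual of $\vep$. A short computation gives $1-\tfrac{1}{\beta J\rho}=-\tfrac{2\vep}{J\rho}$, so the minimizer of $\psi_{\beta,\rho}$ identified in \Cref{asympfunc4} is $-\tfrac{J\rho}{2}(1-\tfrac{1}{\beta J\rho})=\vep$, and moreover $\beta>\tfrac{1}{J\rho}$ because $0<\rho+2\vep/J<\rho$. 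Hence \Cref{asympfunc4} applies with this $\beta$ and yields $\langle H\rangle_{\text{C}}^{\beta,\rho;N}/N=\vep+O(N^{-1/2})$ and $\sigma_{\text{C}}^{\beta,\rho;N}(H/N)=O(N^{-1/2})$, so both terms in the bracket are $O(N^{-1/2})$ and the first assertion follows.

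For $\vep=0$ and any $\beta<\tfrac{1}{J\rho}$, \Cref{energycouplingthm2} only provides the H\"older-type estimate $w_2(\mu_{\text{MC}}^{0,\rho;N},\mu_{\text{MC}}^{\vep',\rho;N};N)\le\tfrac{2}{\sqrt J}|\vep'|^{1/2}$, which is not Lipschitz, so instead of quoting \Cref{freeenergylip} I would rerun its proof: writing $\mu_{\text{C}}^{\beta,\rho;N}$ as the mixture $\frac{1}{Z_{\text{C}}}\int_{-\rho J/2}^{0}d\vep'\,e^{-\beta\vep'N}Z_{\text{MC}}(\vep',\rho;N)\,\mu_{\text{MC}}^{\vep',\rho;N}$, subtracting, and applying \Cref{wass2est} pointwise in $\vep'$ together with the above bound gives an upper bound proportional to $\int_{-\rho J/2}^{0}|\vep'|^{1/2}\,\nu_N(d\vep')$, where $\nu_N(d\vep')\propto e^{-\beta\vep'N}Z_{\text{MC}}(\vep',\rho;N)\,d\vep'$ is a probability measure. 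By Jensen's inequality (concavity of $t\mapsto t^{1/2}$) together with $\vep'\le 0$, this is at most $(\int|\vep'|\,\nu_N(d\vep'))^{1/2}=(-\langle H\rangle_{\text{C}}^{\beta,\rho;N}/N)^{1/2}$, using that the $\nu_N$-mean of $\vep'$ equals $\langle H/N\rangle_{\text{C}}^{\beta,\rho;N}$. Since \Cref{asympfunc4} gives $\langle H\rangle_{\text{C}}^{\beta,\rho;N}/N=O(N^{-1})$ when $\beta<\tfrac{1}{J\rho}$, the whole bound is $O(N^{-1/2})$, proving the second assertion.

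I expect the only genuine obstacle to be the endpoint $\vep=0$: there the two magnetization densities $m_\pm$ defining the fixed-energy ensemble merge to $0$, the transport map of \Cref{energycouplingthm2} degenerates, and the coupling cost drops from $O(|\vep'|)$ to $O(|\vep'|^{1/2})$; the lost half power of $N$ is recovered only because the canonical energy fluctuates at the faster rate $O(N^{-1})$ throughout the subcritical range $\beta<\tfrac{1}{J\rho}$ supplied by \Cref{asympfunc4}. Everything else is routine bookkeeping, the one fine point being that the Lipschitz bound in the first regime is asserted for $\vep'<0$ only, which is harmless since $\{\vep'=0\}$ is Lebesgue-null in the defining integral.
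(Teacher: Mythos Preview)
Your proposal is correct and follows essentially the same route as the paper's own proof: for $\vep\in(-\tfrac{J\rho}{2},0)$ you invoke \Cref{freeenergylip} with the $w_2$ bound of \Cref{energycouplingthm2} and the asymptotics of \Cref{asympfunc4}, and for $\vep=0$ you rerun the mixture argument using the square-root bound, Jensen's inequality, and the $O(N^{-1})$ energy estimate for $\beta<\tfrac{1}{J\rho}$. The paper is terser (it does not spell out permutation invariance, the moment check, or the Jensen step), but the logic is identical; your remark that the Lipschitz bound is only needed for a.e.\ $\vep'$ is a harmless refinement.
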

\begin{proof}
If $\varepsilon \in \left( - \frac{J \rho}{2}, 0 \right)$, the result follows by 
applying the free energy coupling presented in \Cref{freeenergylip}, along with 
the $w_2$ bound presented in \Cref{energycouplingthm2}, and the asymptotics 
presented in \Cref{asympfunc4}.
 
If $\varepsilon = 0$, then observe that the $w_2$ bound in 
\Cref{energycouplingthm2} is not Lipschitz in the appropriate sense to directly 
apply \Cref{freeenergylip}. However, following the proof of 
\Cref{freeenergylip}, we can apply the following inequality
\begin{align*}
\left< w_2 \left( \mu_{\text{MC}}^{0, \rho;N}, \mu_{\text{MC}}^{\frac{H}{N}, 
\rho;N};N\right)\right>_{\text{C}}^{\beta, \rho;N} \leq
 \frac{2}{\sqrt{J}} 
 \left(\left< -H/N \right>_{\text{C}}^{\beta, \rho; N}\right)^{\frac{1}{2}} ,
\end{align*}
where the upper-index $\frac{H}{N}$ is a non-positive random variable of the canonical 
ensemble. It follows that
\begin{align*}
\left| \left< f \circ P_I \right>_{\text{MC}}^{0, \rho;N} - \left< f \circ 
P_{I} \right>_{\text{C}}^{\beta, \rho; N} \right| \leq C 
 \left(\left< -H/N \right>_{\text{C}}^{\beta, \rho; N}\right)^{\frac{1}{2}},
\end{align*}  
for a global constant $C > 0$.  By considering the asymptotics presented in 
\Cref{asympfunc4}, we have
\begin{align*}
\left< f \circ P_I \right>_{\text{MC}}^{0, \rho;N} = \left< f \circ P_{I} 
\right>_{\text{C}}^{\beta, \rho; N} + O(N^{- \frac{1}{2}}) .
\end{align*}
\end{proof}
Finally, we will state the convergence result for the auxiliary microcanonical, auxiliary grand canonical, alternate auxiliary grand canonical, microcanonical, and grand canonical ensembles.
\begin{theorem} \label{mcgcequiv1}
Let $\rho > 0$ and $m \in (- \sqrt{\rho}, \sqrt{\rho})$ and define $\mu := - 
\frac{m}{\rho - m^2}$ and $\eta := \frac{1}{2 (\rho - m^2)}$. Let $I \subset 
\Lambda$ be  a fixed size index set, and let $f : \mathbb{R}^{|I|} \to \mathbb{R}$ 
be a bounded $1$-Lipschitz function with respect to the $|| \cdot ||_2$-norm or 
let $f$ be a finite product of finite order moments. It follows that
\begin{align*}
\left< f \circ P_I \right>_{\text{MC}}^{m, \rho;N} = \left< f \circ P_{I} 
\right>_{\text{GC}}^{\mu, \eta; N} + O (N^{- \frac{1}{2}}) .
\end{align*}
\end{theorem}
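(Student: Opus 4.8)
The plan is to derive both assertions directly from the coupling estimates \Cref{wass2est} and \Cref{wass2momest}, combined with the explicit $w_2$ bound of \Cref{directcoupling1}. First I would record two structural facts. The auxiliary grand canonical ensemble has density proportional to $e^{-\mu M[\phi]-\eta N[\phi]}=\prod_{x\in\Lambda}e^{-\mu\phi_x-\eta\phi_x^2}$, hence it is a product of i.i.d.\ one-dimensional Gaussians; in particular it is a permutation invariant Radon probability measure, its single-site marginal is independent of $N$, and all of its moments are finite. The auxiliary microcanonical ensemble $\mu_{\text{MC}}^{m,\rho;N}$ is, by the $\delta$-function representation of \Cref{magnetex}, a permutation invariant measure supported on $\{\|\phi\|_2^2=\rho N\}$, so it is compactly supported, Radon, and has finite moments of every order; moreover \Cref{mombound1} gives that $\langle|\phi_x|^{k}\rangle_{\text{MC}}^{m,\rho;N}=O(1)$ for each fixed $k$.

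For the Lipschitz case I would apply \Cref{wass2est} with $p=2$, $\mu_1=\mu_{\text{GC}}^{\mu,\eta;N}$ and $\mu_2=\mu_{\text{MC}}^{m,\rho;N}$, obtaining
\[
\left|\left<f\circ P_I\right>_{\text{MC}}^{m,\rho;N}-\left<f\circ P_I\right>_{\text{GC}}^{\mu,\eta;N}\right|\le\left(\frac{|I|}{1-|I|/N}\right)^{1/2}w_2\!\left(\mu_{\text{GC}}^{\mu,\eta;N},\mu_{\text{MC}}^{m,\rho;N};N\right).
\]
The relations $\mu=-m/(\rho-m^2)$ and $\eta=1/(2(\rho-m^2))$ in the hypothesis are exactly those required by \Cref{directcoupling1}, so that theorem yields $w_2(\mu_{\text{GC}}^{\mu,\eta;N},\mu_{\text{MC}}^{m,\rho;N};N)\le(\rho-m^2)^{-1/2}N^{-1/2}$. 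Since $|I|$ is fixed the prefactor is $O(1)$, and the stated $O(N^{-1/2})$ rate follows.

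For the case where $f\circ P_I$ is a monomial $\phi^J$ (finite products of finite-order moments then follow by linearity), I would apply \Cref{wass2momest} with $p=2$, so $q=2$. Taking $p_0:=\max(2,2(n_J-1))$ makes the integrability and admissibility conditions $1\le n_J\le p_0$ and $n_J\le p_0+1-p_0/p$ hold, and the required $p_0$-th moments of both measures are finite by the facts above. The constant $M(J,p)$ is built from the moments $\langle|\phi_x|^{2(n_J-1)}\rangle$ under the two measures, which are $O(1)$: for the microcanonical ensemble by \Cref{mombound1}, and for the grand canonical ensemble because the single-site marginal is a fixed ($N$-independent) Gaussian. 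Hence $M(J,p)=O(1)$, and \Cref{wass2momest} gives
\[
\left|\left<\phi^J\right>_{\text{MC}}^{m,\rho;N}-\left<\phi^J\right>_{\text{GC}}^{\mu,\eta;N}\right|\le n_J\,M(J,p)^{n_J-1}\left(\frac{|I|}{1-|I|/N}\right)^{1/2}w_2\!\left(\mu_{\text{GC}}^{\mu,\eta;N},\mu_{\text{MC}}^{m,\rho;N};N\right);
\]
using the same $w_2$ bound from \Cref{directcoupling1} and the fact that $n_J$ and $|I|$ are fixed, this is again $O(N^{-1/2})$.

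The substance of the statement is already in \Cref{directcoupling1} (the radial transport map and the resulting $O(N^{-1/2})$ fluctuation distance) and in \Cref{mombound1} (uniform-in-$N$ control of local moments). The remaining work is essentially \emph{bookkeeping}: matching parameters, verifying that the admissibility inequality $n_J\le p_0+1-p_0/p$ can be met precisely because all moments of both ensembles are finite, and observing that fixed $|I|$ and $n_J$ keep every prefactor bounded. The one point that requires genuine attention is the uniform-in-$N$ boundedness of $M(J,p)$ in the moment version, which is exactly why \Cref{mombound1} is needed rather than mere finiteness of moments for each $N$.
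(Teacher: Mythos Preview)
Your proposal is correct and follows essentially the same approach as the paper's own proof, which also combines \Cref{wass2est} (for the Lipschitz case) and \Cref{wass2momest} (for the moment case) with the $w_2$ bound of \Cref{directcoupling1} and the uniform moment control of \Cref{mombound1}. You are in fact more thorough than the paper's sketch: you explicitly verify permutation invariance of both measures, check the parameter matching with \Cref{directcoupling1}, choose an admissible $p_0$ for the moment estimate, and correctly separate the $O(1)$ control of $M(J,p)$ into the microcanonical part (via \Cref{mombound1}) and the grand canonical part (fixed Gaussian single-site marginal).
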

\begin{proof}
The result follows by applying the direct coupling method in \Cref{wass2est} 
along with the $w_2$ bound given in \Cref{directcoupling1}. The convergence of 
the finite dimensional moments follows from \Cref{wass2momest} and the fact that 
the moments of both relevant ensembles are bounded by \Cref{mombound1}.
\end{proof}
\begin{theorem}
Suppose $h \not = 0$ and $\rho > 0$.  Assume $\varepsilon \in \mathcal{E}_{h,\rho}$. Define $m := 
- \frac{h}{J} + \operatorname{sgn}(h) \sqrt{\frac{h^2}{J^2} - \frac{2 
\varepsilon}{J}}$ and set then  $\mu := - \frac{m}{\rho - m^2}$ and $\eta := 
\frac{1}{2 (\rho - m^2)}$. Let $I \subset \Lambda$ be  a fixed size index set, and 
let $f : \mathbb{R}^{|I|} \to \mathbb{R}$ be a bounded $1$-Lipschitz function 
with respect to the $|| \cdot ||_2$-norm or let $f$ be a finite product of finite 
order moments. It follows that
\begin{align*}
\left< f \circ P_I \right>_{\text{MC}}^{\varepsilon, \rho;N} = \left< f \circ 
P_{I} \right>_{\text{GC}}^{\mu, \eta; N} + O (N^{- \frac{1}{2}}) .
\end{align*}
\end{theorem}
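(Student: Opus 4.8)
The plan is to derive this estimate by concatenating two comparison bounds already established for the regime $h\neq0$: \Cref{magdom2}, which shows that the fixed energy microcanonical ensemble is exponentially close to the dominating fixed magnetization ensemble, and \Cref{mcgcequiv1}, which compares the fixed magnetization microcanonical ensemble with the auxiliary grand canonical ensemble at matched parameters. So the chain will be $\left< f\circ P_I\right>_{\text{MC}}^{\varepsilon,\rho;N}\to\left< f\circ P_I\right>_{\text{MC}}^{m,\rho;N}\to\left< f\circ P_I\right>_{\text{GC}}^{\mu,\eta;N}$.

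First I would verify admissibility of the parameters. Since $\varepsilon\in\mathcal{E}_{h,\rho}$, the square root $\sqrt{h^2/J^2-2\varepsilon/J}$ is real, and the $m$ in the statement is precisely the dominating magnetization density appearing in \Cref{magdom2}; by the algebraic identity used in the proof of \Cref{maxzlemma} one has $|m|=\bigl|\,|h|/J-\sqrt{h^2/J^2-2\varepsilon/J}\,\bigr|$, which is $<\sqrt{\rho}$ exactly by the defining inequality of $\mathcal{E}_{h,\rho}$. Hence $m\in(-\sqrt{\rho},\sqrt{\rho})$ and $\rho-m^2>0$, so $\mu=-m/(\rho-m^2)$ and $\eta=1/(2(\rho-m^2))$ are well defined and positive, and they are exactly the parameters for which \Cref{mcgcequiv1} applies.

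Next I would apply \Cref{magdom2} to the observable $F:=f\circ P_I$. This requires a bound $\left<|F|\right>_{\text{MC}}^{m_{\pm},\rho;N}\le K$ uniform in $N$ whenever the relevant ensemble is defined. If $f$ is bounded this is immediate with $K=\|f\|_{\infty}$, since $|F|\le\|f\|_{\infty}$ pointwise. If $f$ is a finite product of finite order moments, say $F=\phi^J$ for a finite sequence $J$ of indices in $I$, then the field is compactly supported on the sphere under these measures, so $F$ is integrable, and by the generalized H\"older inequality $\left<|\phi^J|\right>_{\text{MC}}^{m,\rho;N}$ is bounded by a product of single-site moments of order $n_J$, each of which is $O(1)$ by \Cref{mombound1}; hence a finite $K$ works for all large $N$. \Cref{magdom2} then yields
\[
\left< f\circ P_I\right>_{\text{MC}}^{\varepsilon,\rho;N}=\left< f\circ P_I\right>_{\text{MC}}^{m,\rho;N}+O(e^{-cN})
\]
for some $c>0$, with exact equality in the degenerate sub-cases already handled in that theorem.

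Finally, applying \Cref{mcgcequiv1} with these $m,\mu,\eta$ — in its bounded $1$-Lipschitz version or its finite-moment version, whichever matches $f$ — gives
\[
\left< f\circ P_I\right>_{\text{MC}}^{m,\rho;N}=\left< f\circ P_I\right>_{\text{GC}}^{\mu,\eta;N}+O(N^{-1/2}),
\]
and adding the two displays, with $O(e^{-cN})$ absorbed into $O(N^{-1/2})$, proves the theorem. I do not expect a substantial obstacle, since the argument is essentially the composition of two citations; the only point requiring genuine verification rather than a reference is the uniform-in-$N$ moment bound needed to invoke \Cref{magdom2} in the moment case, which is exactly what \Cref{mombound1} supplies.
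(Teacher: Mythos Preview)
Your proposal is correct and follows essentially the same approach as the paper: first invoke \Cref{magdom2} to reduce the fixed energy ensemble to the dominating fixed magnetization ensemble with an $O(e^{-cN})$ error, then apply \Cref{mcgcequiv1} to pass to the auxiliary grand canonical ensemble with an $O(N^{-1/2})$ error. The paper states this in one line by pointing to \Cref{mcgcequiv1} and the proof structure of \Cref{dommcequiv1}; your version spells out the parameter admissibility and the uniform moment bound (via \Cref{mombound1}) needed for the moment case, which is a welcome clarification but not a different argument.
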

\begin{proof}
The proof follows from  \Cref{mcgcequiv1} via the same steps as in 
the proof of \Cref{dommcequiv1}. 
\end{proof}
\begin{theorem}
Suppose $h = 0$ and $\rho > 0$.  Assume $\varepsilon \in \left( - \frac{J \rho}{2}, 0 
\right]$ and define $\overline{\mu} := \frac{\sqrt{- \frac{2 
\varepsilon}{J}}}{\rho - \left( - \frac{2 \varepsilon}{J} \right)}$ and $\eta := 
\frac{1}{2 \left( \rho - \left( - \frac{2 \varepsilon}{J} \right) \right)}$. Let 
$I \subset \Lambda$ be  a fixed size index set, and let $f : \mathbb{R}^{|I|} \to 
\mathbb{R}$ be a bounded $1$-Lipschitz function with respect to the $|| \cdot 
||_2$-norm or a finite product of finite order moments. It follows that
\begin{align*}
\left< f \circ P_I \right>_{\text{MC}}^{\varepsilon, \rho;N} = \left< f \circ 
P_I \right>_{\text{AGC}}^{\overline{\mu}, \eta;N} + O (N^{-\frac{1}{2}}) .
\end{align*}
\end{theorem}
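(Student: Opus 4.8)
The plan is to reduce the statement to the direct coupling already established in \Cref{directcoupling1} together with the lifting lemmas \Cref{wass2est} and \Cref{wass2momest}. The structural point is that, for $h=0$, both sides are equal-weight mixtures over the two branches $m_\pm=\pm\sqrt{-2\vep/J}$. On the microcanonical side one has $\mean{f}_{\text{MC}}^{\vep,\rho;N}=\tfrac12\mean{f}_{\text{MC}}^{m_+,\rho;N}+\tfrac12\mean{f}_{\text{MC}}^{m_-,\rho;N}$ for $\vep\in(-J\rho/2,0)$, since $h=0$ forces $Z_{\text{MC}}(m_+,\rho;N)=Z_{\text{MC}}(m_-,\rho;N)=(\rho+2\vep/J)^{(N-3)/2}$; on the auxiliary side $\mean{f}_{\text{AGC}}^{\overline\mu,\eta;N}=\tfrac12\mean{f}_{\text{GC}}^{\overline\mu,\eta;N}+\tfrac12\mean{f}_{\text{GC}}^{-\overline\mu,\eta;N}$ by the identity recorded after the definition of the alternate auxiliary ensemble. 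So it suffices to transport each Gaussian branch $\mu_{\text{GC}}^{\mp\overline\mu,\eta;N}$ onto the corresponding microcanonical branch $\mu_{\text{MC}}^{m_\pm,\rho;N}$ and glue the two transport-map couplings with weights $\tfrac12$.

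First I would verify that the parameters match the hypotheses of \Cref{directcoupling1}: with $\overline\mu=\sqrt{-2\vep/J}/(\rho+2\vep/J)$ and $\eta=1/(2(\rho+2\vep/J))$ one computes $-(\pm\overline\mu)/(2\eta)=\mp\sqrt{-2\vep/J}=m_\mp$ and $1/(2\eta)+\overline\mu^2/(4\eta^2)=(\rho+2\vep/J)+(-2\vep/J)=\rho$, while $\rho-m_\pm^2=\rho+2\vep/J>0$ precisely because $\vep\in(-J\rho/2,0)$. Hence \Cref{directcoupling1} gives $w_2(\mu_{\text{GC}}^{\mp\overline\mu,\eta;N},\mu_{\text{MC}}^{m_\pm,\rho;N};N)\le(\rho+2\vep/J)^{-1/2}N^{-1/2}$ on each branch. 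Mixing the two couplings with weights $\tfrac12$ yields a coupling of $\mu_{\text{AGC}}^{\overline\mu,\eta;N}$ and $\mu_{\text{MC}}^{\vep,\rho;N}$ whose specific $2$-cost is the average of the branch costs, so $w_2(\mu_{\text{AGC}}^{\overline\mu,\eta;N},\mu_{\text{MC}}^{\vep,\rho;N};N)^2\le(\rho+2\vep/J)^{-1}N^{-1}$, i.e.\ $w_2=O(N^{-1/2})$ as already noted in the remark preceding the theorem.

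The endpoint $\vep=0$ is handled separately and is simpler: then $m_+=m_-=0$, so $\mu_{\text{MC}}^{0,\rho;N}=\mu_{\text{MC}}^{m=0,\rho;N}$ by \Cref{th:defcontfixedE}, while $\overline\mu=0$ and $\eta=1/(2\rho)$, so $\mu_{\text{AGC}}^{0,\eta;N}$ is just the Gaussian grand canonical measure $\mu_{\text{GC}}^{\mu=0,\eta;N}$. A single application of \Cref{directcoupling1} with $m=0$, $\mu=0$, $\eta=1/(2\rho)$ gives $w_2(\mu_{\text{GC}}^{0,\eta;N},\mu_{\text{MC}}^{0,\rho;N};N)\le(\rho N)^{-1/2}$.

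Finally I would lift the $w_2$ estimate. Both measures are label-permutation invariant: the microcanonical branches are uniform under the permutation-invariant constraints $\sum_x\phi_x=m_\pm N$, $\sum_x\phi_x^2=\rho N$, the auxiliary branches have densities depending on $\phi$ only through $\sum_x\phi_x$ and $\sum_x\phi_x^2$, and mixtures of permutation-invariant measures are permutation invariant. For bounded $1$-Lipschitz $f$ with respect to $\norm{\cdot}_2$ this lets us invoke \Cref{wass2est} with $p=2$, giving $|\mean{f\circ P_I}_{\text{MC}}^{\vep,\rho;N}-\mean{f\circ P_I}_{\text{AGC}}^{\overline\mu,\eta;N}|\le(|I|/(1-|I|/N))^{1/2}w_2=O(N^{-1/2})$. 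For $f$ a finite product of finite-order moments I would instead use \Cref{wass2momest} with $p=2$ and $p_0$ chosen large relative to the degree $n_J$ so that $n_J\le p_0+1-p_0/p$; the required uniform boundedness in $N$ of single-site moments of all orders holds for the microcanonical branches by \Cref{mombound2} and for the Gaussian branches trivially, as each $\phi_x$ is i.i.d.\ Gaussian with $N$-independent mean and variance, so the prefactor $M(J,2)^{n_J-1}$ is $O(1)$ and the moment error is again $O(N^{-1/2})$. The only genuinely delicate points are the bookkeeping of the equal-weight mixture of couplings that turns the branchwise \Cref{directcoupling1} bounds into a single $w_2$ bound, and the separate treatment of the degenerate endpoint $\vep=0$; beyond that the proof is an assembly of \Cref{directcoupling1}, \Cref{wass2est}, \Cref{wass2momest} and \Cref{mombound2}.
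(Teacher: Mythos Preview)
Your proposal is correct and follows essentially the same approach as the paper: split the $h=0$ microcanonical ensemble into its two equal-weight magnetization branches $m_\pm=\pm\sqrt{-2\varepsilon/J}$, match each branch to a Gaussian auxiliary ensemble via \Cref{directcoupling1}, and combine. The paper's proof is more terse---it simply invokes \Cref{mcgcequiv1} on each branch and averages the resulting observable estimates---whereas you first glue the two transport-map couplings into a single coupling of the mixtures and then apply \Cref{wass2est}/\Cref{wass2momest} once; this is a harmless reorganisation of the same argument, and your explicit treatment of the endpoint $\varepsilon=0$ and of the parameter matching is a useful elaboration of what the paper leaves implicit.
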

\begin{proof}
The result follows by splitting the fixed energy density ensemble into its fixed 
magnetization ensity ensembles and applying \Cref{mcgcequiv1}.
\end{proof}

\subsection{Remark on choice of cost function}

For this model, it should be observed that the $w_2$ convergence is a natural 
choice of convergence from the perspective that it implies both $w_1$ and $w_2$ 
convergence simultaneously, which in turn implies that the magnetization density 
converges along with the energy density. Without this property, a coupling of 
suitable strength between the microcanonical and grand canonical measures seems 
unlikely. Observe that
\begin{align*}
w_1 \left( \mu_{\text{MC}}^{m, \rho;N}, \mu_{\text{MC}}^{m', \rho;N};N\right) 
\leq  w_2 \left( \mu_{\text{MC}}^{m, \rho;N}, \mu_{\text{MC}}^{m', \rho;N} ;N\right)  .
\end{align*}
Employing the lower bound in the triangle inequality, we have
\begin{align*}
w_1 \left( \mu_{\text{MC}}^{m, \rho;N}, \mu_{\text{MC}}^{m', \rho;N}\right) \geq 
|m - m'| .
\end{align*}
Now, if we consider \Cref{magcouplingthm2}, then we have
\begin{align*}
|m - m'| \leq w_2 \left( \mu_{\text{MC}}^{m, \rho;N}, \mu_{\text{MC}}^{m', 
\rho;N}; N\right) \leq 
\left( 1 
+ \frac{2}{\sqrt{1-m^2/\rho}} \right) |m - m'| .
\end{align*}
Thus, even though optimality of the transport was not necessarily achieved as in 
the discrete case, the best possible scaling in the dependence on changes in the parameter $m$ was still obtained here.

\section*{Acknowledgements}

We are grateful to Stefan Gro\ss kinsky and Eero Saksman 
for their comments on coupling techniques used in stochastic particle systems.  
We also thank Herbert Spohn for several discussions and references about 
equivalence of ensembles in general.  We also thank the anonymous reviewer for helpful comments and references.

The work has been supported by the 
Academy of Finland via the Centre of Excellence in Analysis and Dynamics 
Research (project 307333), and it has 
also benefited from the support of the project EDNHS ANR-14-CE25-0011 of the 
French National Research Agency (ANR).

\appendix

\section{Rigorous asymptotic analysis of Laplace-type integrals}\label{sec:Laplace}

First, we will fix some notation and definitions.
\begin{definition} [Asymptotic equivalence] Let $f,g : \mathbb{R} \to 
\mathbb{R}$ be suitable functions so that the following limits and quotients 
exist. We say that $f$ and $g$ are asymptotically equivalent at $a \in 
\overline{\mathbb{R}}:=[-\infty,\infty]$ if 
\begin{align*}
\lim_{x \to a} \frac{f(x)}{g(x)} = 1 .
\end{align*}
Asymptotic equivalence will be denoted $f \sim g$ without reference to the 
limiting point $a$ if it is clear from context.

Furthermore, we say that a function $f$ admits an asymptotic power series 
representation at a point $a \in \mathbb{R}$ if there is a sequence of constants 
$(a_k)_{k\in \N_0}$ and some $\mu \geq 0$ such that
\begin{align*}
\lim_{x\to a} (x-a)^{-N-\mu}\left(f(x) -\sum_{k=0}^{N} a_k (x - a)^{k + 
\mu}\right)
= 0 
\end{align*}
for all $N\in \N_0$.  Since this implies that $f(x) -\sum_{k=0}^{N-1} a_k (x - 
a)^{k + \mu}\sim a_N  (x - a)^{N + \mu}$ whenever $a_N\ne 0$,
we will use the notation
\begin{align*}
f(x)  \sim \sum_{k=0}^\infty a_k (x - a)^{k + \mu}
\end{align*}
to denote the above without reference to $N$, even if the power series on the 
right does not converge.

Analogously, we define asymptotic power series representation as $x\to \infty$ 
by requiring that $x\mapsto f(1/x)$, $x>0$, has an asymptotic power series at 
$0$.
Explicitly, we then require
\begin{align*}
\lim_{x\to \infty} x^{N+\mu}\left(f(x) -\sum_{k=0}^{N} a_k x^{-k - \mu}\right)
= 0 \,,
\end{align*}
for all $N\in \N_0$, and denote this by 
\begin{align*}
f(x)  \sim \sum_{k=0}^\infty a_k x^{-k -\mu}
\end{align*}
\end{definition}

The asymptotic analysis of Laplace-type integrals has been studied  extensively. 
For completeness, we will present below a general form of the asymptotics of 
Laplace-type integrals.
\begin{theorem} \label{asymptotic1}
Let $h : [a,b] \to \mathbb{R}$ be a function satisfying the following 
conditions: 
\begin{itemize}
\item $h$ attains a global minimum at the left end-point $a$, $h(x) > h(a)$ for 
all $x \in (a,b)$, and for every $\delta > 0$ we have $\inf_{x \in [a+ \delta, 
b)} \{ h(x) - h(a) \} > 0$.
\item $h$ admits a power series representation at the left end-point $a$ of the 
form
\begin{align*}
h(x) \sim h(a) + \sum_{s=0}^\infty a_s (x - a)^{s + \mu}
\end{align*}
for some $\mu > 0$ and with $a_0 \not = 0$.
\item $h$ is differentiable in a neighbourhood of $a$ and the previous power 
series representation can be term-wise differentiated to give
\begin{align*}
h'(x) \sim \sum_{s=0}^\infty a_s (s +  \mu) (x - a)^{s + \mu - 1}\,.
\end{align*}
\item $h'$ is continuous in a neighbourhood of $a$ except possibly at $a$.
\end{itemize}
Suppose also that $\varphi : [a,b] \to \mathbb{R}$ is a function satisfying all 
of the following:
\begin{itemize}
\item $\varphi$ is continuous in a neighbourhood of $a$ except possibly at $a$.
\item $\varphi$ admits a power series representation at the left end-point a of 
the form
\begin{align*}
\varphi(x)  \sim \sum_{s=0}^\infty b_s (x - a)^{s + \alpha - 1} 
\end{align*}
for some $\alpha \in \mathbb{C}$ such that $\operatorname{Re} \alpha > 0$, and 
with $b_0 \not = 0$.
\end{itemize}
Furthermore, suppose that there exists $M > 0$ such that for all $\lambda \geq 
M$, the integral $I(\lambda)$ defined by
\begin{align*}
I(\lambda) := \int_{a}^b dx \ \varphi(x) e^{- \lambda h(x)},
\end{align*}
converges absolutely.

Then, as $\lambda\to \infty$,
\begin{align*}
I(\lambda) \sim e^{- \lambda h(a)} \sum_{s=0}^\infty \Gamma \left( \frac{s + 
\alpha}{\mu} \right) \frac{c_s}{\lambda^{\frac{s + \alpha}{\mu}}} \ ,
\end{align*}
where the coefficients $c_s$ are expressible in terms of $a_s$ and $b_s$, and, 
in particular, we have
\begin{align*}
c_0 = \frac{b_0}{\mu a_0^{\frac{\alpha}{\mu}}}.
\end{align*}
\end{theorem}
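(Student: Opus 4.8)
The plan is to use the classical three-step scheme for Laplace-type integrals: localization near the minimizer $a$, a change of variables that linearizes the phase, and a reduction to Watson's lemma for an amplitude possessing an asymptotic power series.

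First I would fix a small $\delta>0$, small enough that all the assumed power-series representations are valid on $[a,a+\delta]$ and that $h$ is strictly increasing there; the latter follows because the term-by-term differentiated expansion gives $h'(x)\sim a_0\mu(x-a)^{\mu-1}$ with $a_0>0$, the sign of $a_0$ being forced by $a$ being a strict minimum together with $h'$ continuous near $a$. For the complementary piece, the assumption $c_\delta:=\inf_{x\in[a+\delta,b)}\{h(x)-h(a)\}>0$ together with the absolute convergence of $I(M)$ yields, for $\lambda\ge 2M$, the bound $\abs{\int_{a+\delta}^b\varphi(x)\rme^{-\lambda h(x)}\rmd x}\le \rme^{-(\lambda-M)(h(a)+c_\delta)}\int_{a+\delta}^b\abs{\varphi(x)}\rme^{-M h(x)}\rmd x=O(\rme^{-\lambda(h(a)+c_\delta/2)})$, which is negligible against every term of the claimed series.

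On $[a,a+\delta]$ I would substitute $t:=h(x)-h(a)$, so that $x\mapsto t$ is a $C^1$-diffeomorphism of $(a,a+\delta]$ onto $(0,\tau]$ with $\tau:=h(a+\delta)-h(a)$, and write the inverse as $x=a+g(t)$. Using $h(x)-h(a)\sim a_0(x-a)^\mu$ together with the differentiated expansion of $h$, the key technical claim is that $g$ admits an asymptotic power series $g(t)\sim\sum_{k\ge 1}\gamma_k t^{k/\mu}$ with $\gamma_1=a_0^{-1/\mu}$, and that it may be differentiated to give $g'(t)\sim\sum_{k\ge 1}\tfrac{k}{\mu}\gamma_k t^{k/\mu-1}$. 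I expect this inversion-of-the-asymptotic-series step to be the main obstacle: the coefficients $\gamma_k$ are obtained by formally inverting the series for $h$, and the asymptotic validity must then be justified, for instance by induction on the truncation order using the monotonicity and the continuity of $h'$ near $a$. Granting this, the substitution produces $I(\lambda)=\rme^{-\lambda h(a)}\int_0^\tau\psi(t)\rme^{-\lambda t}\rmd t+O(\rme^{-\lambda(h(a)+c_\delta/2)})$ with $\psi(t):=\varphi(a+g(t))\,g'(t)$.

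Finally I would compose the expansions. From $\varphi(x)\sim\sum_s b_s(x-a)^{s+\alpha-1}$ and the expansions of $g$ and $g'$ one obtains $\psi(t)\sim\sum_{s\ge 0}c_s\,t^{(s+\alpha)/\mu-1}$ as $t\to 0^+$, and the leading term is $\psi(t)\sim b_0(\gamma_1 t^{1/\mu})^{\alpha-1}\cdot\tfrac{\gamma_1}{\mu}t^{1/\mu-1}=\tfrac{b_0}{\mu a_0^{\alpha/\mu}}t^{\alpha/\mu-1}$, so $c_0=b_0/(\mu a_0^{\alpha/\mu})$ as asserted; here $\operatorname{Re}\alpha>0$ guarantees integrability at $t=0$, and $t^{i\operatorname{Im}\alpha/\mu}$ has modulus $1$ so it does not affect any estimate. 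It then remains to invoke Watson's lemma with explicit remainder: for each $N$ write $\psi(t)=\sum_{s=0}^N c_s t^{(s+\alpha)/\mu-1}+R_N(t)$ with $\abs{R_N(t)}\le C_N t^{(N+1+\operatorname{Re}\alpha)/\mu-1}$ on a neighbourhood of $0$; integrate each monomial using $\int_0^\tau t^{(s+\alpha)/\mu-1}\rme^{-\lambda t}\rmd t=\Gamma\!\bigl(\tfrac{s+\alpha}{\mu}\bigr)\lambda^{-(s+\alpha)/\mu}+O(\rme^{-\lambda\tau/2})$, the difference from the integral over $(0,\infty)$ being an incomplete-$\Gamma$ tail, hence exponentially small; bound the remainder contribution by $C_N\int_0^\infty t^{(N+1+\operatorname{Re}\alpha)/\mu-1}\rme^{-\lambda t}\rmd t=O(\lambda^{-(N+1+\operatorname{Re}\alpha)/\mu})$ near $0$, and by an exponentially small term away from $0$ using the continuity of $\psi$ there. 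Letting $N\to\infty$ assembles the stated asymptotic series, with the coefficients $c_s$ expressed through the $a_s$ (via the $\gamma_k$) and the $b_s$.
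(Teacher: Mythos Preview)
The paper does not supply its own proof of this theorem but simply cites Wong's textbook \cite{1989}, Chapter~2, Section~1 (``Laplace's method''). Your three-step scheme---localization away from the minimizer, the substitution $t=h(x)-h(a)$ to linearize the phase, and reduction to Watson's lemma---is precisely the classical argument presented there, and your identification of the series-inversion step for $g(t)$ as the main technical point, together with the computation of $c_0$, is correct.
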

\begin{proof}
The proof is given \cite{1989}, chapter 2 ``Classical Procedures'', section 1 
``Laplace's method''.
\end{proof}

The previous theorem can be applied to all the Laplace-type integrals that will 
be used in this paper. To be explicit, the most typical usage of this theorem 
will be for the case where $h : [a,c] \to \mathbb{R}$ is a twice continuously 
differentiable strictly convex function, which implies that $h''(x) > 0$ for all 
$x \in [a,c]$. If there exists $b \in (a,c)$ such that $h'(b) = 0$, then this 
point $b$ is the global minimum of $h$ and one can consider the function $h$ on 
the intervals $[a,b]$ and $[b,c]$. Note that the previous theorem holds 
precisely for $h$ on the interval $[b,c]$ since $h$ attains its global minimum 
at the left-end point $b$. For the interval $[a,b]$, one instead considers the 
mapping $\tilde{h}(x) := h(-x)$ defined on the interval $[-b, -a]$. One finds 
that the mapping $\tilde{h}$ obtains a global minimum at $-b$ and again the 
contents of the previous theorem hold.
 
The role of the mapping $\varphi : [a,c] \to \mathbb{R}$ does not change. In 
particular, if $\varphi$ admits a power series representation at any point on 
this interval, then it necessarily also admits power series representations when 
using one sided limits.
 
In the case of a strictly convex $h$, at the global minimum $b$, we have $\mu = 
2$ and $a_0 = \frac{1}{2} h''(b) \not = 0$. The mapping $\varphi$ is of more 
importance. In particular, suppose that $\varphi$ is a smooth function such that 
for some finite $i \in \mathbb{N}$ and for all $k < i$, we have $\varphi^{(k)} 
(b) = 0$ and $\varphi^{(i)} (b) \not = 0$.  In the notation of the previous 
theorem, this would correspond to the situation where $\alpha = i + 1$ and $b_0 
= \frac{1}{i!} \varphi^{(i)} (b)$. 
Applying the previous theorem, we then would have
\begin{align*}
\int_{a}^c dx \ \varphi(x) e^{- \lambda h(x)} \sim e^{- \lambda h(b)} \Gamma 
\left( \frac{i + 1}{2} \right) \frac{1}{i!} \varphi^{(i)} (b) \frac{1}{2 \left( 
\frac{1}{2} h''(b) \right)^{\frac{i+1}{2}}} \frac{1}{\lambda^{\frac{i + 1}{2}}} 
,
\end{align*}
and
\begin{align*}
\frac{\int_{a}^c dx \ \varphi(x) e^{- \lambda h(x)}}{\int_{a}^c dx \ e^{- 
\lambda h(x)}} \sim \frac{\Gamma \left( \frac{i + 1}{2} \right)}{\Gamma \left( 
\frac{1}{2} \right)} \frac{1}{i!} \varphi^{(i)} (b) \frac{1}{ \left( \frac{1}{2} 
h''(b) \right)^{\frac{i}{2}}} \frac{1}{\lambda^{\frac{i}{2}}} .
\end{align*}

The primary message from this is that the order of the first non-zero derivative 
of $\varphi$ determines the rate of vanishing of such Laplace-type integrals, in 
particular, in this case we would have
\begin{align*}
\frac{\int_{a}^c dx \ \varphi(x) e^{- \lambda h(x)}}{\int_{a}^c dx \ e^{- 
\lambda h(x)}} = O ( \lambda^{- \frac{i}{2}}) \,, \quad \lambda\to \infty.
\end{align*}
Furthermore, there will be some cases in which $h'(x) \not = 0$ for any $x \in 
(a,c)$. In such a situation, it will also typically be so that $h'(x) < 0$ for 
all $x \in (a,c)$, this implies that $h$ is minimized at the right end point 
$c$, and, by considering the mapping $\tilde{h}(x) = h(-x)$ again, we see that 
the mapping $\tilde{h}$ is now minimized at its left end point $-c$ and thus the 
theorems hold again. In such a situation, we have $\mu = 1$ instead of $\mu = 
2$.

\end{document}